\newcommand{\beq}{\begin{eqnarray*}}
\newcommand{\eeq}{\end{eqnarray*}}
\renewcommand{\theequation}{\thesection.\arabic{equation}}
\def\eqnarray{%
\stepcounter{equation}%
\let\@currentlabel=\theequation
\global\@eqnswtrue
\global\@eqcnt\z@
\tabskip\@centering
\let\\=\@eqncr
$$\halign to \displaywidth\bgroup\@eqnsel\hskip\@centering
$\displaystyle\tabskip\z@{##}$&\global\@eqcnt\@ne
\hfil$\displaystyle{{}##{}}$\hfil
&\global\@eqcnt\tw@$\displaystyle\tabskip\z@{##}$\hfil
\tabskip\@centering&\llap{##}\tabskip\z@\cr}
\def\bbbz{{\mathchoice {\hbox{$\sf\textstyle Z\kern-0.4em Z$}}
{\hbox{$\sf\textstyle Z\kern-0.4em Z$}}
{\hbox{$\sf\scriptstyle Z\kern-0.3em Z$}}
{\hbox{$\sf\scriptscriptstyle Z\kern-0.2em Z$}}}}
\def\bbbq{{\mathchoice {\setbox0=\hbox{$\displaystyle\rm Q$}\hbox{\raise
0.15\ht0\hbox to0pt{\kern0.4\wd0\vrule height0.8\ht0\hss}\box0}}
{\setbox0=\hbox{$\textstyle\rm Q$}\hbox{\raise
0.15\ht0\hbox to0pt{\kern0.4\wd0\vrule height0.8\ht0\hss}\box0}}
{\setbox0=\hbox{$\scriptstyle\rm Q$}\hbox{\raise
0.15\ht0\hbox to0pt{\kern0.4\wd0\vrule height0.7\ht0\hss}\box0}}
{\setbox0=\hbox{$\scriptscriptstyle\rm Q$}\hbox{\raise
0.15\ht0\hbox to0pt{\kern0.4\wd0\vrule height0.7\ht0\hss}\box0}}}}
\def\bbbc{{\mathchoice {\setbox0=\hbox{$\displaystyle \rm C$}\hbox{\raise
0.06\ht0\hbox to0pt{\kern0.4\wd0\vrule height0.9\ht0\hss}\box0}}
{\setbox0=\hbox{$\textstyle\rm C$}\hbox{\raise
0.06\ht0\hbox to0pt{\kern0.4\wd0\vrule height0.9\ht0\hss}\box0}}
{\setbox0=\hbox{$\scriptstyle\rm C$}\hbox{\raise
0.06\ht0\hbox to0pt{\kern0.4\wd0\vrule height0.8\ht0\hss}\box0}}
{\setbox0=\hbox{$\scriptscriptstyle\rm C$}\hbox{\raise
0.06\ht0\hbox to0pt{\kern0.4\wd0\vrule height0.8\ht0\hss}\box0}}}}
  \renewcommand{\theequation}{%
 \thesection.\arabic{equation}}
\newtheorem{theorem}{Theorem}[section]
\newtheorem{lemma}[theorem]{Lemma}
\newtheorem{corollary}[theorem]{Corollary}
\newtheorem{proposition}[theorem]{Proposition}
\newtheorem{remark}{Remark}[section]
\newtheorem{definition}{Definition}[section]
\newsavebox{\toy}
\savebox{\toy}{\framebox[0.65em]{\rule{0cm}{1ex}}}
\newcommand{\QED}{\usebox{\toy}}
\def\nlni{\par\ifvmode\removelastskip\fi\vskip\baselineskip\noindent}
\newenvironment{proof}{\nlni\begingroup\it Proof.\rm}{
\endgroup\vskip\baselineskip}
\begin{document}

\title{
Level statistics of one-dimensional 
Schr\"odinger operators with random decaying potential
}
\author{
Shinichi Kotani
\thanks{
Kwanseigakuin University, 
Sanda 669-1337, Japan.
e-mail : kotani@kwansei.ac.jp}
\and
Fumihiko Nakano
\thanks{
Department of Mathematics,
Gakushuin University,
1-5-1, Mejiro, Toshima-ku, Tokyo, 171-8588, Japan.
e-mail : 
fumihiko@math.gakushuin.ac.jp}
}
\maketitle
\begin{abstract}
We study 
the level statistics of one-dimensional Schr\"odinger operator 
with random potential decaying like 
$x^{-\alpha}$ 
at infinity.
We 
consider the point process 
$\xi_L$
consisting of the rescaled eigenvalues and show that : 
(i)(ac spectrum case)
for $\alpha > \frac 12$, 
$\xi_L$
converges to a clock process, and the fluctuation of the eigenvalue spacing converges to Gaussian.
(ii)(critical case)
for $\alpha = \frac 12$, 
$\xi_L$
converges to the limit of the circular $\beta$-ensemble. 
\end{abstract}
\section{Introduction}
\subsection{Background}
In this paper, 
we study the following Schr\"odinger operator
\beq
H &:=& - \frac {d^2}{dt^2} + a(t) F(X_t)
\quad
\mbox{on } 
L^2({\bf R})
\eeq
where
$a \in C^{\infty}$
is real valued, 
$a(-t) = a(t)$, 
non-increasing for 
$t \ge 0$, 
and satisfies
\[
C_1 t^{-\alpha} \le a(t) \le C_2 t^{-\alpha}, 
\quad
t \ge 1
\]
for some positive constants 
$C_1, C_2$
and 
$\alpha > 0$. 
$F$
is a real-valued, smooth, and non-constant function on a compact Riemannian manifold
$M$
such that
\[
\langle F \rangle := 
\int_M F(x) dx = 0.
\]
$\{ X_t \}$
is a Brownian motion on 
$M$.
Since 
the potential
$a(t) F(X_t)$
is 
$-\frac {d^2}{dt^2}$-compact, we have 
$\sigma_{ess}(H) = [0, \infty)$.
Kotani-Ushiroya\cite{KU} proved that
the spectrum of $H$
in 
$[0, \infty)$
is : 

(i)
for
$\alpha < \frac 12$ : 
pure point with exponentially decaying eigenfunctions, 

(ii)
for
$\alpha = \frac 12$ : 
pure point on
$[0, E_c]$
and purely singular continuous on 
$[E_c, \infty)$
with some explicitly computable
$E_c$, and

(iii)
for
$\alpha > \frac 12$ : 
purely absolutely continuous.

In this paper 
we study the level statistics of this operator.
For that purpose, let 
$H_L := H |_{[0, L]}$
be the local Hamiltonian with Dirichlet boundary condition and let 
$\{ E_n (L) \}_{n=1}^{\infty}$
be its eigenvalues in increasing order.
Let 
$n(L) \in {\bf N}$
be s.t.
$\{ E_n(L) \}_{n \ge n(L)}$
coincides with the set of positive eigenvalues of 
$H_L$. 
We take the reference energy 
$E_0 > 0$
arbitrarily and consider the point process
\[
\xi_L := 
\sum_{n \ge n(L)} 
\delta_{ L ( \sqrt{E_n(L)} - \sqrt{E_0} ) }
\]
in order to study the local fluctuation of eigenvalues near 
$E_0$. 
Our aim is to identify the limit of 
$\xi_L$
as 
$L \to \infty$.
Here 
we consider the scaling of 
$\sqrt{E_n(L)}$'s
instead of 
$E_n(L)$'s, 
which corresponds to the unfolding with respect to the density of states.
This problem 
was first studied by Molchanov\cite{Mol}.
He proved that, when 
$a(t)$
is constant, 
$\xi_L$
converges to the Poisson process. 
It was extended to the multidimensional Anderson model by Minami
\cite{Minami}.
Killip-Stoiciu \cite{KS} 
studied the CMV matrices whose matrix elements decay like 
$n^{-\alpha}$. 
They showed that $\xi_L$ converges to 

(i) $\alpha > \frac 12$ : 
the clock process, 

(ii) $\alpha = \frac 12$ : 
the limit of the circular $\beta$-ensemble, 

(iii) $0 < \alpha < \frac 12$ : 
the Poisson process.\\
Krichevski-Valk\'o-Vir\'ag\cite{KVV} studied the one-dimensional discrete Schr\"odinger operator 
with the random potential decaying like 
$n^{-1/2}$, 
and proved that 
$\xi_L$
converges to the $\mbox{Sine}_{\beta}$-process, which is the limit of the Gaussian $\beta$-ensemble found by Valk\'o-Vir\'ag\cite{VV}. 
The aim of our work 
is to do the analogue of that by Killip-Stoiciu\cite{KS} for the one-dimensional Schr\"odinger operator in the continuum. 
In subsection 1.2
(resp. subsection 1.3), 
we state our results for ac-case : $\alpha > \frac 12$
(resp. critical-case : $\alpha = \frac 12$).
We have not obtained results for pp-case : $\alpha < \frac 12$.

%
\subsection{AC-case}
\begin{definition}
Let 
$\mu$ 
be a probability measure on 
$[0, \pi)$.
We say that 
$\xi$
is the clock process with spacing 
$\pi$ 
with respect to  
$\mu$
if and only if 
\[
{\bf E}[ e^{- \xi(f)} ] 
=
\int_0^{\pi} d \mu(\phi)
\exp \left(
- \sum_{n \in {\bf Z}}
f(n \pi - \phi)
\right)
\]
where
$f \in C_c ({\bf R})$
and 
$\xi(f) := \int_{\bf R} f d \xi$. 
\end{definition}
We set 
\[
[x]_{\pi{\bf Z}}
:= \max \{ y \in \pi {\bf Z} \, | \, y \le x \}, 
\quad
(x)_{\pi {\bf Z}} 
:= x - [x]_{\pi{\bf Z}}.
\]
We study
the limit of 
$\xi_L$
under the following assumption

{\it
{\bf (A)}

(1)
$\alpha > \frac 12$, 

(2)
A sequence 
$\{ L_j \}_{j=1}^{\infty}$
satisfies
$\lim_{j \to \infty} L_j = \infty$, 
and 
\[
( \sqrt{E_0} L_j )_{\pi {\bf Z}}
=
\beta + o(1), 
\quad
j \to \infty.
\]

for some 
$\beta \in [0, \pi)$.
}

\noindent
Condition 
A(2)
is set to guarantee the convergence of 
$\xi_L$ 
to a point process. 
If 
$a \equiv 0$
for instance, 
A(2) 
is indeed necessary. 

\begin{theorem}
\label{clock}
Assume (A).
Then 
$\xi_{L_j}$
converges in distribution to the clock process with spacing 
$\pi$
with respect to a probability measure
$\mu_{\beta}$ 
on 
$[0, \pi)$.
\end{theorem}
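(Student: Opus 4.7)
The plan is to reduce the Dirichlet eigenvalue problem to the behaviour of a single Prüfer phase and then exploit the fact that, for $\alpha > 1/2$, this phase has an almost sure asymptotic limit. Introduce $\theta(t;E)$ by $\psi(t) = r(t) \sin \theta(t;E)$, $\psi'(t) = \sqrt{E}\, r(t) \cos \theta(t;E)$, with $\theta(0;E) = 0$, where $\psi$ solves $H\psi = E \psi$. A routine computation gives
$$
\theta'(t;E) = \sqrt{E} - \frac{a(t) F(X_t)}{\sqrt{E}} \sin^2 \theta(t;E),
$$
and the Dirichlet eigenvalues of $H_L$ are precisely those $E$ for which $\theta(L;E) \in \pi {\bf Z}$.

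Set $\theta(t;E) = \sqrt{E}\, t + \phi(t;E)$ and use $\sin^2 = (1-\cos)/2$, so that
$$
\phi'(t;E) = -\frac{a(t) F(X_t)}{2\sqrt{E}} \bigl(1 - \cos(2\sqrt{E}\, t + 2\phi(t;E))\bigr).
$$
The central analytic input is that, for $\alpha > 1/2$, $\phi(t;E)$ converges almost surely to a finite random limit $\Phi(E)$ as $t \to \infty$, locally uniformly in $E$ on compact subsets of $(0,\infty)$. Pointwise convergence follows from an integration-by-parts argument exploiting the ergodicity of $\{X_t\}$: since $\int_0^T F(X_s)\, ds = O(\sqrt{T})$ a.s. by a CLT for additive functionals of Brownian motion on $M$, integration by parts gives a.s. convergence of the integrals $\int_0^\infty a(t) F(X_t) \bigl(1 - \cos(\cdots)\bigr)\, dt$ whenever $\alpha > 1/2$. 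Joint continuity in $E$ is then obtained by the same mechanism applied to the linearised ODE for $\partial_E \phi$. This is a strengthening of the Prüfer phase analysis underlying the Kotani--Ushiroya proof of absolute continuity.

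Granted these ingredients, the theorem follows quickly. For any eigenvalue $E_n(L_j)$ with $\lambda_n := L_j(\sqrt{E_n(L_j)} - \sqrt{E_0})$ bounded, the eigenvalue condition $\theta(L_j;E_n(L_j)) \in \pi {\bf Z}$ reads
$$
\lambda_n + L_j \sqrt{E_0} + \phi(L_j; E_n(L_j)) \in \pi {\bf Z}.
$$
Since $E_n(L_j) \to E_0$ at rate $1/L_j$, local uniform convergence of $\phi$ yields $\phi(L_j; E_n(L_j)) = \Phi(E_0) + o(1)$ almost surely. Writing $L_j \sqrt{E_0} = m_j \pi + \beta + o(1)$ with $m_j \in {\bf Z}$ as guaranteed by (A)(2), one obtains $\lambda_n = (n - m_j)\pi - \beta - \Phi(E_0) + o(1)$, so that the support of $\xi_{L_j}$ converges almost surely to the arithmetic progression $\{ k\pi - (\beta + \Phi(E_0)) : k \in {\bf Z} \}$. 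Taking $\mu_\beta$ to be the law of $(\beta + \Phi(E_0)) \bmod \pi$ identifies the limit with the clock process in the stated definition. The main obstacle is the local uniform convergence asserted in the second paragraph: pointwise a.s.\ convergence of $\phi(L;E_0)$ alone would not suffice, since the eigenvalues $E_n(L_j)$ are themselves random and move with $L_j$ on the natural scale $1/L_j$, so one must control $\phi(L;E)$ jointly in $(L,E)$ near $E_0$.
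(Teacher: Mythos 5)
Your outline follows essentially the same route as the paper: reduce to Pr\"ufer phase asymptotics, use a.s.\ convergence of the modified phase $\tilde{\theta}_t(\kappa)$ (compact uniformly in $\kappa$) established in Kotani--Ushiroya, and identify $\mu_\beta$ as the law of $\big(\beta + \tilde{\theta}_\infty(\sqrt{E_0})\big)$ modulo $\pi$. The paper packages this through the relative Pr\"ufer phase $\Psi_L(x)=\theta_L(\sqrt{E_0}+x/L)-\theta_L(\sqrt{E_0})$, Proposition \ref{Psi} ($\Psi_L(x)\to x$ a.s.), the Laplace-functional representation of Lemma \ref{Laplace}, and the monotone-inverse Lemma \ref{inverse}; your direct tracking of the atoms is the same argument in a slightly less careful wrapper. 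Two technical points in your sketch are softer than the paper's treatment, though neither affects the plan. First, the a.s.\ convergence of $\tilde{\theta}_t(\kappa)$ is not driven by an a.s.\ bound $\int_0^T F(X_s)\,ds = O(\sqrt T)$ --- at the a.s.\ level the rate carries an extra $\sqrt{\log\log T}$ by the LIL --- but by a martingale decomposition: after integration by parts against the generator, $\int_0^t a(s)F(X_s)e^{i\beta\theta_s}\,ds$ is a bounded term plus a stochastic integral $\int_0^t a(s)e^{i\beta\theta_s}\,dM_s$, which converges a.s.\ because $\int a^2<\infty$ when $\alpha>1/2$. Second, for the joint control in $E$ you propose to linearise the phase ODE, but $\partial_\kappa\theta_t\sim t$ and the integrand for $\partial_\kappa\tilde{\theta}_t$ involves $r_s^2/r_t^2$ (cf.\ (\ref{theta-kappa})), which is not so clean to handle directly; the paper instead invokes the $\kappa$-H\"older estimate for $A_t(\kappa,\beta)=\int_0^t a(s)F(X_s)e^{i\beta\theta_s(\kappa)}\,ds$, uniform in $t$, from Kotani--Ushiroya. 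With these repairs your argument is the paper's.
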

\begin{remark}
Let 
$x_t$
be the solution to the eigenvalue equation : 
$H_L x_t = \kappa^2 x_t$ ($\kappa > 0$).
Let 
$\tilde{\theta}(\kappa)$
be the one defined in (\ref{thetatilde}). 
Then 
$\tilde{ \theta }_t(\kappa)$
has a limit as 
$t$
goes to infinity\cite{KU} :  
$\lim_{t\to \infty}\tilde{\theta}_t(\kappa) = \tilde{\theta}_{\infty}(\kappa)$,
a.s. ; 
$\mu_{\beta}$
is the distribution of the random variable
$(\beta + \tilde{\theta}_{\infty}(\sqrt{E_0}) )_{\pi {\bf Z}}$.
In some special cases, 
we can show that 
$(\tilde{\theta}_{\infty}(\sqrt{E_0}) )_{\pi {\bf Z}}$
is not uniformly distributed on 
$[0, \pi)$
for large
$E_0$, 
implying that 
$\mu_{\beta}$
really depends on $\beta$. 
\end{remark}
\begin{remark}
We can consider point processes with respect to two reference energies 
$E_0, E'_0 (E_0 \ne E'_0)$ 
simultaneously : suppose a sequence
$\{ L_j \}_{j = 1}^{\infty}$
satisfies
$
( \sqrt{E_0} L_j )_{\pi {\bf Z}}
=
\beta + o(1)$, 
$
( \sqrt{E'_0} L_j )_{\pi {\bf Z}}
=
\beta' + o(1)$, 
$j \to \infty$
for some 
$\beta, \beta' \in [0, \pi)$. 
We set 
$\xi_L := \sum_{n\ge n(L)}
\delta_{L (\sqrt{E_n(L)} - \sqrt{E_0})}$, 
$\xi'_L := \sum_{n \ge n(L)}
\delta_{L (\sqrt{E_n(L)} - \sqrt{E'_0})}$.
Then the joint distribution of 
$\xi_{L_j}, \xi'_{L_j}$
converges, for 
$f, g \in C_c({\bf R})$,
\beq
&&\lim_{j \to \infty}
{\bf E}\left[
\exp \left(
- \xi_{L_j}(f) - \xi_{L_j}(g)
\right)
\right]
\\
&&\qquad
=
\int_0^{\pi} 
d \mu(\phi, \phi')
\exp \left(
- \sum_{n \in {\bf Z}}
( f (n \pi - \phi) + g ( n \pi - \phi') )
\right)
\eeq
where
$\mu(\phi, \phi')$
is the joint distribution of 
$(\beta + \tilde{\theta}_{\infty}(\sqrt{E_0}) )_{\pi {\bf Z}}$
and 
$(\beta' + \tilde{\theta}_{\infty}(\sqrt{E'_0}) )_{\pi {\bf Z}}$.
We are unable to identify 
$\mu(\phi, \phi')$
but it may be possible that 
$\phi$ and  $\phi'$ 
are correlated. 
\end{remark}
\begin{remark}
Suppose we renumber the eigenvalues near the reference energy 
$E_0$ 
so that  
\[
\cdots < E'_{-2}(L) < E'_{-1}(L) < E_0 \le E'_0(L) < E'_1(L) < E'_2(L) < \cdots.
\]
Then an argument similar to the proof of 
Theorem 2.4 in \cite{K} proves the following fact : 
for any 
$n \in {\bf Z}$
we have
\begin{equation}
\lim_{L \to \infty} 
L (\sqrt{E'_{n+1}(L)} - \sqrt{E'_n (L)}) = \pi, 
\quad
a.s.
\label{strong}
\end{equation}
which is called the strong clock behavior \cite{ALS}. 
We note that the integrated density of states is equal to
$\sqrt{E}/\pi$. 
\end{remark}
We next study the finer structure of the eigenvalue spacing, under the following assumption. 

{\it
{\bf (B)}

(1)
$a(t) = t^{-\alpha} (1 + o(1))$, 
$t \to \infty$, 
$\frac 12 < \alpha < 1$, 

(2)
A sequence 
$\{ L_j \}_{j=1}^{\infty}$
satisfies 
$\lim_{j \to \infty}L_j = \infty$
and
\[
\sqrt{E_0} L_j 
=
m_j \pi + \beta + \epsilon_j, 
\quad
j \to \infty
\]

for some $\{ m_j \}_{j=1}^{\infty}(\subset{\bf N})$, 
$\beta \in [0, \pi)$
and 
$\{ \epsilon_j \}_{j=1}^{\infty}$
with 
$\lim_{j \to \infty} \epsilon_j = 0$. 
}

\noindent
Roughly speaking, 
$E_{m_j}(L_j)$
is the eigenvalue closest to 
$E_0$. 
%
In view of 
(\ref{strong}),   
we set 
\[
X_j(n) :=
\left\{
\left(
\sqrt{E_{m_j + n+1}(L_j)} - \sqrt{E_{m_j + n}(L_j)}
\right) L_j 
- \pi
\right\} L_j^{\alpha - \frac 12}, 
\quad
n \in {\bf Z}.
\]
\begin{theorem}
\label{second}
Assume (B). 
Then 
$\{ X_j(n) \}_{n \in {\bf Z}}$
converges in distribution to the Gaussian system with covariance
\beq
C(n, n')
&=&
\frac {C(E_0)}{8E_0}
Re
\int_0^1
s^{- 2 \alpha}
e^{2i (n - n') \pi s}
2(1 - \cos 2\pi s) ds,  
\quad
n, \; n' \in {\bf Z}, 
\eeq
where
$C(E)
:=
\int_M 
\left| 
\nabla (L + 2i \sqrt{E})^{-1}F 
\right|^2 dx$ 
and 
$L$
is the generator of 
$(X_t)$. 
\end{theorem}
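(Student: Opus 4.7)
The plan is to express the rescaled spacing $X_j(n)$ as the real part of a complex oscillatory integral against the Brownian motion $X_s$ on $M$, convert that integral into a stochastic integral via the resolvent identity $(L+2i\sqrt{E_0})u=F$, and then apply a martingale central limit theorem to identify the Gaussian limit together with the stated covariance.

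Using the Pr\"ufer representation $\theta_t(\kappa)=\kappa t+\tilde\theta_t(\kappa)$ from Remark 1.2, the Dirichlet eigenvalue condition reads $\kappa_n L_j+\tilde\theta_{L_j}(\kappa_n)\in\pi{\bf Z}+\mathrm{const}$ for $\kappa_n:=\sqrt{E_{m_j+n}(L_j)}$, so that subtracting consecutive indices yields
\[
X_j(n)=-L_j^{\alpha-1/2}\bigl[\tilde\theta_{L_j}(\kappa_{n+1})-\tilde\theta_{L_j}(\kappa_n)\bigr].
\]
The Pr\"ufer ODE $\tilde\theta_t'=-(a(t)F(X_t)/(2\kappa))(1-\cos 2\theta_t)$ together with $\kappa_n=\sqrt{E_0}+O(1/L_j)$ gives, up to $O(1/L_j)$ corrections that are negligible after multiplying by $L_j^{\alpha-1/2}$,
\[
\tilde\theta_{L_j}(\kappa_{n+1})-\tilde\theta_{L_j}(\kappa_n)\approx\frac{1}{2\sqrt{E_0}}\int_0^{L_j}a(s)F(X_s)\bigl[\cos 2\theta_s(\kappa_{n+1})-\cos 2\theta_s(\kappa_n)\bigr]ds.
\]
Passing to complex exponentials and invoking $\theta_s(\kappa_n)\approx\sqrt{E_0}\,s+n\pi s/L_j+\tilde\theta_\infty(\sqrt{E_0})+o(1)$ (from \cite{KU} together with the eigenvalue identity for $\kappa_n$), the bracket factors as $e^{2i\tilde\theta_\infty(\sqrt{E_0})}e^{2i\sqrt{E_0}\,s}e^{2i\gamma s/L_j}g_n(s/L_j)$, where $g_n(u):=e^{2in\pi u}(e^{2i\pi u}-1)$ and $\gamma$ is a bounded random phase absorbing the leftover shift $\kappa_n-\sqrt{E_0}-n\pi/L_j$.

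I then solve the Poisson equation $(L+2i\sqrt{E_0})u=F$ on $M$, well-posed since $L+2i\sqrt{E_0}$ is invertible on $C^\infty(M)$. It\^o's formula for $u(X_s)e^{2i\sqrt{E_0}\,s}$ gives $F(X_s)e^{2i\sqrt{E_0}\,s}\,ds=d(u(X_s)e^{2i\sqrt{E_0}\,s})-e^{2i\sqrt{E_0}\,s}\nabla u(X_s)\cdot dB_s$, and integration by parts against the slow envelope $a(s)e^{2i\gamma s/L_j}g_n(s/L_j)$ yields boundary and drift residues of size $o(L_j^{1/2-\alpha})$ (using $a'=O(s^{-\alpha-1})$). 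Consequently $X_j(n)\approx c_j\,\mathrm{Re}\bigl(e^{2i\tilde\theta_\infty(\sqrt{E_0})}M_j^{(n)}\bigr)$ with $c_j=L_j^{\alpha-1/2}/(2\sqrt{E_0})$ and $M_j^{(n)}:=\int_0^{L_j}a(s)e^{2i\gamma s/L_j}g_n(s/L_j)e^{2i\sqrt{E_0}\,s}\nabla u(X_s)\cdot dB_s$. The $e^{2i\gamma s/L_j}$ factors cancel under conjugation, so
\[
{\bf E}\bigl[M_j^{(n)}\overline{M_j^{(n')}}\bigr]=\int_0^{L_j}a(s)^2e^{2i(n-n')\pi s/L_j}\cdot 2(1-\cos 2\pi s/L_j)\,{\bf E}[|\nabla u(X_s)|^2]\,ds ;
\]
ergodicity of $X_s$ on $M$ replaces ${\bf E}[|\nabla u(X_s)|^2]$ by $C(E_0)$, and $s=uL_j$ with $a(uL_j)^2\sim(uL_j)^{-2\alpha}$ yields $L_j^{2\alpha-1}{\bf E}[M_j^{(n)}\overline{M_j^{(n')}}]\to C(E_0)\int_0^1 u^{-2\alpha}e^{2i(n-n')\pi u}\cdot 2(1-\cos 2\pi u)\,du$. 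The pairing ${\bf E}[M_j^{(n)}M_j^{(n')}]$ carries an extra $e^{4i\sqrt{E_0}\,s}$ and is $o(L_j^{1-2\alpha})$ by one further integration by parts, so $M^\infty(n)$ is circularly symmetric. The identity $\mathrm{Cov}(\mathrm{Re}(e^{i\phi}Z),\mathrm{Re}(e^{i\phi}W))=\frac{1}{2}\mathrm{Re}\,{\bf E}[Z\bar W]$ (valid when ${\bf E}[ZW]=0$, hence $\phi$-independent) combined with the prefactor $c_j^2=L_j^{2\alpha-1}/(4E_0)$ produces exactly the claimed $C(n,n')=(C(E_0)/(8E_0))\mathrm{Re}\int_0^1u^{-2\alpha}e^{2i(n-n')\pi u}\cdot 2(1-\cos 2\pi u)\,du$; joint Gaussianity for $n\in{\bf Z}$ follows from a vector-valued martingale CLT applied to finite-dimensional marginals together with Cram\'er--Wold.

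The main obstacle is the non-adaptedness of the $\mathcal{F}_\infty$-measurable phase $\gamma$ sitting inside the integrand of $M_j^{(n)}$, since $\gamma$ involves $\tilde\theta_\infty(\sqrt{E_0})$. I would handle this by replacing $\gamma$ with its $\mathcal{F}_{L_j}$-measurable surrogate $\gamma_j:=-\tilde\theta_{L_j}(\sqrt{E_0})+\mathrm{const}$; the $L^2$-distance $\|\gamma-\gamma_j\|_2=O(L_j^{1/2-\alpha})=o(1)$ follows from the tail $\int_{L_j}^\infty a(t)^2\,dt=O(L_j^{1-2\alpha})$ of the martingale representation of $\tilde\theta$, and a dyadic decomposition of $[0,L_j]$ on which $\gamma_j$ is treated as locally constant restores adaptedness with negligible error. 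The remaining technicalities---quantitative mixing of $X_s$ on $M$, uniform control of $\tilde\theta_s\to\tilde\theta_\infty$ on $[0,L_j]$, and the Lyapunov condition for the martingale CLT---are routine once the representation is in hand. Note that the hypothesis $\alpha<1$ is precisely what ensures integrability of $u^{-2\alpha}(1-\cos 2\pi u)\sim u^{2-2\alpha}$ at $u=0$ in the covariance.
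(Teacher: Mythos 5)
Your overall plan---reduce $X_j(n)$ to a difference $\tilde\theta_{L_j}(\kappa_{n+1})-\tilde\theta_{L_j}(\kappa_n)$, linearize in $a(s)F(X_s)$, remove the oscillatory factor via the resolvent equation $(L+2i\sqrt{E_0})u=F$ and It\^o's formula, and then apply a martingale CLT---is the same circle of ideas the paper uses (the paper cites Lemma~3.1 of~\cite{K} for the fixed-parameter CLT rather than reproving it, but the mechanism is identical), and your covariance bookkeeping, including the prefactor $C(E_0)/(8E_0)$ and the integrand $u^{-2\alpha}e^{2i(n-n')\pi u}\cdot 2(1-\cos 2\pi u)$, is correct.

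The genuine gap is the treatment of the random phase carried by $\tilde{\theta}_\infty(\sqrt{E_0})$. Your final step invokes
$\mathrm{Cov}\bigl(\mathrm{Re}(e^{i\phi}Z),\mathrm{Re}(e^{i\phi}W)\bigr)=\tfrac12\mathrm{Re}\,{\bf E}[Z\bar W]$
with $\phi=2\tilde{\theta}_\infty(\sqrt{E_0})$, but that identity is computed by conditioning on $\phi$ and therefore requires $\phi$ to be \emph{independent} of the limiting Gaussian pair $(Z,W)$ (having ${\bf E}[ZW]=0$ only guarantees that the conditional covariance does not depend on the value of $\phi$; it does not let you treat $\phi$ as external if $\phi$ is measurably tied to $(Z,W)$). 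You recognize the adaptedness problem, but the proposed fix---replacing $\gamma$ by the $\mathcal{F}_{L_j}$-measurable $\gamma_j=-\tilde\theta_{L_j}(\sqrt{E_0})$ and a ``dyadic decomposition'' on which $\gamma_j$ is ``treated as locally constant''---does not resolve it: $\gamma_j$ is still not $\mathcal{F}_s$-measurable for $s<L_j$, so $M_j^{(n)}$ with $\gamma_j$ inserted is still not a martingale, and more importantly $\gamma_j$ is built from the same driving noise as $M_j^{(n)}$, so the needed independence is not addressed at all by the $L^2$-closeness of $\gamma$ and $\gamma_j$. This independence is precisely the content of the paper's Section~4.2: one conditions at a fixed finite time $T$, uses the Markov property of $X$ to show that the post-$T$ martingale contribution converges to the Gaussian limit in a way that is insensitive to $X_T$ (hence to $\tilde\theta_T$), and controls the pre-$T$ contribution as $O(n^{1/2-\alpha})$. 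Without this lemma, or an equivalent argument, the step from the fixed-parameter CLT to the statement with random $c_n=n\pi-\beta-\tilde\theta_\infty(\kappa)$ is unjustified, and your covariance identity is not licensed.
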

\begin{remark}
Lemma 2.1 
in \cite{KU}
and
Lemma \ref{small} 
imply that 
\[
\sqrt{ 
E_{m_j}(L_j)
}
=
\sqrt{E_0}
-
\frac {
\beta + \tilde{\theta}_{\infty}(\sqrt{E_0}) 
}
{L_j}
+
Y_j
\]
where
$Y_j
=
O(L_j^{-\alpha - \frac 12 + \epsilon }
)
+
O(\epsilon_j L_j^{-1})$, 
a.s.
for any 
$\epsilon > 0$. 
Furthermore by definition of 
$\{ X_j (n) \}$ 
we have
\beq
\sqrt{ 
E_{m_j+n}(L_j)
}
&=&
\left\{
\begin{array}{@{\,}ll}
\sqrt{
E_{m_j}(L_j)}
+
\frac {n \pi}{L_j}
+
\frac {1}{L_j^{\alpha+ \frac 12}}
\sum_{l=0}^{n-1} X_j(l)
&
(n \ge 1) \\
\sqrt{
E_{m_j}(L_j)}
+
\frac {n \pi}{L_j}
-
\frac {1}{L_j^{\alpha+ \frac 12}}
\sum_{l=n}^{-1} X_j(l)
&
(n \le -1).\\
\end{array}
\right.
\eeq
Theorem \ref{second}
thus describes the behavior of eigenvalues near 
$E_{m_j}(L_j)$
in the second order.
\end{remark}
\begin{remark}
Suppose we consider two reference energies 
$E_0, E'_0 (E_0 \ne E'_0)$ 
simultaneously and suppose a sequence
$\{ L_j \}_{j = 1}^{\infty}$
satisfies
$\lim_{j \to \infty} L_j = \infty$
and
$\sqrt{E_0} L_j 
=
m_j \pi+ \beta + o(1)$, 
$\sqrt{E'_0} L_j 
=
m'_j \pi+ \beta' + o(1)$, 
$j \to \infty$
for some 
$m_j, m'_j \in {\bf N}$,
and 
$\beta, \beta' \in [0, \pi)$. 
Then 
$\{ X_j(n) \}_n$
and 
$\{X'_j(n) \}_{n}$
converge jointly to the mutually independent Gaussian systems. 
\end{remark}
\subsection{Critical Case}
We set the following assumption. 

{\bf (C)}
$\quad a(t) = t^{- \frac 12} (1 + o(1)), 
\quad
t \to \infty$. 
\begin{theorem}
\label{sc-limit}
Assume (C).
Then
\begin{equation}
\lim_{L \to \infty}
{\bf E}[ e^{- \xi_L(f)} ] 
=
{\bf E}\left[
\int_0^{2 \pi} \frac {d \theta}{2 \pi}
\exp \left(
- \sum_{n \in {\bf Z}}
f ( \Psi_1^{-1} (2 n \pi + \theta) )
\right)
\right]
\label{Laplacebeta}
\end{equation}
where 
$\{ \Psi_t (\cdot) \}_{t \ge 0}$
is the strictly-increasing function valued process such that
for any 
$c_1, \cdots, c_m \in {\bf R}$, 
$\{ \Psi_t (c_j) \}_{j=1}^m$
is the unique solution to the following SDE : 
\begin{eqnarray}
d \Psi_t(c_j) &=& 2c_j dt + 
D(E_0)
Re 
\left\{
(e^{i \Psi_t(c_j)} - 1) 
\frac {d Z_t}{\sqrt{t}}
\right\}
\label{SDEbeta}
\\
\Psi_0(c_j) &=& 0, 
\quad
j=1, 2, \cdots, m
\nonumber
\end{eqnarray}
where
$C(E_0)
:=
\int_M 
\left| 
\nabla (L + 2i \sqrt{E_0})^{-1}F 
\right|^2 dx$, 
$D(E_0) :=\sqrt{
\frac {C(E_0)}{2 E_0}
}$
and
$Z_t$
is a complex Browninan motion.
\end{theorem}
\begin{definition}
For 
$\beta > 0$, 
the circular $\beta$-ensemble with $n$-points is given by
\[
{\bf E}_n^{\beta}[G]
:=
\frac {1}{Z_{n, \beta}}
\int_{-\pi}^{\pi} \frac {d \theta_1}{2 \pi}
\cdots
\int_{-\pi}^{\pi} \frac {d \theta_n}{2 \pi}
G(\theta_1, \cdots, \theta_n)
| \triangle (e^{i \theta_1}, \cdots, e^{i \theta_n}) |^{\beta}
\]
where 
$Z_{n, \beta}$
is the normalization constant, 
$G \in C({\bf T}^n)$
is bounded and 
$\triangle$
is the Vandermonde determinant.
The limit 
$\xi_{\beta}$
of the circular $\beta$-ensemble is defined by
\[
{\bf E}[ e^{- \xi_{\beta}(f)} ]
=
\lim_{n \to \infty}
{\bf E}_n^{\beta} 
\left[
\exp 
\left(
- \sum_{j=1}^n f(n \theta_j)
\right)
\right], 
\quad
f\in C^+_c({\bf R})
\]
\end{definition}
Killip-Stoiciu \cite{KS}
proved that the limit 
$\xi_{\beta}$
exists and satisfies (\ref{Laplacebeta}), (\ref{SDEbeta}) 
where 
$D(E_0)$
is replaced by 
$\frac {2}{\sqrt{\beta}}$ 
and
$2 c_j$
is replaced by 
$c_j$. 
Therefore the limit of 
$\xi_L$
coincides with that of the circular $\beta$-emsemble modulo a scaling.

\begin{corollary}
\label{critical}
Assume (C).
Writing  
$\xi_{\beta} = \sum_n \delta_{\lambda_n}$,
let
$\xi'_{\beta} := \sum_n \delta_{\lambda_n/2}$.
Then 
$\xi_{L} \stackrel{d}{\to} \xi'_{\beta}$
with
$\beta = \beta (E_0) :=\frac {8 E_0}{C(E_0)}$. 
\end{corollary}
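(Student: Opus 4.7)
The plan is to deduce the corollary by matching the SDE description of the limit of $\xi_L$ given in Theorem \ref{sc-limit} with the SDE description of the limit $\xi_\beta$ of the circular $\beta$-ensemble established in \cite{KS}, and then reading off the parameter identification together with the factor $1/2$ rescaling.

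First I would quote from \cite{KS} that, up to sign conventions and a choice of driving complex Brownian motion, the points of $\xi_\beta=\sum_n\delta_{\lambda_n}$ admit a description through a $\lambda$-indexed family of Pr\"ufer-type phase processes $\{\Phi_t(\lambda)\}_{t\in[0,1]}$ satisfying an SDE of the form
\[
d\Phi_t(\lambda) = \lambda\,dt + \frac{2}{\sqrt{\beta}}\,{\rm Re}\Bigl\{(e^{i\Phi_t(\lambda)}-1)\,\frac{d\tilde{Z}_t}{\sqrt{t}}\Bigr\},\qquad \Phi_0(\lambda)=0,
\]
via the same Laplace functional representation as in Theorem \ref{sc-limit} with $\Phi$ in place of $\Psi$ and $\lambda$ in place of $c$.

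Second, I would place this side by side with the SDE for $\Psi_t(c)$ in Theorem \ref{sc-limit}. The two equations share the same structural form; matching the drift coefficients forces the substitution $\lambda=2c$, and matching the diffusion coefficients forces $D=2/\sqrt{\beta}$, so
\[
\beta = \frac{4}{D^2} = \frac{8E_0}{C(E_0)},
\]
which is exactly the value stated in the corollary. With this identification, the processes $\Phi_t(2c)$ and $\Psi_t(c)$ solve the same SDE with the same initial condition, and hence they coincide in law as $c$-indexed processes; in particular $\Psi_1^{-1}(2n\pi+\theta)=\lambda_n/2$ in joint distribution. Substituting into the formula of Theorem \ref{sc-limit} then identifies the limit of $\xi_L$ with $\xi'_\beta=\sum_n\delta_{\lambda_n/2}$.

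The main obstacle is to put the SDE from \cite{KS} into exactly the normalized form written above: the paper starts from CMV (discrete) Pr\"ufer phases with Verblunsky coefficients of size $n^{-1/2}$, and one has to carry out the appropriate time change together with careful bookkeeping of sign conventions and of the covariance of the driving complex Brownian motion in order to produce the $1/\sqrt{t}$ singularity on $[0,1]$ and the coefficient $2/\sqrt{\beta}$. Once both SDEs are in the same normalization, identifying $\beta$ and the factor $1/2$ rescaling of the points of $\xi_\beta$ is pure algebra.
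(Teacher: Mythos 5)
Your proposal is correct and matches the paper's own (sketchy) argument in Section~5: the paper likewise derives the corollary by observing that the Laplace functional in Theorem~\ref{sc-limit} coincides with the one from \cite{KS} except that the drift $c\,dt$ there is replaced by $2c\,dt$ here (forcing the rescaling $\lambda_n\mapsto\lambda_n/2$), and then reading off $\beta$ by equating $D=2/\sqrt{\beta}$, i.e.\ $\beta=4/D^2=8E_0/C(E_0)$. Your additional step of verifying $\Psi_1^{-1}(y)=\tfrac12\Phi_1^{-1}(y)$ from $\Psi_1(c)\stackrel{d}{=}\Phi_1(2c)$ just spells out the rescaling the paper states without proof.
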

\begin{remark}
The corresponding 
$\beta = \beta (E_0) =\frac {8 E_0}{C(E_0)}$
depends on the reference energy 
$E_0$, 
so that the spacing distribution may change if we look at the different region in the spectrum.
In fact we have 
$\beta(E) = \gamma(E)^{-1}$ 
where 
$\gamma(E)$
is the Lyapunov exponent defined in 
\cite{KU} 
such that the generalized eigenfunction 
$\psi_E$
of 
$H$
satisies 
$\psi_E \simeq |x|^{-\gamma(E)}$, 
$|x| \to \infty$. 
It then follows that 
$E < E_c$
(resp. $E > E_c$)
if and only if 
$\beta(E) < 2$
(resp. $\beta(E) > 2$)
and 
$\beta(E_c) = 2$ 
(Figure 1.).
Similar statement 
also holds for discrete Hamiltonian, the Jacobi matrix arising from the $\beta$-ensemble, and CMV matrices studied respectively by \cite{KVV, BFS, KS}. 
This is consistent 
with our general belief that in the 
point spectrum (resp. in the continuous spectrum)
the level repulsion is 
weak (resp. strong).

\vspace*{1em}

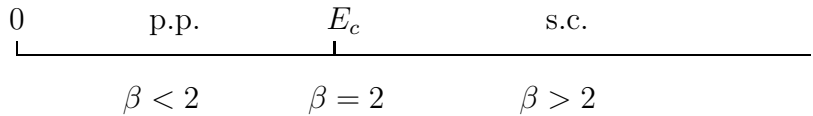
\begin{figure}[h]
\begin{center}
\begin{picture}(300, 30)
\put(0, 10){

\put(0,0){\line(1,0){300}}
\put(0,0){\line(0,1){5}}
\put(120,0){\line(0,1){5}}
\put(-3, 10){$0$}
\put(117, 10){$E_c$}
\put(50,10){p.p.}
\put(200, 10){s.c.}
\put(40,-20){$\beta<2$}
\put(190, -20){$\beta>2$}
\put(110, -20){$\beta=2$}

}
\end{picture}
\end{center}
\caption{
Spectrum and corresponding $\beta$.
\label{Figure 1.}}
\end{figure}
\vspace*{1em}

We note that, for
$\beta=2$, 
the circular 
$\beta$-ensemble with 
$n$-points 
coincides with the eigenvalue distribution of the unitary ensemble with the Haar measure on 
$U(n)$. 
In \cite{VV}, 
Valk\'o-Vir\'ag showed that 
Sine$_{\beta}$ process has a phase transition at 
$\beta = 2$. 
\end{remark}
\begin{remark}
In 
\cite{Nakano}, 
it is proved that 
$\xi_L$
also converges to the limit of the Gaussian 
$\beta$-ensemble for same 
$\beta$, 
thus proving the coincidence of limits of two 
$\beta$-ensembles. 
\end{remark}
\begin{remark}
If we consider two reference energies 
$E_0, E'_0 (E_0 \ne E'_0)$, 
then the corresponding point process 
$\xi_L, \xi'_L$
converges jointly to the independent 
$\xi_{\beta}, \xi'_{\beta'}$. 
\end{remark}
In later sections, 
we prove theorems mentioned above based on the argument in 
\cite{KS, KU, K} : 
The main ingredient of the proof is to study the limiting behavior of the relative Pr\"ufer phase
$\Psi_L$, 
by which the Laplace transform of 
$\xi_L$
is represented(Lemma \ref{Laplace}). 
The major difference
from the argument in \cite{KS} is that 
$\phi(E_0, L)$, 
which is defined in Section 2 
to be the projection to the torus 
of the Pr\"ufer phase associated to 
$E_0$, 
is not uniformly distributed and is not independent of 
$\Psi_L$. 
Hence 
our additional task is to show that, the joint limit of 
$(\Psi_L, \phi(E_0, L))$
is independent each other, 
the convergence of $\Psi_L$ is stronger, and that the limit of which is strictly monotone and continuous. 
In Section 2
we prepare some notations and basic facts. 
In Sections 3, 4, 
we consider the ac-case and prove
Theorems \ref{clock}, \ref{second}. 
In Sections 6-9, 
we consider the critical case and prove Theorem \ref{sc-limit}
which is outlined in Section 5.
In what follows, 
$C$
denotes general positive constant which is subject to change from line to line in each argument.
%

\section{Preliminaries}
Let 
$x_t$
be the solution to the equation 
$H_L x_t = \kappa^2 x_t$
$(\kappa > 0)$
which we set in the following form 
\begin{equation}
\left( \begin{array}{c}
x_t \\ x'_t /\kappa
\end{array} \right)
=
r_t
\left( \begin{array}{c}
\sin \theta_t \\ \cos \theta_t
\end{array} \right), 
\quad
\theta_0 = 0.
\label{Prufer}
\end{equation}
We define 
$\tilde{\theta}_t(\kappa)$
by 
\begin{equation}
\theta_t (\kappa) = \kappa t + \tilde{\theta}_t (\kappa).
\label{thetatilde}
\end{equation}
Then it follows that 
\begin{eqnarray}
r_t(\kappa)
&=&
\exp \left(
\frac {1}{2\kappa} Im 
\int_0^t 
a(s) F(X_s) e^{2i \theta_s(\kappa)} ds
\right)
\label{r-eq}
\\
\tilde{\theta}_t (\kappa)
&=&
\frac {1}{2 \kappa}
\int_0^t
Re (e^{2i \theta_s(\kappa)} -1 ) a(s) F(X_s)
\label{diff-eq}
\\
\frac {\partial \theta_t(\kappa)}{\partial \kappa}
&=&
\int_0^t 
\frac {r_s^2}{r_t^2} ds
+
\frac {1}{2 \kappa^2}
\int_0^t 
\frac {r_s^2}{r_t^2}
a(s) F(X_s) 
(1 - Re \; e^{2i \theta_s(\kappa)}) ds. 
\label{theta-kappa}
\end{eqnarray}
By using 
the behavior of solutions 
$x_t$ 
\cite{KU}, 
we can show the following fact : let  
$I \subset (0, \infty)$
be an interval. 
Then for sufficiently large 
$t > 0$ 
we have 
$\inf_{\kappa \in I}
\frac {\partial \theta_t(\kappa)}{\partial \kappa} >0$, 
so that 
$\theta_t(\kappa)$
is increasing as a function of 
$\kappa$
on 
$I$. 
Here and henceforth, for simplicity, we say 
$f$
is increasing if and only if 
$x<y$ 
implies 
$f(x) < f(y)$. 
Set
\begin{eqnarray}
m(E_0, L)\pi &:=& [\theta_L(\sqrt{E_0})]_{\pi {\bf Z}},
\quad
\phi(E_0, L) := (\theta_L (\sqrt{E_0}))_{\pi {\bf Z}}
\in [0, \pi).
\label{rep}
\end{eqnarray}
Moreover we define the relative Pr\"ufer phase
\[
\Phi_L (x)
=
\theta_L(\sqrt{E_0}+\frac {x}{L})
-
\theta_L(\sqrt{E_0})
\]
which is continuous and increasing.
As in \cite{KS}
we use the following representation of Laplace transform of 
$\xi_L$
in terms of 
$\Phi_L$. 
\begin{lemma}
\label{Laplace}
For
$f \in C^+_c({\bf R})$
we have
\[
{\bf E}[e^{- \xi_L(f)}]
=
{\bf E}\left[
\exp \left(
-\sum_{n =n(L) - m(E_0, L)}^{\infty}
f\Bigl( \Phi_L^{-1} (n \pi - \phi(E_0, L) \Bigr)
\right)
\right].
\]
\end{lemma}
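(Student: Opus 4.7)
The plan is to prove the pointwise (almost-sure in the Brownian path) identity
\[
\xi_L(f) \;=\; \sum_{n \ge n(L) - m(E_0, L)} f\bigl( \Psi_L^{-1}(n\pi - \phi(E_0, L)) \bigr)
\]
and then take expectations of $\exp(-\cdot)$ on both sides. No genuinely probabilistic input is required at this stage; for every fixed realization of $(X_t)$, the claim reduces to an algebraic manipulation of the Prüfer variables.

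The first step is to identify $\sqrt{E_n(L)}$ in terms of $\theta_L(\cdot)$. Since $x_L = r_L \sin\theta_L$ with $r_L > 0$, the Dirichlet condition at $L$ is equivalent to $\theta_L(\kappa) \in \pi \mathbb{Z}$. Whenever $\sin\theta_t = 0$, the Prüfer equation gives $\theta_t'(\kappa) = \kappa > 0$, so the phase crosses each multiple of $\pi$ transversally and monotonically as $t$ increases. Combined with $\theta_0 = 0$ and Sturm's oscillation theorem (the $n$-th eigenfunction of $H_L$ has exactly $n-1$ interior zeros in $(0, L)$), this forces
\[
\theta_L\bigl(\sqrt{E_n(L)}\bigr) \;=\; n\pi \quad \text{for all } n \ge n(L),
\]
the offset $n(L)$ coming from the (finitely many) non-positive eigenvalues that are not parametrized by a real $\kappa$.

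Next, set $x_n := L\bigl(\sqrt{E_n(L)} - \sqrt{E_0}\bigr)$ for $n \ge n(L)$. The definition of $\Psi_L$ combined with the decomposition (\ref{rep}) yields
\[
\Psi_L(x_n) \;=\; \theta_L\bigl(\sqrt{E_n(L)}\bigr) - \theta_L(\sqrt{E_0}) \;=\; \bigl(n - m(E_0, L)\bigr)\pi - \phi(E_0, L).
\]
Formula (\ref{theta-kappa}) shows $\partial_\kappa \theta_L(\kappa) > 0$, so $\Psi_L$ is strictly increasing in $x$ and hence invertible on its range, giving $x_n = \Psi_L^{-1}\bigl((n - m(E_0, L))\pi - \phi(E_0, L)\bigr)$. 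Substituting into $\xi_L(f) = \sum_{n \ge n(L)} f(x_n)$ and re-indexing by $k = n - m(E_0, L)$ produces the pointwise identity above; taking expectations completes the proof.

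The only mild obstacle is the bookkeeping in the first step: the labeling $\{E_n(L)\}$ orders \emph{all} eigenvalues (including non-positive ones), whereas the Prüfer representation with real $\kappa$ only sees the positive part of the spectrum. Reconciling these two counts --- and confirming that Sturm oscillation produces exactly the shift by $m(E_0,L)$ that appears as the lower index of the sum --- is the one point deserving care; the rest is algebraic.
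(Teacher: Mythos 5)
Your proof is correct and follows essentially the same approach as the paper's: both identify the atoms $x_n=L(\sqrt{E_n(L)}-\sqrt{E_0})$ via the relation $\theta_L(\sqrt{E_n(L)})=n\pi$, use (\ref{rep}) to rewrite $\Psi_L(x_n)=(n-m(E_0,L))\pi-\phi(E_0,L)$, and invoke strict monotonicity of $\Psi_L$ to invert. You simply spell out the Sturm-oscillation bookkeeping behind $\theta_L(\sqrt{E_n(L)})=n\pi$, which the paper states without elaboration.
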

%
%
\section{Convergence to a clock process}
In what follows, for simplicity, we set 
\[
\kappa := \sqrt{E_0}
\]
%
\subsection{The behavior of $\Psi_L$}
\begin{proposition}
\label{Psi}
If 
$\alpha > \frac 12$, 
following fact holds for a.s. : 
\[
\lim_{L \to \infty}
\Phi_L (x) = x
\]
pointwise and this holds compact uniformly with respect to
$\kappa$.
\end{proposition}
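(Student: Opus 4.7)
The plan is to reduce the statement to the a.s. convergence of $\tilde\theta_t(\kappa)$ established in \cite{KU}, upgraded to a locally uniform statement in $\kappa$. Starting from the definition $\Psi_L(x) = \theta_L(\kappa + x/L) - \theta_L(\kappa)$ and the decomposition $\theta_t(\kappa) = \kappa t + \tilde\theta_t(\kappa)$, a direct computation gives
\[
\Psi_L(x) - x \;=\; \tilde\theta_L\!\left(\kappa + \tfrac{x}{L}\right) - \tilde\theta_L(\kappa).
\]
So the proposition reduces to showing that this right-hand side tends to $0$ as $L\to\infty$, locally uniformly in $\kappa$ (and with $x$ in a fixed compact set).

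The main input is that, for $\alpha>\tfrac12$, the oscillatory integral in (\ref{diff-eq}) converges a.s.\ as $t\to\infty$, giving $\tilde\theta_t(\kappa)\to\tilde\theta_\infty(\kappa)$, a fact already noted in the remark after Theorem~\ref{clock} and proved in \cite{KU}. What I need in addition is that this convergence is \emph{locally uniform in $\kappa$}, from which continuity of $\kappa\mapsto\tilde\theta_\infty(\kappa)$ also follows. Once this is in hand, for $x$ in a compact set and $\kappa$ ranging over a compact interval $I$, one writes
\[
\tilde\theta_L(\kappa+\tfrac{x}{L})-\tilde\theta_L(\kappa)
=\bigl[\tilde\theta_L(\kappa+\tfrac{x}{L})-\tilde\theta_\infty(\kappa+\tfrac{x}{L})\bigr]
+\bigl[\tilde\theta_\infty(\kappa+\tfrac{x}{L})-\tilde\theta_\infty(\kappa)\bigr]
+\bigl[\tilde\theta_\infty(\kappa)-\tilde\theta_L(\kappa)\bigr],
\]
and all three terms tend to $0$ uniformly in $\kappa\in I$: the first and third by the (to be established) locally uniform convergence, and the middle one by continuity of $\tilde\theta_\infty$ together with $x/L\to 0$.

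For the locally uniform convergence step, I plan to use (\ref{theta-kappa}). Since for $\alpha>1/2$ the Pr\"ufer amplitude $r_t(\kappa)$ in (\ref{r-eq}) converges a.s.\ to a finite, strictly positive limit $r_\infty(\kappa)$ (again by oscillatory cancellation of $a(s)F(X_s)e^{2i\theta_s}$ in $L^2$), the ratio $r_s/r_t$ is bounded and $r_s^2/r_t^2-1\to 0$. Writing
\[
\frac{\partial\tilde\theta_t}{\partial\kappa}
=\int_0^t\!\Bigl(\frac{r_s^2}{r_t^2}-1\Bigr)\,ds
\;+\;\frac{1}{2\kappa^2}\int_0^t\frac{r_s^2}{r_t^2}\,a(s)F(X_s)\bigl(1-\operatorname{Re} e^{2i\theta_s}\bigr)\,ds,
\]
the second integral is bounded in $t$ by the same oscillatory/Itô argument used for $\tilde\theta_t$ itself (integration by parts via $F=L^{-1}(LF)$ gains a factor of $a(s)$, leaving an $L^2$-convergent martingale because $\int a^2\,ds<\infty$), and the first integral is also bounded in $t$ because $(r_s^2-r_t^2)/r_t^2$ is itself an oscillatory integral from $s$ to $t$ whose time integral converges. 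This yields a uniform-in-$L$, locally-uniform-in-$\kappa$ bound $|\partial_\kappa\tilde\theta_L(\kappa)|\le C(\kappa)$, which in turn gives equicontinuity in $\kappa$ of the family $\{\tilde\theta_L\}_L$ and hence upgrades the pointwise a.s.\ convergence to locally uniform convergence on compact sets of $\kappa$.

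The main obstacle is exactly this last step: controlling $\partial_\kappa\tilde\theta_t$ uniformly in $t$. The naive bound from (\ref{theta-kappa}) gives a term of order $t$, and the whole point is that, after subtracting $t$, genuine cancellations survive only because $\alpha>1/2$ makes $a(s)$ square-integrable near infinity. Once this analytic input is secured (essentially a quantitative version of the KU convergence theorem, differentiated in the spectral parameter), the rest of the proof is the soft three-term decomposition above.
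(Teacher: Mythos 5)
Your reduction $\Psi_L(x)-x=\tilde\theta_L(\kappa+\tfrac{x}{L})-\tilde\theta_L(\kappa)$ is exactly the paper's starting point (their $I+II$ split is this same quantity, reorganized), so steps one and two match. The genuine difference is in how the right-hand side is controlled, and that is where your argument has a gap.

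The paper quotes \cite{KU} Lemma~2.2 for two facts about $A_t(\kappa,\beta)=\int_0^t a(s)F(X_s)e^{i\beta\theta_s(\kappa)}\,ds$: (i) $A_t\to A_\infty$ compact-uniformly in $\kappa$, and (ii) a uniform-in-$t$ H\"older bound
$\sup_{t\ge0,\,\kappa,\kappa_1\in K}|A_t(\kappa,\beta)-A_t(\kappa_1,\beta)|/|\kappa-\kappa_1|^{\epsilon}<\infty$ a.s.\ for $\epsilon<\delta/2$. Applying (ii) with $\kappa_1=\kappa+x/L$ gives $II=O(L^{-\epsilon})$ directly, no differentiation needed, and Lemma~\ref{density} cleans up. The key point is that the regularity used is \emph{H\"older}, obtained by a Kolmogorov-type argument on the martingale part of $A_t$, not $C^1$ regularity of $\tilde\theta_t$.

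Your route instead tries to establish locally uniform convergence via a uniform-in-$t$ bound on $\partial_\kappa\tilde\theta_t(\kappa)$ from (\ref{theta-kappa}), and this is where it fails. The problematic term is $\int_0^t\bigl(\tfrac{r_s^2}{r_t^2}-1\bigr)\,ds$. Writing $r_s^2/r_t^2=\exp\bigl(-\tfrac{1}{\kappa}\operatorname{Im}\int_s^t a(u)F(X_u)e^{2i\theta_u}\,du\bigr)$, linearizing, and applying Fubini shows the dominant contribution is
$-\tfrac{1}{\kappa}\operatorname{Im}\int_0^t u\,a(u)F(X_u)e^{2i\theta_u}\,du$,
and after the It\^o integration by parts the martingale part of this integral has quadratic variation of order $\int_0^t u^2 a(u)^2\,du\sim\int_0^t u^{2-2\alpha}\,du\sim t^{3-2\alpha}$, which diverges for every $\alpha<3/2$. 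So $\partial_\kappa\tilde\theta_t$ typically grows like $t^{3/2-\alpha}$; the claim that it is ``bounded in $t$'' is false, and the claim that $\int_0^t(r_s^2/r_t^2-1)\,ds$ is ``an oscillatory integral whose time integral converges'' is unjustified. Your three-term decomposition would then need locally uniform convergence of $\tilde\theta_t$, which you cannot obtain from an unbounded derivative estimate; you would need either to cite the compact-uniform convergence of $\tilde\theta_t$ directly from \cite{KU} (at which point the three-term split is fine but no longer self-contained in the way you intend), or to replace the $C^1$ estimate by the H\"older estimate on $A_t$ that the paper actually uses. As a side remark, even a correct bound $\partial_\kappa\tilde\theta_L=O(L^{3/2-\alpha})$ would still close the argument for a fixed $x$ via $|\tilde\theta_L(\kappa+x/L)-\tilde\theta_L(\kappa)|\lesssim|x|L^{1/2-\alpha}\to0$, bypassing the three-term split entirely; but proving that a.s.\ bound uniformly in $\kappa$ is again nontrivial and is essentially what the Kolmogorov/H\"older route in \cite{KU} achieves with less regularity.
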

\begin{proof}
By
(\ref{diff-eq})
we have
\begin{eqnarray*}
\Phi_L(x)
&=&
x + 
\frac {1}{2 \kappa}
Re \, 
\int_0^L a(s) F(X_s) 
\left(
e^{2i \theta_s(\kappa+\frac xL)}
-
e^{2i \theta_s(\kappa)}
\right) ds
+
O(L^{-\alpha}). 
\end{eqnarray*}
We set 
\[
A_t(\kappa, \beta) :=
\int_0^t a(s) F(X_s) e^{i \beta \theta_s(\kappa)} ds.
\]
Take
$\delta > 0$ such that 
$\int_0^{\infty} a(s)^2 s^{\delta} ds < \infty$.
Then by 
\cite{KU} Lemma 2.2,  
for any compact set 
$K \subset (0, \infty)$ 
and for any 
$\epsilon < \frac {\delta}{2}$, 
$\beta \in {\bf R}$
we have 
\[
\sup_{t \ge 0, \;\kappa, \;\kappa_1 \in K}
\frac {
|A_t(\kappa, \beta) - A_t(\kappa_1, \beta)|
}
{ | \kappa - \kappa_1 |^{\epsilon} } < \infty, 
\quad
a.s..
\]
Hence for fixed $x$, we have 
$\Phi_L(x) = x + O(L^{-\epsilon})$, 
a.s..
Since the function 
$f(x) = x$
is continuous, the proof is complete.
\QED
\end{proof}
\subsection{Proof of Theorem \ref{clock}}
We sometimes use the following elementary lemma. 
\begin{lemma}
\label{inverse}
Let 
$\Psi_n, n=1,2, \cdots$, 
and 
$\Psi$
are continuous and increasing functions on an open interval 
$I$ 
such that
$\lim_{n \to \infty}\Psi_n(x)=\Psi(x)$ 
pointwise. 
If 
$y_n \in Ran \,\Psi_n$, 
$y \in Ran \,\Psi$
and
$y_n \to y$, 
then it holds that 
$
\Psi_n^{-1}(y_n) 
\stackrel{n \to \infty}{\to}
\Psi^{-1}(y).
$
\end{lemma}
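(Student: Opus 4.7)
The plan is to exploit the strict monotonicity of $\Psi$ to sandwich $x_n := \Psi_n^{-1}(y_n)$ between $x-\epsilon$ and $x+\epsilon$, where $x := \Psi^{-1}(y)$.

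First, I would fix an arbitrary $\epsilon > 0$. Since $\Psi$ is strictly increasing (recall the paper's convention that ``increasing'' means strictly increasing), one has $\Psi(x-\epsilon) < \Psi(x) = y < \Psi(x+\epsilon)$, so there is a definite positive gap between $y$ and each of these two values.

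Next, combining the pointwise convergence $\Psi_n(x \pm \epsilon) \to \Psi(x \pm \epsilon)$ with the hypothesis $y_n \to y$, I would conclude that for all sufficiently large $n$,
\[
\Psi_n(x-\epsilon) < y_n < \Psi_n(x+\epsilon).
\]
Since $\Psi_n$ is continuous and strictly increasing, $\Psi_n^{-1}$ is well-defined on $Ran\,\Psi_n$ and preserves strict inequalities. Applying $\Psi_n^{-1}$ to the display above, together with the identity $x_n = \Psi_n^{-1}(y_n)$, yields $x-\epsilon < x_n < x+\epsilon$ for all large $n$. Since $\epsilon>0$ is arbitrary, this gives $x_n \to x = \Psi^{-1}(y)$, as claimed.

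There is essentially no obstacle to overcome: the lemma is a standard reformulation of the fact that inverses of strictly monotone continuous functions depend continuously on pointwise limits of their arguments, provided the limit function is itself strictly monotone. The only mild point to check is that the values $x\pm\epsilon$ may be freely plugged into $\Psi_n$, which is automatic in the intended applications of this lemma (e.g.\ the relative Pr\"ufer phases in Proposition \ref{Psi}, which are defined on all of ${\bf R}$), and that $y_n$ eventually lies in $Ran\,\Psi_n$, which follows from the trapping inequality above by the intermediate value theorem applied to $\Psi_n$.
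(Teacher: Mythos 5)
Your proof is correct and is the standard sandwiching argument; the paper itself gives no proof of this lemma, simply labeling it ``elementary,'' and your argument is surely the one the authors had in mind. One small remark: your closing observation that $y_n \in \mathrm{Ran}\,\Psi_n$ ``follows from the trapping inequality'' is not needed, since that membership is already a hypothesis of the lemma (and is what makes $\Psi_n^{-1}(y_n)$ well-defined in the first place).
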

{\it Proof of Theorem \ref{clock}}\\
By the fact that 
$\tilde{\theta}_t(\kappa) \stackrel{t\to\infty}{=} \tilde{\theta}_{\infty}(\kappa) + o(1)$
(\cite{KU} Proposition 2.1)
and by (A)(2), 
$\lim_{j \to \infty}
\phi(\kappa^2, L_j)
=
\left(
\tilde{\theta}_{\infty}(\kappa)+\beta 
\right)_{\pi {\bf Z}}$, 
a.s..
Together with Proposition \ref{Psi},  
the assumption for 
Lemma \ref{inverse}
is satisfied.
\QED
%
\section{Second Limit Theorem}
\subsection{Behavior of eigenvalues near $E_0$}
\begin{lemma}
\label{small}
Assume 
(B)
and let 
$n \in {\bf Z}$. 
Then for 
$j \to \infty$
we have 
\beq
(1)&& \qquad
\sqrt{E_{m_j+n}(L_j)}
=
\kappa+ o(1)
\\
(2) && \qquad
\sqrt{E_{m_j+n}(L_j)}
=
\kappa
+
\frac {
n \pi - \beta - \tilde{\theta}_{\infty}(\kappa)}
{L_j}
+
o(L_j^{-1}).
\eeq
\end{lemma}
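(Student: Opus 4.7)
The plan is to translate both claims into a statement about inverting the relative Prüfer phase $\Psi_{L_j}$ and then invoke Proposition~\ref{Psi} together with Lemma~\ref{inverse}. Since $E_{m_j+n}(L_j)$ is the $(m_j+n)$-th Dirichlet eigenvalue of $H_{L_j}$, the quantization condition reads $\theta_{L_j}(\sqrt{E_{m_j+n}(L_j)}) = (m_j+n)\pi$. Setting $\delta_j := L_j(\sqrt{E_{m_j+n}(L_j)} - \kappa)$, this becomes $\Psi_{L_j}(\delta_j) = (m_j+n)\pi - \theta_{L_j}(\kappa)$. Noting that (2) is exactly the assertion $\delta_j \to n\pi - \beta - \tilde{\theta}_\infty(\kappa)$ rescaled by $L_j^{-1}$ and that (1) is an immediate weakening, it suffices to prove this convergence of $\delta_j$.

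I next expand the right-hand side asymptotically. Using the decomposition $\theta_{L_j}(\kappa) = \kappa L_j + \tilde{\theta}_{L_j}(\kappa)$, assumption (B)(2) in the form $\kappa L_j = m_j\pi + \beta + \epsilon_j$ with $\epsilon_j \to 0$, and the a.s.\ convergence $\tilde{\theta}_{L_j}(\kappa) \to \tilde{\theta}_\infty(\kappa)$ from Proposition~2.1 of \cite{KU}, I obtain
\[
\Psi_{L_j}(\delta_j) = n\pi - \beta - \tilde{\theta}_\infty(\kappa) + o(1) =: y_j
\]
as $j \to \infty$, a.s., where $y_j \to y := n\pi - \beta - \tilde{\theta}_\infty(\kappa)$.

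Now Proposition~\ref{Psi} gives $\Psi_{L_j}(x) \to x$ pointwise a.s., while $\Psi_{L_j}$ is continuous and strictly increasing by (\ref{theta-kappa}). Lemma~\ref{inverse}, applied with $\Psi_n = \Psi_{L_j}$ and limit $\Psi = \mathrm{id}$, therefore yields $\delta_j = \Psi_{L_j}^{-1}(y_j) \to y$ a.s., which is exactly (2) after division by $L_j$, and in particular implies (1). The one technical detail worth a line is that $y_j$ must lie in $\mathrm{Ran}\,\Psi_{L_j}$ for large $j$; but $\Psi_{L_j}(0) = 0$ combined with strict monotonicity and the limit $\Psi_{L_j}(x) \to x$ guarantees that the range of $\Psi_{L_j}$ eventually covers any fixed bounded interval, so this is automatic. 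Beyond this there is no substantive obstacle: the proof is a clean assembly of Proposition~\ref{Psi}, Lemma~\ref{inverse}, and the a.s.\ limit of $\tilde{\theta}_{L_j}(\kappa)$ from \cite{KU}.
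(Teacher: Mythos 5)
Your proof is correct, but it takes a genuinely different route from the paper's. The paper argues directly from the Sturm--oscillation identity: it combines $(m_j+n)\pi = \sqrt{E_{m_j+n}(L_j)}\,L_j + \tilde{\theta}_{L_j}(\sqrt{E_{m_j+n}(L_j)})$ with $(m_j+n)\pi = \kappa L_j - \beta + n\pi + o(1)$ from (B)(2), proves part (1) first, and then substitutes (1) into the $\tilde{\theta}_{L_j}$ term, using the \emph{compact-uniform} convergence $\tilde{\theta}_t(\cdot)\to\tilde{\theta}_\infty(\cdot)$ from \cite{KU} to replace $\tilde{\theta}_{L_j}(\sqrt{E_{m_j+n}(L_j)})$ by $\tilde{\theta}_\infty(\kappa)+o(1)$. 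You instead feed the same quantization data into the $\Psi_L$--machinery: writing $\delta_j = L_j(\sqrt{E_{m_j+n}(L_j)}-\kappa)$ so that $\Psi_{L_j}(\delta_j)= n\pi - \beta - \tilde{\theta}_{L_j}(\kappa) - \epsilon_j \to n\pi-\beta-\tilde{\theta}_\infty(\kappa)$, and then invoking Proposition~\ref{Psi} and Lemma~\ref{inverse} to get $\delta_j\to n\pi-\beta-\tilde{\theta}_\infty(\kappa)$. This is a clean reuse of the same pointwise-convergence/inverse lemma combination that drives Theorem~\ref{clock}; it proves (1) and (2) in one shot, and it delegates the control of $\tilde{\theta}_{L_j}$ at the $j$-dependent argument $\sqrt{E_{m_j+n}(L_j)}$ entirely to Proposition~\ref{Psi}, whereas the paper handles that directly via compact-uniform convergence. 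The price is that your argument presupposes Proposition~\ref{Psi}, whose own proof is doing comparable work; the paper's direct version is marginally more self-contained. One small remark: the range issue you flag is not actually a concern, since $\delta_j$ exists by construction (the eigenvalue $E_{m_j+n}(L_j)$ exists for $j$ large because $n(L)-m(E_0,L)\to-\infty$), so $y_j=\Psi_{L_j}(\delta_j)\in\mathrm{Ran}\,\Psi_{L_j}$ automatically.
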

Lemma \ref{small}
follows from the fact that 
$\tilde{\theta}_L (\kappa) \stackrel{L \to \infty}{\to} \tilde{\theta}_{\infty}(\kappa)$
holds compact uniformly w.r.t. 
$\kappa$. 
By definition we see that 
\beq
X_j(n)
&=&
-L_j^{\alpha - \frac 12}
\left(
\tilde{\theta}_{L_j} (\sqrt{E_{m_j+n+1}(L_j)})
-
\tilde{\theta}_{L_j} (\sqrt{E_{m_j+n}(L_j)})
\right).
\eeq
By
Lemma \ref{small}(2)
\beq
&&
\sqrt{E_{m_j+n+1}(L_j)} 
= \kappa + \frac {c_1}{L_j}, 
\quad
\sqrt{E_{m_j+n}(L_j)} = \kappa + \frac {c_2}{L_j}
\\
&&
c_1 = (n+1) \pi - \beta - \tilde{\theta}_{\infty}(\kappa)+o(1),
\quad
c_2 = n \pi - \beta - \tilde{\theta}_{\infty}(\kappa)+o(1), 
\quad
j \to \infty.
\eeq
We set 
\beq
\Theta_t^{(n)}(c_1,c_2)
&:=&
\left(
\tilde{\theta}_{nt}(\kappa + \frac {c_1}{n})
-
\tilde{\theta}_{nt}(\kappa + \frac {c_2}{n})
\right)
n^{\alpha - \frac 12}
\\
l_t(
(c_1, c_2), (c'_1, c'_2)
)
&:=&
\frac {C(\kappa^2)}{8 \kappa^2}
\int_0^t
s^{-2 \alpha}
Re 
\left(
e^{2i c_1 s} - e^{2i c_2 s}
\right)
\overline{
\left(
e^{2i c'_1 s} - e^{2i c'_2 s}
\right)
}
ds.
\eeq
When 
$c_1, c_2$
are constant, the following fact is proved in \cite{K} Lemma 3.1.
\begin{proposition}
$\{ \Theta^{(n)}_t(c_1, c_2) \}_{t \ge 0, \,c_1, c_2 \in {\bf R}}
\stackrel{d}{\to}
\{Z(t, c_1, c_2) \}_{t \ge 0, \,c_1, c_2 \in {\bf R}}$
as 
$n \to \infty$
where
$\{Z(t, c_1, c_2) \}_{t \ge 0, \,c_1, c_2 \in {\bf R}}$
is the Gaussian system with covariance
$l_{t \wedge t'}((c_1, c_2), (c'_1, c'_2))$.
\end{proposition}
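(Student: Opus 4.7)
The plan is to extend the argument of \cite{K} Lemma 3.1 from constant $c_1, c_2$ to the joint field in $(t, c_1, c_2)$, in three steps: a stochastic integral representation, finite-dimensional convergence via the martingale central limit theorem, and tightness. Starting from (\ref{diff-eq}) and expanding $1/(\kappa + c/n) = 1/\kappa + O(1/n)$, I would write
\[
\Theta^{(n)}_t(c_1, c_2)
= \frac{n^{\alpha - 1/2}}{2\kappa}\, Re \int_0^{nt} \left( e^{2i\theta_s(\kappa + c_1/n)} - e^{2i\theta_s(\kappa + c_2/n)} \right) a(s) F(X_s) \, ds + o(1),
\]
where the error from the $1/\kappa$ expansion is $O(n^{\alpha - 3/2})$ and hence negligible under (B) (since $\alpha < 1$).

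To bring in the driving Brownian motion I would apply It\^o's formula to $e^{2i\theta_s(\kappa + c/n)} g(X_s)$, with $g := (L + 2i\kappa)^{-1} F$ solving the complex Poisson equation, following \cite{KU}. This expresses the oscillatory integral as a stochastic integral against the Brownian motion driving $X$, plus boundary terms that vanish after the $n^{\alpha-1/2}$ scaling since $a(nt) \to 0$. Applying Rebolledo's martingale CLT to the resulting vector of stochastic integrals indexed by finite collections $(t_k, c_1^{(k)}, c_2^{(k)})$ reduces the problem to computing the limiting matrix of quadratic covariations. After the change of variable $s = nu$, the prefactor $a(s)^2 \sim (nu)^{-2\alpha}$ produces the $s^{-2\alpha}$ weight and cancels $n^{2\alpha - 1}$; the rapidly oscillating phase $e^{4i\kappa n u}$ averages out by Riemann--Lebesgue; $|\nabla g|^2(X_{nu})$ self-averages by ergodicity of $X$ to $\int_M |\nabla g|^2 \, dx = C(\kappa^2)$; and the slowly varying combination $(e^{2ic_1 u} - e^{2ic_2 u})\overline{(e^{2ic_1' u} - e^{2ic_2' u})}$ survives, producing exactly the covariance $l_{t \wedge t'}((c_1, c_2),(c_1', c_2'))$. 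The factor $1/(8\kappa^2)$ then arises from squaring the prefactor $1/(2\kappa)$ together with the $1/2$ from the product of two real parts.

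Tightness in all three parameters would follow from Kolmogorov's criterion via fourth-moment bounds on increments: in $t$ from oscillatory-integral estimates as in \cite{KU} Lemma 2.2; in $(c_1, c_2)$ from the H\"older regularity of $\kappa \mapsto A_t(\kappa, \beta)$ already used in the proof of Proposition \ref{Psi}. The main obstacle I anticipate is uniform control of the substitution $\theta_s(\kappa + c/n) = \kappa s + cs/n + \tilde\theta_\infty(\kappa) + o(1)$ over compact sets of $(c_1, c_2)$ and all $s \le nt$: the $o(1)$ error, together with the error in replacing $\tilde\theta_s$ by $\tilde\theta_\infty$, must be shown to vanish after multiplication by $n^{\alpha - 1/2}$ uniformly in the parameters, so as not to spoil either the joint finite-dimensional limits or the tightness estimate. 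A related technical point is matching the integrable singularity $s^{-2\alpha}$ of the limiting covariance near $s = 0$ with the cancellation $|e^{2ic_1 s} - e^{2ic_2 s}|^2 = O(s^2)$ of the pre-factors, to ensure pointwise convergence of the covariance jointly in $(c_1, c_2)$.
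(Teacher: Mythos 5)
Your route---martingale decomposition via the resolvent $g_\kappa=(L+2i\kappa)^{-1}F$, Rebolledo's martingale CLT for the finite-dimensional distributions, and the change of variables $s=nu$ to convert $a(s)^2$ into the $s^{-2\alpha}$ weight while the $n^{2\alpha-1}$ factors cancel---is exactly the strategy underlying this proposition. Be aware, though, that the paper does not re-prove it: the sentence just before the statement cites \cite{K} Lemma 3.1 for the case of fixed $c_1,c_2$, and the later lemma in subsection 4.2 quotes the resulting decomposition $\Theta_t^{(n)}=T_t^{(n)}+O(n^{1/2-\alpha})$ with $T_t^{(n)}$ the stochastic-integral part, which matches the representation you derive. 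So what you have written is an exposition of the cited argument rather than an alternative to it.

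Two small points of care. First, the ``self-averaging'' of $|\nabla g_\kappa|^2(X_s)$ under the decaying weight $a(s)^2$ is not a bare ergodic theorem; in this paper it is implemented by another application of Lemma \ref{partial integration}(2), which splits $\int_0^{nt}a(s)^2\,[g_\kappa,\overline{g_\kappa}](X_s)\,ds$ into $\langle[g_\kappa,\overline{g_\kappa}]\rangle\int_0^{nt}a(s)^2ds$ plus martingale and boundary remainders that vanish after scaling; likewise the $e^{4i\kappa s}$ term in $\langle M,M\rangle$ is killed by the same integration-by-parts device rather than by Riemann--Lebesgue, since its coefficient is random. Second, the substitution you flag as the main obstacle is stated slightly inaccurately: in the quadratic covariation what must be controlled is $\theta_s(\kappa+c_1/n)-\theta_s(\kappa+c_2/n)=(c_1-c_2)s/n+\bigl(\tilde\theta_s(\kappa+c_1/n)-\tilde\theta_s(\kappa+c_2/n)\bigr)$, so it is the \emph{difference} of the $\tilde\theta$'s (which tends to $0$ uniformly on $s\le nt$, by the H\"older estimate for $\kappa\mapsto A_t(\kappa,\beta)$ you cite) that matters, not $\tilde\theta_\infty(\kappa)$ itself. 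With those refinements your sketch is sound. Note also that the proposition as stated only requires convergence of finite-dimensional distributions of the field $(t,c_1,c_2)\mapsto\Theta_t^{(n)}(c_1,c_2)$; the tightness step you describe is not needed for the conclusion, although it is needed later (in the critical case) and is harmless to include here.
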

\subsection{Independence of the limits}
To finish the proof of Theorem \ref{second}, 
it is then sufficient to prove that 
$(\tilde{\theta}_{nt}(\kappa), \{ \Theta_t^{(n)} (c_1, c_2)) \}_{c_1, c_2})$
converges jointly to the independent ones. 
Let 
$0 < \kappa_1 < \kappa_2$
and 
$I:=[\kappa_1, \kappa_2]$. 
In the lemma below, 
we regard
$\tilde{\theta}_t, \tilde{\theta}_{\infty}$
are 
$C(I)$-valued random elements. 
\begin{lemma}
For 
$t > 0$
fixed, we have 
\[
(\tilde{\theta}_{nt}, \{ \Theta_t^{(n)} (c_1, c_2) \}_{c_1, c_2})
\stackrel{d}{\to}
(\tilde{\theta}_{\infty}, \{ Z(t, c_1, c_2) \}_{c_1, c_2})
\]
as
$n \to \infty$
where
$\tilde{\theta}_{\infty}$
and 
$\{ Z(t, c_1, c_2) \}_{c_1, c_2}$
are independent. 
\end{lemma}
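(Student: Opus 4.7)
The plan is to decouple $\tilde{\theta}_{nt}$ from $\Theta_t^{(n)}$ by splitting the time interval $[0,nt]$ at an intermediate level $N=N(n)$ that grows to infinity but much more slowly than $n$ (say $N=\log n$). With such a splitting, $\tilde{\theta}_{nt}(\kappa)$ can be replaced by $\tilde{\theta}_N(\kappa)$ with vanishing error (both converge to $\tilde{\theta}_{\infty}(\kappa)$), while the contribution of $[0,N]$ to $\Theta_t^{(n)}$ is of smaller order. What remains in $\Theta_t^{(n)}$ depends only on $(X_s)_{s\in[N,nt]}$, which is conditionally independent of $\sigma((X_u)_{u\le N})$ given $X_N$. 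Independence of the limits then follows from showing that the CLT of the preceding proposition does not see the initial value $X_N$, thanks to the ergodicity of Brownian motion on the compact manifold $M$.

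For the first coordinate I would invoke the $C(I)$-valued a.s.\ convergence $\tilde{\theta}_s(\kappa)\to\tilde{\theta}_{\infty}(\kappa)$ cited from \cite{KU}: it gives $\|\tilde{\theta}_{nt}-\tilde{\theta}_N\|_{C(I)}\to 0$ as $N,n\to\infty$ with $N\le nt$. For the second coordinate, write
\[
\Theta_t^{(n)}(c_1,c_2)=A_N^{(n)}(c_1,c_2)+B_N^{(n)}(c_1,c_2),
\]
where $A_N^{(n)}$ is obtained by restricting the time integral defining $\tilde{\theta}_{nt}(\kappa+c_i/n)$ to $[0,N]$, and $B_N^{(n)}$ uses the tail $[N,nt]$. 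Differentiating (\ref{diff-eq}) and using (\ref{theta-kappa}) together with the a.s.\ boundedness of $r_s$ in the regime $\alpha>\frac{1}{2}$ yields $|\partial_\kappa\tilde{\theta}_N(\kappa)|\le C\int_0^N (1+s)\,a(s)\,ds=O(N^{2-\alpha})$ uniformly in $\kappa\in I$. By the mean value theorem,
\[
|A_N^{(n)}(c_1,c_2)|\le C|c_1-c_2|\,n^{\alpha-3/2}N^{2-\alpha},
\]
so $N=\log n$ makes $A_N^{(n)}\to 0$ in probability for every fixed $c_1,c_2$.

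At this point $\tilde{\theta}_N$ is $\mathcal{F}_N=\sigma((X_u)_{u\le N})$-measurable while $B_N^{(n)}$ is a functional of $(X_s)_{s\in[N,nt]}$, and the two are conditionally independent given $X_N$. Hence for bounded continuous $g,h$,
\[
E\bigl[g(\tilde{\theta}_N)\,h(B_N^{(n)})\bigr]=E\Bigl[g(\tilde{\theta}_N)\,E\bigl[h(B_N^{(n)})\,\big|\,X_N\bigr]\Bigr].
\]
I would then show
\[
\sup_{x\in M}\bigl|E[h(B_N^{(n)})\mid X_N=x]-E[h(Z(t,c_1,c_2))]\bigr|\longrightarrow 0,
\]
by revisiting the martingale-CLT proof of the preceding proposition: its limit covariance $l_t$ arises from a time integral of $s^{-2\alpha}$ against an ergodic average of quantities under the invariant measure on $M$, so the initial condition drops out as $n\to\infty$ by exponential mixing of Brownian motion on compact $M$. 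Together with $\tilde{\theta}_N\to\tilde{\theta}_{\infty}$, this yields the required independent joint convergence.

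The main obstacle I expect is precisely this last step --- establishing the conditional CLT \emph{uniformly in the starting point} $x\in M$. Everything else (marginal tightness, negligibility of $A_N^{(n)}$, and the replacement of $\tilde{\theta}_{nt}$ by $\tilde{\theta}_N$) is bookkeeping on top of the already-cited estimates and of (\ref{theta-kappa}). The uniform conditional CLT, by contrast, requires either re-examining the explicit martingale proof of the preceding proposition to verify that the initial condition drops out in the limit, or invoking a spectral-gap argument for the generator $L$ on $L^2(M)$.
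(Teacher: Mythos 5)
Your route is the same as the paper's in outline: cut time at an intermediate level, apply the Markov property of $(X_t)$ at the cut, and reduce to a CLT for the tail that is uniform in the starting data. The paper fixes the cutoff $T$ and sends $T\to\infty$ after $n\to\infty$, while you take $N=\log n$; both devices work. The paper also first replaces $\Theta_t^{(n)}$ by its martingale part $T_t^{(n)}$ via eq.\ (3.3) of \cite{K} (so $\Theta_t^{(n)}=T_t^{(n)}+O(n^{1/2-\alpha})$) and only then cuts at $T$; this is slightly cleaner because the tail $T_t^{(n)}-T_{T/n}^{(n)}$ is a martingale increment and its conditional CLT is precisely what \cite{K} Lemma 3.1 delivers. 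Your estimate $|A_N^{(n)}|\le C|c_1-c_2|\,n^{\alpha-3/2}N^{2-\alpha}$ plays the role of the paper's $T_{T/n}^{(n)}\stackrel{P}{\to}0$ and is fine.

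There is, however, a genuine flaw in your decoupling step. You assert that $B_N^{(n)}$ is a functional of $(X_s)_{s\in[N,nt]}$, hence conditionally independent of $\sigma((X_u)_{u\le N})$ given $X_N$. This is not correct: by (\ref{diff-eq}) the integrand on $[N,nt]$ is $\mathrm{Re}(e^{2i\theta_s(\kappa+c_i/n)}-1)\,a(s)F(X_s)$, and for $s\ge N$ the phase $\theta_s(\kappa+c_i/n)$ solves an ODE on $[N,s]$ driven by $(X_u)_{u\in[N,s]}$ \emph{with initial value} $\theta_N(\kappa+c_i/n)$, which is $\mathcal{F}_N$-measurable but not a function of $X_N$ alone. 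So $B_N^{(n)}$ depends on the past through the initial angle, and conditioning on $X_N$ does not separate it from $\tilde{\theta}_N$. The repair is to condition on $\mathcal{F}_N$ (equivalently on $(X_N,\theta_N(\kappa+c_1/n),\theta_N(\kappa+c_2/n))$); then the Markov property gives that ${\bf E}[h(B_N^{(n)})\,|\,\mathcal{F}_N]$ is a function of this triple, and the uniform conditional CLT you need must be uniform not only in $x\in M$ but also in the initial angles. You correctly single out the uniform conditional CLT as the crux; the missing observation is that the required uniformity is over the angle as well as the position. The paper's passage to the martingale part $T^{(n)}$ before cutting and its remark that the ``suitable time-shift'' $\widetilde{T}^{(n)}$ converges ``being irrespective of $X_T$'' is precisely what makes this tractable there, via \cite{K} Lemma 3.1.
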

\begin{proof}
Let 
$A (\subset C(I))$
be a 
$\tilde{\theta}_{\infty}$-continuity set
(i.e.,  
${\bf P}(\tilde{\theta}_{\infty} \in \partial A) = 0$)
and set 
$A_{\epsilon} := \{ f \in C(I) \, | \, 
d(f,A) < \epsilon \}$.
Since 
$\tilde{\theta}_t(\kappa) \stackrel{a.s.}{\to} \tilde{\theta}_{\infty}(\kappa)$
compact uniformly in 
$\kappa$, 
for any 
$\epsilon > 0$
${\bf P}\left(
\tilde{\theta}_{nt} \in A, 
\;
\tilde{\theta}_T \notin A_{\epsilon} 
\right)
=
o(1)$
for sufficiently large 
$T, n$. 
Here we recall eq.(3.3) in 
\cite{K}.
\beq
\Theta_t^{(n)}(c_1, c_2)
&=&
T_t^{(n)}(c_1, c_2)
+ 
O(n^{\frac 12 - \alpha})
\\
\mbox{ where }\quad
T_t^{(n)}(c_1, c_2)
&:=&
n^{\alpha - \frac 12}
Re 
\left(
S_t^{(n)}
\left( \kappa + \frac {c_1}{n} \right)
 - 
S_t^{(n)}
\left( \kappa + \frac {c_2}{n} \right)
\right)
\\
S_t^{(n)}(\kappa)
&:=&
\frac {1}{2 \kappa}
\int_0^{nt}
a(s) e^{2i \tilde{\theta}_s(\kappa)}
d M_s(\kappa)
\eeq
$M_s(\kappa)$
is the complex martingale defined in subsection 6.2.
Let 
$m \in {\bf N}$.
For
${\bf c}_1
=
(c^{(1)}_1, \cdots, c^{(m)}_1)$, 
${\bf c}_2
=
(c^{(1)}_2, \cdots, c^{(m)}_2)$, 
we use the following convention : 
$\Theta_t^{(n)} ({\bf c}_1, {\bf c}_2)=
\left(
\Theta_t^{(n)}(c^{(1)}_1, c^{(1)}_2), 
\cdots,
\Theta_t^{(n)}(c^{(m)}_1, c^{(m)}_2)
\right)$
and similarly for 
$T_t^{(n)}({\bf c}_1, {\bf c}_2)$
and 
$Z(t, {\bf c}_1, {\bf c}_2)$.
Let 
$B \in {\cal B}({\bf R}^m)$
be a 
$Z(t, {\bf c}_1, {\bf c}_2)$-continuity set and let 
$B_{\epsilon} := \{ x \in {\bf R}^m \, | \, d(x,B) < \epsilon \}$. 
Writing 
$\Theta^{(n)}_t = \Theta^{(n)}_t({\bf c}_1, {\bf c}_2)$, 
$T^{(n)}_t = T^{(n)}_t({\bf c}_1, {\bf c}_2)$
we have, for sufficiently large 
$n$, 
\beq
{\bf P}\left(
\tilde{\theta}_{nt} \in A, 
\Theta_t^{(n)} \in B
\right)
& \le &
{\bf P}\left(
\tilde{\theta}_T \in A_{\epsilon}, 
T_t^{(n)} \in B_{\epsilon}
\right)
+ o(1)
\\
&=&
{\bf P}\left(
\tilde{\theta}_T \in A_{\epsilon}, 
T_t^{(n)} - T_{T/n}^{(n)} + T_{T/n}^{(n)}
\in B_{\epsilon}
\right)
+ o(1)
\\
&=&
{\bf P}\left(\tilde{\theta}_{T} \in A_{\epsilon}, 
T_t^{(n)}- T_{T/n}^{(n)} \in B_{2\epsilon}
\right)
+ o(1).
\eeq
Here we used 
$T^{(n)}_{T/n} \stackrel{P}{\to} 0$.
By the Markov property
\beq
&=&
{\bf E}\left[ 
1_{\{\tilde{\theta}_T \in A_{\epsilon} \}} 
{\bf E}_{X_T}\Bigl[
1_{ \{ \widetilde{T}^{(n)}_{t - T/n} \in B_{2\epsilon} \} } 
\Bigr]
\right] + o(1)
\eeq
where
$\widetilde{T}_t^{(n)}$
is the suitable ``time-shift" of 
$T_t^{(n)}$. 
Because 
$\widetilde{T}_t^{(n)}$
converges in distribution to  
$Z(t, {\bf c}_1, {\bf c}_2)$
as 
$n \to \infty$ 
being irrespective of
$X_T$, 
\beq
&=&
{\bf P}\left(\tilde{\theta}_T \in A_{\epsilon}
\right)
{\bf P}\left(
Z(t, {\bf c}_1, {\bf c}_2) \in B_{2\epsilon}
\right)+ o(1)
\\
& \le &
{\bf P}\left(
\tilde{\theta}_{\infty} \in A_{2\epsilon} 
\right)
{\bf P}\left(
Z(t, {\bf c}_1, {\bf c}_2) \in B_{2\epsilon}
\right)+ o(1).
\eeq
Since 
$A$
is a 
$\tilde{\theta}_{\infty}$-continuity set and 
$B \in {\cal B}({\bf R}^m)$
is a 
$Z(t, {\bf c}_1, {\bf c}_2)$-continuity set, 
\[
\limsup_{n \to \infty}
{\bf P}\left(
\tilde{\theta}_{nt} \in A, \Theta_t^{(n)} \in B
\right)
\le
{\bf P}(\tilde{\theta}_{\infty} \in A)
{\bf P}(Z(t, {\bf c}_1, {\bf c}_2) \in B).
\]
The opposite inequality can be proved similarly.
\QED
\end{proof}
%

\section{SC-case : outline of proof of Theorem \ref{sc-limit}}
In this section
we overview the proof of Theorem \ref{sc-limit}. 
First of all, set 
\beq
&&
[x]_{2\pi {\bf Z}} 
:= \max \{ y \in 2\pi {\bf Z}\, | \, y \le x \}, 
\quad
(x)_{2\pi{\bf Z}} := x - [x]_{2\pi {\bf Z}}, 
\\
&&
2 m(\kappa^2, L)\pi := [ 2 \theta_L(\kappa)]_{2 \pi {\bf Z}}, 
\quad
\phi(\kappa^2, L) := (2\theta_L (\kappa))_{2\pi {\bf Z}}
\in [0, 2\pi).
\eeq
We also set the relative Pr\"ufer phase by 
$
\Psi_L (x)
:=
2\theta_L(\kappa+\frac {x}{L})
-
2\theta_L(\kappa).
$
Then we have a variant of 
Lemma \ref{Laplace}.
\begin{lemma}
\label{Laplace2}
For 
$f \in C^+_c({\bf R})$
\[
{\bf E}[e^{- \xi_L(f)}]
=
{\bf E}\left[
\exp \left(
-\sum_{n =n(L) - m(\kappa^2, L)}^{\infty}
f\Bigl( \Psi_L^{-1} (2n \pi - \phi(\kappa^2, L)) \Bigr)
\right)
\right].
\]
\end{lemma}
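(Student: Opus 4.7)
The proof is essentially a rescaling of that of Lemma \ref{Laplace}. The plan is to identify the atoms of $\xi_L$ through the relation $\theta_L(\sqrt{E_n(L)}) = n\pi$ coming from the Dirichlet boundary condition, then express them in terms of $\Phi_L^{-1}$ by exploiting the decomposition $2\theta_L(\kappa) = 2m(\kappa^2,L)\pi + \phi(\kappa^2,L)$.

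Concretely, let $x_n(L) := L(\sqrt{E_n(L)} - \kappa)$ for $n \ge n(L)$ denote the atoms of $\xi_L$. First, since $E_n(L)$ is an eigenvalue of $H_L$ with Dirichlet boundary condition, $\theta_L(\sqrt{E_n(L)}) = n\pi$, and therefore
\[
2\theta_L\bigl(\kappa + \tfrac{x_n(L)}{L}\bigr) = 2\theta_L(\sqrt{E_n(L)}) = 2n\pi.
\]
Subtracting $2\theta_L(\kappa) = 2m(\kappa^2,L)\pi + \phi(\kappa^2,L)$ from both sides yields
\[
\Phi_L(x_n(L)) = 2n\pi - 2m(\kappa^2,L)\pi - \phi(\kappa^2,L) = 2(n - m(\kappa^2,L))\pi - \phi(\kappa^2,L).
\]
Next, I would verify that $\Phi_L$ is continuous and strictly increasing, so that $\Phi_L^{-1}$ makes sense on its range: this follows from $\partial\theta_t(\kappa)/\partial\kappa > 0$ already noted after (\ref{theta-kappa}), which gives strict monotonicity of $\kappa \mapsto 2\theta_L(\kappa)$, hence of $x \mapsto \Phi_L(x)$. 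Inverting then yields
\[
x_n(L) = \Phi_L^{-1}\bigl(2(n-m(\kappa^2,L))\pi - \phi(\kappa^2,L)\bigr).
\]

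Finally, the substitution $k := n - m(\kappa^2, L)$ in the sum defining $\xi_L(f)$, followed by taking expectation of $e^{-\xi_L(f)}$, produces the claimed identity; note that as $n$ runs over $\{n\in\mathbb{N} : n \ge n(L)\}$, the index $k$ runs over $\{k \in \mathbb{Z} : k \ge n(L) - m(\kappa^2,L)\}$, which matches the range in the statement. There is no real obstacle here: the only content beyond rescaling Lemma \ref{Laplace} by a factor of $2$ is the observation that doubling the phase and reducing modulo $2\pi$ is the correct bookkeeping for the critical regime, where the natural underlying object will be the circular-ensemble-type process living on $[0,2\pi)$.
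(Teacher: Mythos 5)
Your argument is correct and is exactly the paper's (implicit) proof: the paper simply declares Lemma \ref{Laplace2} to be a variant of Lemma \ref{Laplace}, and your computation $\Phi_L(x_n(L)) = 2(n - m(\kappa^2,L))\pi - \phi(\kappa^2,L)$, followed by inversion using the monotonicity from $\partial\theta_t/\partial\kappa>0$ and the index shift $k = n - m(\kappa^2,L)$, is precisely that doubling of the Lemma \ref{Laplace} argument.
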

So our task 
is to study the limit of the joint distribution of 
$(\Psi_L, \phi(\kappa^2, L))$
as 
$L \to \infty$.
Following 
\cite{KS}
we consider
\begin{equation}
\Psi_t^{(n)}(x)
:=
2\theta_{nt}(\kappa+\frac {x}{n})
-
2\theta_{nt}(\kappa),
\label{Psiscaling}
\end{equation}
regard it as an increasing function-valued process, and find a process 
$\Psi_t(x)$
such that 
for any fixed 
$c_1, \cdots, c_m \in {\bf R}$
$\{ \Psi_t^{(n)}(c_j) \}_{j=1}^m
\stackrel{d}{\to} \{ \Psi_t(c_j) \}_{j=1}^m$
(Theorem \ref{SDE}).
$\Psi_t$
is characterized as the unique solution to the SDE (\ref{SDEbeta}). 
Moreover, 
$\Psi_t(c)$
is continuous and increasing with respect to 
$c$
(Lemma \ref{parameter-continuity}). 
On the other hand
we have 
$(\{ \Psi_1^{(n)}(c_j) \}_{j=1}^m, \phi(\kappa^2, n))
\stackrel{d}{\to}
(\{ \Psi_1(c_j) \}_{j=1}^m, \phi_1)$
jointly, where 
$\phi_1$
is uniformly distributed on 
$[0, 2 \pi)$
and independent of 
$\Psi_1$
(Proposition \ref{joint-convergence}). 
Moreover
$\Psi^{(n)}$
converges to 
$\Psi$
also as a sequence of increasing function-valued process 
(Lemma \ref{compactness}), 
so that we can find a coupling such that 
for a.s. 
$( 
(\Psi^{(n)}_1)^{-1}(x), \phi(\kappa^2, n)
)
\to
(\Psi_1^{-1}(x), \phi_1)$
for any 
$x \in {\bf R}$
(Proposition \ref{coupling}). 
Therefore we obtain (\ref{Laplacebeta}). 
%
\section{Convergence of 
$\Psi$}
\subsection{Preliminaries}
We recall 
the basic tool used in \cite{KU, K}. 
For 
$f \in C^{\infty}(M)$
let 
$R_{\beta} f 
:=(L + i\beta)^{-1}f$
$(\beta > 0)$,  
$R_0 f :=L^{-1}(f - \langle f \rangle)$.
Then by 
Ito's formula, 
\beq
\int_0^t e^{i \beta s} f(X_s) ds
&=&
\left[
e^{i \beta s} (R_{\beta}f)(X_s) 
\right]_0^t
+
\int_0^t 
e^{i \beta s} d M_s(f, \beta)
\\
\int_0^t f(X_s) ds
&=&
\langle f \rangle t
+
\left[ (R_0 f) (X_s) \right]_0^t 
+
M_t(f,0)
\eeq
where
$M_s(f, \beta), M_s(f, 0)$
are the complex martingales whose variational process satisfy
\beq
\langle M(f, \beta), M(f, \beta) \rangle_t
&=&
\int_0^t [ R_{\beta} f, R_{\beta} f] (X_s) ds, 
\\
\langle M(f, \beta), \overline{M(f, \beta)} \rangle_t
&=&
\int_0^t [ R_{\beta} f, \overline{R_{\beta} f}] (X_s) ds
\eeq
where
\beq
[f_1, f_2](x)
:=
L(f_1 f_2)(x)
-
(Lf_1)(x) f_2(x)
-
f_1(x) (Lf_2)(x)
=
(\nabla f_1, \nabla f_2) (x).
\eeq
Then 
the integration by parts gives us the following formulas to be used frequently. 
\begin{lemma}
\label{partial integration}
\beq
(1)
&&
\int_0^t 
b(s) e^{i \beta  s}
e^{i \gamma \tilde{\theta}_s}
f(X_s) ds
\\
&=&
\left[
b(s) e^{i \gamma \tilde{\theta}_s}
e^{i \beta  s}
(R_{\beta }f)(X_s)
\right]_0^t
- 
\int_0^t 
b'(s) e^{i \gamma \tilde{\theta}_s}
e^{i \beta  s}
(R_{\beta }f)(X_s)ds
\\
&& - 
\frac {i \gamma}{2 \kappa}
\int_0^t 
b(s) a(s) Re (e^{2i \theta_s}-1) 
e^{i \gamma \tilde{\theta}_s}
e^{i \beta  s}
F(X_s) (R_{\beta }f) (X_s) ds
\\
&& +
\int_0^t b(s) 
e^{i \beta  s} 
e^{i \gamma \tilde{\theta}_s}
d M_s(f, \beta). 
\eeq
\beq
(2) &&
\int_0^t b(s) e^{i \gamma \tilde{\theta}_s} f(X_s) ds
\\
&=&
\langle f \rangle 
\int_0^t b(s) e^{i \gamma \tilde{\theta}_s} ds
\\
&& + \left[
b(s) e^{i \gamma \tilde{\theta}_s} 
(R_0f)(X_s) 
\right]_0^t 
 - \int_0^t
b'(s) e^{i \gamma \tilde{\theta}_s}
(R_0 f)(X_s) ds
\\
&& - \frac {i \gamma}{2 \kappa}
\int_0^t a(s)b(s) Re (e^{2i \theta_s} - 1)
e^{i \gamma \tilde{\theta}_s}
F(X_s) (R_0 f)(X_s) ds
\\
&& +
\int_0^t b(s) e^{i \gamma \tilde{\theta}_s}
dM_s(f, 0).
\eeq
\end{lemma}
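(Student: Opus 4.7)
The plan is a direct application of It\^o's formula together with the resolvent identities $LR_\beta f = f - i\beta R_\beta f$ and $LRf = f - \langle f\rangle$ already recorded above. The key structural observation is that $s\mapsto\tilde{\theta}_s$ is absolutely continuous by (\ref{diff-eq}) and hence of finite variation, so product-rule manipulations involving $e^{i\gamma\tilde{\theta}_s}$ do not generate any It\^o correction terms.

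For (1), I would first apply It\^o's formula to $e^{i\beta s}(R_\beta f)(X_s)$. The It\^o drift $e^{i\beta s}(LR_\beta f)(X_s) = e^{i\beta s}(f - i\beta R_\beta f)(X_s)$ cancels the contribution $i\beta e^{i\beta s}(R_\beta f)(X_s)$ coming from differentiating the prefactor, leaving
$$
d\bigl[e^{i\beta s}(R_\beta f)(X_s)\bigr] = e^{i\beta s}f(X_s)\,ds + e^{i\beta s}\,dM_s(f,\beta).
$$
Solving for $e^{i\beta s}f(X_s)\,ds$ and multiplying through by $b(s)e^{i\gamma\tilde{\theta}_s}$, the $dM_s(f,\beta)$ piece reproduces the stochastic integral on the last line of (1), while the remaining Stieltjes integral $\int_0^t b(s)e^{i\gamma\tilde{\theta}_s}\,d[e^{i\beta s}(R_\beta f)(X_s)]$ is treated by ordinary integration by parts, with no quadratic covariation, since $b(s)e^{i\gamma\tilde{\theta}_s}$ has finite variation. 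Expanding
$$
d\bigl[b(s)e^{i\gamma\tilde{\theta}_s}\bigr] = b'(s)e^{i\gamma\tilde{\theta}_s}\,ds + \frac{i\gamma}{2\kappa}\,b(s)a(s)\,Re(e^{2i\theta_s}-1)\,e^{i\gamma\tilde{\theta}_s}F(X_s)\,ds
$$
via (\ref{diff-eq}) produces precisely the boundary and the two volume terms on the right of (1).

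Part (2) follows the same template with $R_\beta f$ replaced by $Rf$. From $LRf = f - \langle f\rangle$ one obtains $f(X_s)\,ds = \langle f\rangle\,ds + d[(Rf)(X_s)] - dM_s(f,0)$; multiplying by $b(s)e^{i\gamma\tilde{\theta}_s}$, integrating, and applying the same integration by parts to the middle term yields, in order, the $\langle f\rangle$-integral, the boundary term, the $b'$-correction, the $Re(e^{2i\theta_s}-1)$-correction, and the martingale integral against $dM_s(f,0)$, matching the statement.

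The only subtleties are verifying that $R_\beta f, Rf\in C^\infty(M)$, so that It\^o's formula applies (a consequence of elliptic regularity on the compact manifold $M$), and confirming that $s\mapsto b(s)e^{i\gamma\tilde{\theta}_s}$ is of finite variation so that the product rule used in integration by parts carries no covariation term. Neither is a genuine obstacle; the lemma is essentially a bookkeeping consequence of the resolvent identities.
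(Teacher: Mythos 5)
Your proposal is correct and is essentially the argument the paper leaves implicit — it only remarks that ``integration by parts gives us the following formulas,'' and you have filled in exactly the intended steps: apply the recorded It\^o/resolvent identity, isolate $e^{i\beta s}f(X_s)\,ds$, multiply by the finite-variation integrand $b(s)e^{i\gamma\tilde{\theta}_s}$, integrate by parts with no covariation correction, and expand $d\bigl[b(s)e^{i\gamma\tilde{\theta}_s}\bigr]$ via (\ref{diff-eq}). Part (2) is the same template with $R_\beta$ replaced by $R$ and the $\langle f\rangle$ drift.

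One cosmetic caution on signs: the differential you wrote, $d\bigl[e^{i\beta s}(R_\beta f)(X_s)\bigr]=e^{i\beta s}f(X_s)\,ds+e^{i\beta s}\,dM_s(f,\beta)$, uses the standard It\^o convention (martingale part with a $+$), which upon solving for $e^{i\beta s}f(X_s)\,ds$ would produce $-\int_0^t b\,e^{i\gamma\tilde{\theta}_s}e^{i\beta s}\,dM_s(f,\beta)$, the opposite sign from the lemma. The paper's recorded identity $\int_0^t e^{i\beta s}f(X_s)\,ds=[e^{i\beta s}(R_\beta f)(X_s)]_0^t+\int_0^t e^{i\beta s}\,dM_s(f,\beta)$ fixes the opposite convention for $M_s(f,\beta)$; to land on the lemma's sign you should start directly from that displayed identity rather than re-deriving the It\^o decomposition. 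The same remark applies to $M_s(f,0)$ in part (2). The rest — elliptic regularity giving $R_\beta f, Rf\in C^\infty(M)$, and absolute continuity of $s\mapsto\tilde{\theta}_s$ eliminating quadratic covariation — is handled correctly.
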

We will also use following notation for simplicity. 
\beq
g_{\kappa} &:=& (L+2i \kappa)^{-1}F, 
\quad
g := L^{-1}(F - \langle F \rangle),
\\
M_s(\kappa) &:=& M_s (F, 2 \kappa), 
\quad
M_s := M_s (F,  0).
\eeq
%
\subsection{A priori estimates}
In this section 
we derive a priori estimate for 
(\ref{Psiscaling}). 
We set 
\beq
Y_t (\kappa)
&:=&
\int_0^t a(s) e^{2i \theta_s(\kappa)} d M_s(\kappa),
\\
\delta_t(\kappa)
&:=&
\left[
a(s) e^{2i \theta_s(\kappa)} g_{\kappa}(X_s)
\right]_0^t 
 -
\int_0^t a'(s) e^{2i \theta_s(\kappa)} g_{\kappa}(X_s) ds
\\
&& \qquad - 
\frac {i}{\kappa}
\int_0^t a(s)^2 e^{2i \theta_s(\kappa)}
\left(
\frac {e^{2i \theta_s(\kappa)}}{2} - 1
\right)
g_{\kappa}(X_s) F(X_s) ds,
\\
V_t^{(n)}(c )
&:=&
Y_{nt}\left(
\kappa + \frac cn
\right) - Y_{nt}(\kappa).
\eeq
\begin{lemma}
\label{fourth}
Suppose
$
\int_0^{\infty} a(s)^3 ds < \infty 
$.
We then have\\
(1)
\[
\int_0^t a(s) e^{2i \theta_s(\kappa)} F(X_s) ds
=
- \frac {i}{2 \kappa}
\int_0^t a(s)^2 g_{\kappa}(X_s) F(X_s) ds
+
Y_t (\kappa)
+
\delta_t(\kappa)
\]
\noindent
(2)
For a.s., 
$\delta_t(\kappa)$
has the limit as 
$t \to \infty$
：
$\lim_{t \to \infty} \delta_t(\kappa) = \delta_{\infty}(\kappa)$, a.s.\\
(3)
For any 
$0 < T < \infty$, 
we have
\[
{\bf E}
\left[ 
\max_{0 \le t \le T}
\left|
\delta_{nt}(\kappa+ \frac cn) - \delta_{nt}(\kappa)
\right|^2
\right]
\stackrel{n \to \infty}{\to} 0.
\]
\end{lemma}
\begin{proof}
(1)
It follows directly from 
Lemma \ref{partial integration}(1).\\
(2)
We further decompose the remainder term 
$\delta_t(\kappa)$ :
\begin{eqnarray}
\delta_t(\kappa)
&=&
\delta^{(1)}_t(\kappa)
+
\delta^{(2)}_t(\kappa)
\label{delta}
\\
\delta_t^{(1)}(\kappa)
&:=&
\left[
a(s) e^{2i \theta_s(\kappa)} g_{\kappa}(X_s) 
\right]_0^t 
- 
\int_0^t a'(s) e^{2i \theta_s(\kappa)} g_{\kappa}(X_s) ds
\label{delta-1}
\\
\delta_t^{(2)}(\kappa)
&:=& - 
\frac {i}{\kappa} 
\int_0^t 
a(s)^2 
\left(
\frac {e^{2i \theta_s(\kappa)}}{2} - 1
\right)
e^{2i \theta_s(\kappa)} 
g_{\kappa}(X_s) F(X_s) ds.
\nonumber
\end{eqnarray}
It is easy to see 
$
\lim_{t \to \infty}
\delta_t^{(1)} (\kappa) = \delta_{\infty}^{(1)}(\kappa)$, 
a.s..
To see the convergence of
$\delta_t^{(2)}(\kappa)$
we write
\begin{eqnarray}
\delta_t^{(2)}(\kappa)
&=&
- \frac {i}{2 \kappa} D_t^{(4)}(\kappa)
+ \frac {i}{\kappa} D_t^{(2)}(\kappa)
\label{delta2}
\\
D_t^{(\beta)}(\kappa)
&:=&
\int_0^t a(s)^2 
e^{i \beta \theta_s (\kappa)} 
F(X_s) g_{\kappa}(X_s) ds, 
\quad
\beta = 2,4.
\nonumber
\end{eqnarray}
We 
use Lemma \ref{partial integration}(1) 
to decompose 
$D_t^{(\beta)}(\kappa)$
into martingale part and the remainder :
Setting  
$h_{\kappa, \beta}= R_{\beta \kappa} (F g_{\kappa})$
and 
$\widetilde{M_s}^{(\beta)} (\kappa) = M_s(F g_{\kappa}, \beta \kappa)$, 
we have
\begin{eqnarray}
&&
D_t^{(\beta)}(\kappa)
=
I_t^{(\beta)}(\kappa) + N_t^{(\beta)}(\kappa)
\label{decomposition}
\\
&&
I_t^{(\beta)}(\kappa)
:=
\left[
a(s)^2 e^{i \beta \theta_s(\kappa)}
h_{\kappa, \beta}(X_s)
\right]_0^t
-
\int_0^t ( a(s)^2 )' 
e^{i \beta \theta_s(\kappa)}
h_{\kappa, \beta}(X_s) ds
\nonumber
\\
&& \qquad\qquad-
\frac {i \beta}{2 \kappa}
\int_0^t a(s)^3
Re (e^{2i \theta_s(\kappa)}-1) 
e^{i \beta \theta_s(\kappa)}
F(X_s) h_{\kappa, \beta}(X_s) ds
\nonumber
\\
&&
N_t^{(\beta)}(\kappa)
:=
\int_0^t a(s)^2
e^{i \beta \theta_s(\kappa)}
d \widetilde{M_s}^{(\beta)} (\kappa).
\nonumber
\end{eqnarray}
$I_t^{(\beta)}(\kappa)$
is easily seen to be convergent : 
$\lim_{t \to \infty}
I_t^{(\beta)}(\kappa) = I_{\infty}^{(\beta)}(\kappa)$, a.s..
Since 
\[
| \langle N^{(\beta)}, N^{(\beta)} \rangle_t|, 
\quad
| \langle N^{(\beta)}, \overline{N^{(\beta)}} \rangle_t |
\le
(const.) \int_0^t a^4(s) ds
< \infty.
\]
Re $N$, Im $N$
can be represented by the time-change of a Brownian motion and thus have limit a.s..\\
(3)
We consider 
$\delta^{(1)}_t(\kappa)$, 
$\delta^{(2)}_t(\kappa)$
separately.
For 
$\delta_t^{(1)}(\kappa)$, we have 
\begin{eqnarray}
& \delta^{(1)}_{nt} &(\kappa+\frac cn)
-
\delta^{(1)}_{nt} (\kappa)
=
a(nt)
\left(
e^{2i \theta_{nt}(\kappa+\frac cn)}
-
e^{2i \theta_{nt}(\kappa)}
\right)
g_{\kappa+\frac cn}(X_{nt})
\nonumber
\\
&&
-
\int_0^{nt} a'(s)
\left(
e^{2i \theta_s(\kappa+\frac cn)} - e^{2i\theta_s(\kappa)}
\right)
g_{\kappa+\frac cn}(X_s) ds
+ O(n^{-1})
\label{diff-delta-1}
\end{eqnarray}
by (\ref{delta-1}).
The second term of (\ref{diff-delta-1}) is 
$o(1)$
as 
$n \to \infty$
due to Lebesgue's dominated convergence theorem. 
For the first term, we note 
\begin{equation}
\max_{0 \le t \le M}
|e^{2i \theta_{t}(\kappa+\frac cn)} - 
e^{2i \theta_t(\kappa)}|
\le \frac {C_M}{n}
\label{M}
\end{equation}
for some positive constant 
$C_M$
depending on 
$M$, 
which follows from  
(\ref{r-eq})-(\ref{theta-kappa}).
We can then show that 
the first term of (\ref{diff-delta-1}) vanishes uniformly w.r.t. 
$t \in [0, T]$ 
so that 
$\max_{0 \le t \le T}{\bf E}[
|
\delta_{nt}^{(1)}(\kappa + \frac cn)
-
\delta_{nt}^{(1)}(\kappa)
|^2
]
\stackrel{n \to \infty}{\to}0$.
Similar argument shows
$\max_{0 \le t \le T} 
\left|
I^{(\beta)}_{nt}(\kappa+\frac cn)
-
I^{(\beta)}_{nt}(\kappa)
\right|
\to 0$
so that we have only to show 
\[
{\bf E}
\left[
\max_{0 \le t \le T}
\left|
N^{(\beta)}_{nt}(\kappa + \frac cn)
-
N^{(\beta)}_{nt}(\kappa)
\right|^2
\right]
\stackrel{n \to \infty}{\to} 0, 
\quad
\beta =2, 4
\]
to finish the proof of Lemma \ref{fourth}(3). 
By the 
martingale inequality,
\beq
&&
{\bf E}
\left[
\max_{0 \le t \le T}
\left|
N^{(\beta)}_{nt}(\kappa + \frac cn)
-
N^{(\beta)}_{nt}(\kappa)
\right|^2
\right]
\le 
C
{\bf E}\left[
\int_0^{nt}
a(s)^4
\left[
H_{\beta, \kappa}, \overline{H_{\beta, \kappa}}
\right]
ds
\right]
\\
&&
\mbox{ where }\quad
H_{\beta, \kappa}(s)
:=
e^{i \beta \theta_s(\kappa + \frac cn)}h_{\beta, \kappa + \frac cn}
-
e^{i \beta \theta_s(\kappa)} h_{\beta, \kappa}
\eeq
which converges to $0$
due to the fact that 
$\int_0^{\infty} a(s)^4 ds < \infty$
and 
Lebesgue's theorem.
\QED
\end{proof}
We assume in what follows
$
a(t) = t^{- 1/2} (1 + o(1)).
$

\begin{lemma}
\label{Theta}
\begin{eqnarray}
\Psi_t^{(n)}(c)
&=&
2ct + 
Re \;\epsilon_t^{(n)}
+
\frac {1}{ \kappa}
Re \;
V_t^{(n)}(c)
+
\frac {1}{ \kappa}
Re 
\left(
\delta_{nt}(\kappa+\frac cn) - \delta_{nt}(\kappa)
\right)
\quad
\label{Theta-eq}
\end{eqnarray}
for some 
$\epsilon_t^{(n)}$
satisfying 
\[
| \epsilon_t^{(n)} | \le 
Ct + 
C \sqrt{\frac tn}.
\]
\end{lemma}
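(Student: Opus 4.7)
My plan is to expand (\ref{keisan}) via Lemma \ref{fourth}(1), identify the pieces that match the named terms $2ct$, $V_t^{(n)}(c)$, and $\delta_{nt}(\kappa+c/n) - \delta_{nt}(\kappa)$ in (\ref{Theta-eq}), and collect everything else into $Re\,\epsilon_t^{(n)}$ together with its bound.

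First I would process term $II$ in (\ref{keisan}). Applying Lemma \ref{fourth}(1) separately at $\kappa' = \kappa + c/n$ and $\kappa' = \kappa$ to $\int_0^{nt} a(s) e^{2i\theta_s(\kappa')} F(X_s)\,ds$ and subtracting, the martingale-term difference gives precisely $V_t^{(n)}(c) = Y_{nt}(\kappa+c/n) - Y_{nt}(\kappa)$, the remainder difference gives precisely $\delta_{nt}(\kappa+c/n) - \delta_{nt}(\kappa)$, and the leftover is the non-oscillating deterministic piece
\[
-\frac{i}{2(\kappa+c/n)} \int_0^{nt} a(s)^2 F(X_s) g_{\kappa+c/n}(X_s)\,ds + \frac{i}{2\kappa} \int_0^{nt} a(s)^2 F(X_s) g_\kappa(X_s)\,ds.
\]
This leftover, together with the entire term $I$ of (\ref{keisan}), will constitute $Re\,\epsilon_t^{(n)}$ after dividing by $\kappa$.

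Next I would bound $\epsilon_t^{(n)}$. Term $I$ carries a prefactor $c/n$ and multiplies an integral of size $O(\sqrt{\log(nt)})$ under $a \sim s^{-1/2}$ (treating the constant $-1$ via Lemma \ref{partial integration}(2) with $b=a$, $\gamma=0$ and the oscillatory part via Lemma \ref{partial integration}(1)), so $I = O(\log(nt)/n)$. For the non-oscillating leftover, I would split
\[
\left(\frac{1}{2(\kappa+c/n)} - \frac{1}{2\kappa}\right) D_{nt}^{(2)}(\kappa+c/n) + \frac{1}{2\kappa}\Bigl(D_{nt}^{(2)}(\kappa+c/n) - D_{nt}^{(2)}(\kappa)\Bigr),
\]
use the resolvent identity $g_{\kappa+c/n} - g_\kappa = -(2ic/n)(L+2i(\kappa+c/n))^{-1} g_\kappa$, and apply the decomposition (\ref{decomposition}) together with $\int_0^\infty a^3\,ds, \int_0^\infty a^4\,ds < \infty$ to bound $D_{nt}^{(2)}$ uniformly and to control the $\kappa$-difference of $D_{nt}^{(2)}$ by $O(\sqrt{c/n})$ from the martingale part plus $O((c/n)\log(nt))$ from the deterministic part, exactly in the spirit of the estimates already carried out in the proof of Lemma \ref{fourth}(3). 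Since $\log(nt)/n \le Ct$ on $nt \ge 1$ and is trivially negligible otherwise, combining these bounds yields $|\epsilon_t^{(n)}| \le Ct + C\sqrt{t/n}$.

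The main obstacle is the bookkeeping: one must identify which pieces of the two Lemma \ref{fourth}(1) expansions combine into the explicitly named $V_t^{(n)}(c)$ and $\delta$-differences, and then verify that every remaining summand fits inside the envelope $Ct + C\sqrt{t/n}$. The essential cancellation making this possible is that every $\kappa$-derivative, coming either from $1/(\kappa+c/n) - 1/\kappa$ or from the resolvent identity for $g_{\kappa'}$, produces a factor $c/n$; this extra power of $1/n$ offsets the logarithmic growth $\int_0^{nt} a^2\,ds = O(\log(nt))$ characteristic of the critical case $\alpha = 1/2$.
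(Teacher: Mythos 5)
Your decomposition is exactly the paper's: apply Lemma \ref{fourth}(1) at $\kappa+\tfrac{c}{n}$ and at $\kappa$, subtract, and read off $V_t^{(n)}$, the $\delta$-difference, and the non-oscillating leftover (plus term $I$ of (\ref{keisan})) as $\epsilon_t^{(n)}$. The gap is in the bound. The lemma asserts a \emph{pathwise} estimate $|\epsilon_t^{(n)}|\le Ct+C\sqrt{t/n}$ with non-random $C$ — and it is used pathwise later (e.g.\ in Lemma \ref{a priori2}, the $C(T+\sqrt{T/n})$ piece is pulled outside the expectation before anything probabilistic happens). Your route for term $I$ via Lemma \ref{partial integration} produces a martingale remainder $\int_0^{nt}a\,dM_s$, whose size you quote as $O(\sqrt{\log(nt)})$; that is a second-moment statement, not an almost-sure bound, so it does not deliver the pathwise inequality in the statement. (There is also an internal slip: after integration by parts the dominant non-martingale piece is the $\int a^2$ term, size $O(\log nt)$, not $O(\sqrt{\log nt})$, so your intermediate claim and your conclusion $I=O(\log(nt)/n)$ do not match.)

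The paper avoids all of this by observing that no cancellation is needed. Every summand in $\epsilon_t^{(n)}$ already carries an explicit factor of order $1/n$: the $\tfrac{c/n}{\kappa(\kappa+c/n)}$ prefactors on the first two pieces, and $g_\kappa-g_{\kappa+c/n}=O(1/n)$ (your resolvent identity) on the third. Taking absolute values inside the integrals with $|e^{2i\theta}-1|\le 2$ and the $\sup$-norms of $F$, $g_\kappa$ then gives, deterministically,
\[
|\epsilon_t^{(n)}| \;\le\; \frac{C}{n}\int_0^{nt}a(s)\,ds,
\]
(the $a^2$-integrals are absorbed since $a$ is bounded). Splitting at $nt=1$ gives $\frac{C}{n}\int_0^{nt}a\le Ct$ when $nt\le 1$ and $\le C\sqrt{t/n}$ when $nt\ge 1$, hence the claimed bound. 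So your structural argument is right, but the estimation step should replace the oscillatory/martingale machinery with this brute-force $L^\infty$ bound, which is both shorter and actually proves what is asserted.
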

\begin{proof}
By 
Lemma \ref{fourth}(1)
we have (\ref{Theta-eq}) with 
\beq
\epsilon_t^{(n)}
&:=&
- \frac {
\frac cn
}{
 \kappa(\kappa+\frac cn)
}
\int_0^{nt}
(e^{2i \theta_s(\kappa)}-1)
a(s) F(X_s) ds
\\
&& \qquad + 
\frac {1}{ \kappa}
\Biggl\{
\frac i2 \cdot 
\frac {
\frac cn
}
{
\kappa ( \kappa + \frac cn)
}
\int_0^{nt}
a(s)^2 g_{\kappa+\frac cn}(X_s) F(X_s) ds
\\
&& \qquad 
+\frac {i}{ 2\kappa}
\int_0^{nt}
a(s)^2
\left(
g_{\kappa}(X_s) - g_{\kappa+\frac cn}(X_s)
\right)
F(X_s) ds
\Biggr\}.
\eeq
It then suffices to see
$
| \epsilon_t^{(n)} |
\le 
\frac Cn \int_0^{nt} a(s) ds
\le
Ct +  
C\sqrt{\frac tn}.
$
\QED
\end{proof}
\begin{lemma}
\label{a priori}
\beq
{\bf E}[ | \Psi_t^{(n)}(c ) |]
& \le &
C
\left(
t + \sqrt{ \frac tn } + \frac {1}{\sqrt{n}}
\right), 
\quad
t \ge 0, 
\;
n >0. 
\eeq
\end{lemma}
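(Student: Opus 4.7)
The plan is to invoke the decomposition from Lemma \ref{Theta},
\[
\Psi_t^{(n)}(c) = 2ct + \mathrm{Re}\,\epsilon_t^{(n)} + \frac{1}{\kappa}\mathrm{Re}\,V_t^{(n)}(c) + \frac{1}{\kappa}\mathrm{Re}\bigl(\delta_{nt}(\kappa + c/n) - \delta_{nt}(\kappa)\bigr),
\]
take absolute values, pass to expectation, and bound each of the four summands in $L^1$. The first term contributes $2|c|\,t$, and the deterministic error already satisfies $|\epsilon_t^{(n)}| \le Ct + C\sqrt{t/n}$ by Lemma \ref{Theta}. The problem thus reduces to showing ${\bf E}|V_t^{(n)}(c)| \le Ct$ and ${\bf E}|\delta_{nt}(\kappa + c/n) - \delta_{nt}(\kappa)| \le C/\sqrt{n}$.

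For the martingale term $V_t^{(n)}(c) = Y_{nt}(\kappa + c/n) - Y_{nt}(\kappa)$ I would express both $Y$'s as It\^o integrals against the same Brownian motion $(B_s)$ on $M$ underlying $(X_s)$, giving
\[
V_t^{(n)}(c) = \int_0^{nt} a(s)\bigl[ e^{2i\theta_s(\kappa+c/n)}\nabla g_{\kappa+c/n}(X_s) - e^{2i\theta_s(\kappa)}\nabla g_\kappa(X_s) \bigr]\cdot dB_s.
\]
Writing the bracket as $e^{2i\theta_s(\kappa+c/n)}(\nabla g_{\kappa+c/n}(X_s) - \nabla g_\kappa(X_s)) + (e^{2i\theta_s(\kappa+c/n)} - e^{2i\theta_s(\kappa)})\nabla g_\kappa(X_s)$ separates a resolvent-difference piece carrying the pointwise factor $\|g_{\kappa+c/n} - g_\kappa\|_{C^1} = O(1/n)$ from a phase-difference piece dominated by $2 \wedge 2|\theta_s(\kappa+c/n) - \theta_s(\kappa)|$. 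After It\^o's isometry, splitting the $s$-integration at $s = n$ and using ${\bf E}|\theta_s(\kappa+c/n) - \theta_s(\kappa)|^2 \le C(s/n)^2$ on $s \le n$ together with $a(s)^2 = O(1/s)$ for large $s$ will yield ${\bf E}|V_t^{(n)}(c)|^2 \le Ct^2$, which gives the desired $O(t)$ bound by Cauchy--Schwarz.

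The $\delta$-difference is handled via the decomposition $\delta = \delta^{(1)} + \delta^{(2)}$ from (\ref{delta}), further split by $\delta^{(2)} = -\frac{i}{2\kappa}D^{(4)} + \frac{i}{\kappa}D^{(2)}$ in (\ref{delta2}) and $D^{(\beta)} = I^{(\beta)} + N^{(\beta)}$ in (\ref{decomposition}). Telescoping $e^{2i\theta_s(\cdot)}g_{(\cdot)}(X_s)$ in $\kappa$ inside each non-martingale piece $\delta^{(1)}$, $I^{(\beta)}$ reproduces the same two small factors as in $V$; the gain over Lemma \ref{fourth}(3) is quantitative, since the extra decay of $a'(s) = O(s^{-3/2})$ (respectively $a(s)^2 = O(1/s)$) combined with the splitting at $s = n$ delivers an $L^1$-bound of order $1/\sqrt{n}$. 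The martingale pieces $N^{(\beta)}(\kappa+c/n) - N^{(\beta)}(\kappa)$ are controlled by It\^o's isometry as in Lemma \ref{fourth}(3), but now made quantitative using $\int_0^\infty a(s)^4\,ds < \infty$, producing ${\bf E}|N^{(\beta)}(\kappa+c/n) - N^{(\beta)}(\kappa)|^2 \le C/n$.

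The hard part is the quantitative control of the phase oscillation $|e^{2i\theta_s(\kappa+c/n)} - e^{2i\theta_s(\kappa)}|$ in the $V$-estimate. In the critical regime $\alpha = 1/2$ the derivative $\partial_\kappa \theta_s$ is not uniformly asymptotically linear, so a pointwise Lipschitz bound fails on $s \ge n$ while the trivial bound $2$ is too crude on $s \le n$. The decisive step is to combine the two through the splitting at $s = n$: because $a(s)^2 \sim 1/s$ is exactly on the borderline of integrability, the contributions from the two $s$-regimes conspire to give ${\bf E}|V_t^{(n)}(c)|^2 \le Ct^2$ with no spurious logarithmic factor, and the four bounds then assemble into $C(t + \sqrt{t/n} + 1/\sqrt{n})$.
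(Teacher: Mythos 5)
The plan — take absolute values in the decomposition \eqref{Theta-eq} and bound the four pieces separately in $L^1$ — diverges from the paper's proof at exactly the step where that proof does something clever, and the divergence creates a genuine gap.

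The paper never estimates ${\bf E}|V_t^{(n)}(c)|$. Instead, it notes that for $c\ge 0$ the monotonicity of $\theta_t(\kappa)$ in $\kappa$ gives $\Psi_t^{(n)}(c)\ge 0$, so ${\bf E}[|\Psi_t^{(n)}(c)|]={\bf E}[\Psi_t^{(n)}(c)]$; since $V_t^{(n)}(c)$ is a martingale starting at $0$, its expectation vanishes, and the martingale term simply disappears from the inequality. What remains is ${\bf E}[\delta_{nt}(\kappa+c/n)-\delta_{nt}(\kappa)]$ (expectation, not $L^1$-norm), which is then bounded by $Ca(nt)\,{\bf E}|\Lambda_t^{(n)}(c)|+\int_0^{nt}{\bf E}|\Lambda_{s/n}^{(n)}(c)|\,b(s)\,ds+C/n$ using that $b$ is integrable. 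Because $|\Lambda_{s/n}^{(n)}(c)|\le|\Psi_{s/n}^{(n)}(c)|$, the estimate is self-referential, and the proof is closed by Gronwall after absorbing the $Ca(M)\,{\bf E}|\Psi_t^{(n)}|$ term into the left side. Your plan skips both of these devices: the sign/martingale cancellation and the Gronwall closure.

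Without them, the two $L^1$ bounds you aim for cannot be established by the route you sketch. For the martingale term you need ${\bf E}\bigl|\theta_s(\kappa+c/n)-\theta_s(\kappa)\bigr|^2\le C(s/n)^2$ with a constant \emph{uniform} for $s$ up to $n$. The only estimate of this type in the paper is \eqref{M}, which is an a.s.\ bound with a constant $C_M$ that depends on $M$ and is never shown to grow only linearly; in the critical regime the Pr\"ufer radius has log-variance $\sim\log t$, so $\partial_\kappa\theta_s$ has random multiplicative fluctuations that make such a clean uniform $L^2$ bound non-obvious. If instead you fall back on the elementary bound $|e^{2i\theta_s(\kappa+c/n)}-e^{2i\theta_s(\kappa)}|\le|\Psi_{s/n}^{(n)}(c)|$, It\^o's isometry turns the quantity into $\int_0^{nt}a(s)^2{\bf E}|\Psi_{s/n}^{(n)}(c)|^2\,ds$, which requires a \emph{second}-moment estimate on $\Psi^{(n)}$ — circular, and moreover a moment not supplied by the lemma you are trying to prove. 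Precisely the same circularity arises in your proposed $O(1/\sqrt n)$ bound on ${\bf E}|\delta_{nt}(\kappa+c/n)-\delta_{nt}(\kappa)|$: the integrand contains $\Lambda_{s/n}^{(n)}(c)$, and the integral $\int_M^{nt}b(s)\,{\bf E}|\Lambda_{s/n}^{(n)}(c)|\,ds$ does \emph{not} go to zero as $n\to\infty$; it is exactly the term that the paper feeds back into the Gronwall inequality, not a small remainder.
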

\begin{proof}
We decompose 
$\delta_t(\kappa)$
as is done in 
(\ref{delta})
to estimate 
$\delta_t(\kappa)$
further.
Let 
\[
\Lambda_t^{(n)}(c )
:=
e^{2i \theta_{nt}(\kappa + \frac cn)}
-
e^{2i \theta_{nt}(\kappa)}
\]
then
\begin{eqnarray*}
&&
\delta^{(1)}_{nt}(\kappa + \frac cn)
-
\delta^{(1)}_{nt}(\kappa)
=
\Lambda_t^{(n)}( c) a(nt) g_{\kappa + \frac cn}(X_{nt})
\nonumber
\\
&&\qquad\qquad\qquad
-
\int_0^{nt} 
a'(s) g_{\kappa + \frac cn}(X_s)
\Lambda_{s/n}^{(n)}(c ) ds
+
O(n^{-1}).
\end{eqnarray*}
$\delta^{(2)}_t$
is also decomposed, 
as in 
(\ref{delta2}), (\ref{decomposition}).
The 
$I^{(\beta)}_{nt}$-term can be written as  
\begin{eqnarray*}
&&
I_{nt}^{(\beta)}(\kappa + \frac cn)
-
I_{nt}^{(\beta)}(\kappa)
=
a(nt)^2 h_{\kappa, n}^{(\beta)}(nt)
\Lambda_t^{(n)}(c )
\nonumber
\\
&& \;
-
\int_0^{nt} (a(s)^2)' f_{\kappa, n}^{(\beta)}(s)
\Lambda_{s/n}^{(n)}(c )
ds
-
\int_0^{nt} a(s)^3 g_{\kappa, n}^{(\beta)}(s)
\Lambda_{s/n}^{(n)}(c )
ds
\end{eqnarray*}
for some bounded functions 
$f_{\kappa, n}^{(\beta)}, g_{\kappa, n}^{(\beta)}, h_{\kappa, n}^{(\beta)}$.
Putting together 
we have 
\beq
\delta_{nt}(\kappa + \frac cn) &-& \delta_{nt}(\kappa)
=
\Lambda_t^{(n)}(c )
\Bigl(
a(nt) g_{\kappa + \frac cn}(X_{nt})
+
a(nt)^2 
h_{\kappa, n}(nt)
\Bigr)
\\
&& + 
\int_0^{nt}
\Lambda_{s/n}^{(n)}(c )
b_{\kappa, n}(s)
ds
 + 
N_{nt}(\kappa + \frac cn) - N_{nt}(\kappa)
+
O(n^{-1})
\eeq
for some bounded functions 
$h_{\kappa, n}, b_{\kappa, n}$
and a martingale
$N_t$.
$b_{\kappa, n}(s)$
is a linear combination of 
$a'(s) g_{\kappa + \frac cn}$,
$( a(s)^2 )' f_{\kappa, n}^{(\beta)}$, 
and 
$a(s)^3 g_{\kappa, n}^{(\beta)}$, 
so that it is integrable : 
$\int_0^{\infty} b_{\kappa, n}(s) ds < \infty$. 
Taking expectations, 
the martingale terms vanish and it follows that 
\beq
{\bf E} \left[
\delta_{nt}(\kappa + \frac cn) - \delta_{nt}(\kappa)
\right]
&=&
{\bf E} \left[
\Lambda_t^{(n)}(c )
\Bigl(
a(nt) g_{\kappa + \frac cn}(X_{nt})
+
a(nt)^2 
h_{\kappa, n}(nt)
\Bigr)
\right]
\\
&& + 
\int_0^{nt}
{\bf E}\left[
\Lambda_{s/n}^{(n)}(c )
b_{\kappa, n}(s)
\right]
ds
+
O(n^{-1}).
\eeq
Therefore 
we can find a non-random function
\[
b(s) = C(a'(s) + (a(s)^2)' + a(s)^3)
\]
for some 
$C>0$ 
such that 
$\int_0^{\infty} b(s) ds < \infty$
and
\beq
&&
\left|
{\bf E} \left[
\delta_{nt}(\kappa + \frac cn) - \delta_{nt}(\kappa)
\right]
\right|
\le 
C a(nt)
{\bf E}[ | \Lambda_t^{(n)}(c ) | ]
+
\int_0^{nt} 
{\bf E}[ | \Lambda_{s/n}^{(n)}(c ) | ] b(s) ds
+
\frac Cn.
\eeq
Here without loss of generality, we may suppose 
$c \ge 0$. 
We use 
$\Psi_t^{(n)}(c ) \ge 0$
for 
$c \ge 0$
and take expectation in 
(\ref{Theta-eq}).
\beq
&&
{\bf E}[ | \Psi_t^{(n)}(c ) |]
=
{\bf E}[ \Psi_t^{(n)}(c ) ]
\\
&=&
2ct + 
{\bf E}[ Re \;\epsilon_t^{(n)} ] 
+
\frac {1}{ \kappa} {\bf E}\left[ 
Re \left(
\delta_{nt}(\kappa + \frac cn) - \delta_{nt}(\kappa) 
\right)
\right]
\\
& \le &
Ct + C\sqrt{\frac tn}
+ C a(nt) 
{\bf E}[ | \Lambda_t^{(n)}(c ) |]
%
+
C \int_0^{nt}
{\bf E} \left[
\left|
\Lambda_{s/n}^{(n)}(c )
\right|
\right] b(s) ds
+
\frac Cn.
\eeq
Let 
\[
\rho_n (t) := C \left(
t + \sqrt{\frac tn} + \frac 1n
\right).
\]
Since 
$
| \Lambda_t^{(n)}(c ) |
\le 
|\Psi_t^{(n)}(c )|
$ 
we have
\beq
&&
{\bf E}\left[ 
\left| 
\Psi_t^{(n)}(c ) 
\right| \right]
%
\le
\rho_n (t) + 
C a(nt) {\bf E}\left[ 
\left| 
\Psi_t^{(n)}(c ) 
\right| \right]
+
C \int_0^{nt}
{\bf E}\left[ 
\left| 
\Psi_{s/n}^{(n)}(c ) 
\right| \right]
b(s) ds.
\eeq
Fix 
$M>0$
arbitrary.
We may suppose 
$nt > M$
since otherwise 
Lemma \ref{a priori}
holds true by 
(\ref{M}).
(\ref{M})
also implies 
$\int_0^M 
{\bf E}\left[ \left|
\Psi_{s/n}^{(n)}
\right| \right] b(s) ds
\le 
\frac Cn$
which gives us
\beq
{\bf E}\left[ 
\left| 
\Psi_t^{(n)}(c ) 
\right| \right]
%
&\le&
\rho_n (t) +
C a(M) {\bf E}\left[ 
\left| 
\Psi_t^{(n)}(c ) 
\right| \right]
\\
&&
\qquad
+
C \int_M^{nt} 
{\bf E}\left[ 
\left| 
\Psi_{s/n}^{(n)}(c ) 
\right| \right] b(s) ds
+
\frac Cn.
\eeq
Take 
$M$
large enough such that 
$C a(M) < 1$
and renew the positive constant 
$C$
in the definition of 
$\rho_n(t)$. 
Then we have
\beq
{\bf E}\left[ 
\left| 
\Psi_t^{(n)}(c ) 
\right| \right]
\le
\rho_n(t)
+
C \int_{M/n}^t
{\bf E}\left[ 
\left| 
\Psi_s^{(n)}(c ) 
\right| \right] 
n b(ns) ds.
\eeq
By
Grownwall's inequality, 
\beq
&&
{\bf E}\left[ 
\left| 
\Psi_t^{(n)}(c ) 
\right| \right]
%
\le
\rho_n (t)
+
C \int_{M/n}^t 
\rho_n (s) n b(ns)
\exp \left(
C \int_s^t n b(nu) du 
\right)
ds.
\eeq
Since 
$b$
is integrable, 
$\exp \left(
C \int_s^t n b(nu) du 
\right)$
is bounded so that 
\begin{equation}
{\bf E}\left[ 
\left| 
\Psi_t^{(n)}(c ) 
\right| \right]
 \le 
\rho_n (t)
+
C \int_{M/n}^t 
\rho_n (s) n b(ns) ds.
\label{estimate-Theta}
\end{equation}
Substituting 
\beq
\int_{M/n}^t \rho_n (s) n b(ns) ds
&=& C
\int_M^{nt}
\left(
\frac sn + \sqrt{\frac {s}{n^2}} + \frac 1n
\right)
b(s) ds
%
\le 
\frac {C}{\sqrt{n}}
\eeq
into
(\ref{estimate-Theta})
yields the conclusion.
\QED
\end{proof}
\begin{lemma}
\label{square-integrable}
For 
$t > 0$, 
we have
\beq
{\bf E}[ \langle V^{(n)}(c ), 
\overline{V^{(n)}(c )} \rangle_t ]
\le
Ct + o(1)
\eeq
as
$n \to \infty$. 
In particular, 
$\sup_n
{\bf E}[ \langle V^{(n)}(c ), 
\overline{V^{(n)}(c )} \rangle_t ]<\infty$. 
\end{lemma}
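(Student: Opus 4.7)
The plan is to exploit the fact that $M_s(\kappa)$ is the martingale part of the complex-valued process $g_\kappa(X_s)=((L+2i\kappa)^{-1}F)(X_s)$, so that $Y_{nt}(\kappa)$ becomes a stochastic integral against the Brownian motion $B$ on $M$ with integrand $a(s)e^{2i\theta_s(\kappa)}\nabla g_\kappa(X_s)$. Consequently
$$\langle V^{(n)}(c),\overline{V^{(n)}(c)}\rangle_t=\int_0^{nt}a(s)^2\bigl|e^{2i\theta_s(\kappa+c/n)}\nabla g_{\kappa+c/n}(X_s)-e^{2i\theta_s(\kappa)}\nabla g_\kappa(X_s)\bigr|^2ds,$$
and the task reduces to estimating this Lebesgue integral in expectation. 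First I would split the bracketed integrand as
$$(e^{2i\theta_s(\kappa+c/n)}-e^{2i\theta_s(\kappa)})\nabla g_{\kappa+c/n}(X_s)+e^{2i\theta_s(\kappa)}(\nabla g_{\kappa+c/n}-\nabla g_\kappa)(X_s),$$
so that $|\cdot|^2\le 2|\Lambda_{s/n}^{(n)}(c)|^2|\nabla g_{\kappa+c/n}|^2+2|\nabla g_{\kappa+c/n}-\nabla g_\kappa|^2$.

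For the \textbf{resolvent difference} the identity $(L+2i(\kappa+c/n))^{-1}-(L+2i\kappa)^{-1}=-(2ic/n)(L+2i(\kappa+c/n))^{-1}(L+2i\kappa)^{-1}$ together with smoothness of $F$ on the compact manifold $M$ gives $\sup_M|\nabla g_{\kappa+c/n}-\nabla g_\kappa|=O(1/n)$; since $a(s)^2\sim 1/s$ implies $\int_0^{nt}a(s)^2ds=O(\log(nt))$, this whole contribution is $O(n^{-2}\log(nt))=o(1)$. For the \textbf{phase difference term}, the key observation is that, because $|e^{ix}-1|\le\min(|x|,2)$, we have the linear bound
$$|\Lambda_{s/n}^{(n)}(c)|^2=|e^{i\Psi_{s/n}^{(n)}(c)}-1|^2\le 2|\Psi_{s/n}^{(n)}(c)|,$$
so that Lemma \ref{a priori} yields ${\bf E}|\Lambda_{s/n}^{(n)}(c)|^2\le C(s/n+\sqrt{s}/n+1/\sqrt n)$.

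Combining these with the boundedness of $\nabla g_{\kappa+c/n}$ on $M$ and Fubini, one obtains
$${\bf E}\langle V^{(n)}(c),\overline{V^{(n)}(c)}\rangle_t\le C\int_0^{nt}a(s)^2\Bigl(\tfrac{s}{n}+\tfrac{\sqrt s}{n}+\tfrac{1}{\sqrt n}\Bigr)ds+o(1).$$
Using $a(s)^2=(1+o(1))/s$ at infinity (and $a$ bounded near $0$), the three summands evaluate to $t(1+o(1))$, $O(\sqrt{t/n})$ and $O(n^{-1/2}\log(nt))$ respectively, giving $Ct+o(1)$ as required.

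The main obstacle is Step~4: the naive inequality $|\Lambda|^2\le|\Psi|^2$ would force one to control a second moment of $\Psi_t^{(n)}(c)$, which Lemma \ref{a priori} does not provide. The trick of using the boundedness of $|\Lambda|$ to degrade the quadratic bound to the linear one $|\Lambda|^2\le 2|\Psi|$ is precisely what makes the first-moment estimate from Lemma \ref{a priori} sufficient. The rest of the argument is the calculus of integrating $a(s)^2$ against polynomial weights in $s$.
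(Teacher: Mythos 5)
Your argument is correct and follows essentially the same route as the paper: compute the bracket of the martingale $V^{(n)}$, reduce to the scalar integrand $|e^{i\Psi^{(n)}_{s/n}(c)}-1|^2$, and then apply the linear bound $|e^{ix}-1|^2=2(1-\cos x)\le 2|x|$ together with Lemma~\ref{a priori} before integrating $a(s)^2$ against the polynomial weights. You have also correctly identified the crux --- the first-moment estimate in Lemma~\ref{a priori} only suffices because of the degradation from $|\Lambda|^2\le|\Psi|^2$ to $|\Lambda|^2\le 2|\Psi|$, and this is exactly the step the paper takes (``${\bf E}[|e^{i\Psi^{(n)}_s(c)}-1|^2]\le C\,{\bf E}[|\Psi^{(n)}_s(c)|]$''). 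The one cosmetic difference: the paper first uses Lemma~\ref{partial integration}(2) to replace $[g_\kappa,\overline{g_\kappa}](X_s)$ by the spatial average $\langle[g_\kappa,\overline{g_\kappa}]\rangle$ with an $o(1)$ error (an identity), whereas you simply bound $|\nabla g_{\kappa+c/n}|^2$ by its sup on the compact manifold (an inequality) and handle the $g_{\kappa+c/n}$ vs.\ $g_\kappa$ mismatch separately via the resolvent identity; both are sound and yield the same $Ct+o(1)$ upper bound, which is all that is needed here.
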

\begin{proof}
A straightforward computation 
using Lemma \ref{partial integration}(2)
yields
\beq
\langle V^{(n)}(c), \overline{V^{(n)}(c)} \rangle_t
&=&
\int_0^{nt}
a(s)^2
\left|
e^{2i (\theta_s(\kappa+\frac cn) - \theta_s(\kappa))}-1
\right|^2
[g_{\kappa}, \overline{g_{\kappa}}](X_s) ds
+ o(1)
\\
&=&
\langle 
[g_{\kappa}, \overline{g_{\kappa}}]
\rangle
\int_0^{nt}
a(s)^2
\left|
e^{2i (\theta_s(\kappa+\frac cn) - \theta_s(\kappa))}-1
\right|^2
ds  
+ o(1)
\eeq
as 
$n \to \infty$. 
We take expectations and use 
Lemma \ref{a priori}. 
\beq
{\bf E}[ \langle V^{(n)}(c ), 
\overline{V^{(n)}(c )} \rangle_t ]
&=&
C n \int_0^t a(ns)^2 
{\bf E}\left[
| e^{i \Psi_s^{(n)}( c)} - 1 |^2
\right]
ds 
+ o(1)
\\
& \le &
C n  \int_0^t a(ns)^2 
{\bf E} \left[ \left|
\Psi_s^{(n)}( c)
\right| \right] ds 
+ o(1)
\\
&\le&
C\left(
t + \sqrt{ \frac tn } + \frac {\log (nt)}{\sqrt{n}}
\right)
+ o(1).
\eeq
\QED
\end{proof}
\begin{lemma}
\label{a priori2}
For each 
$c > 0$, $T > 0$
fixed we have 
\beq
&&
{\bf E}
\left[ 
\sup_{0 \le t \le T}
\Psi_t^{(n)}(c)
\right]
\\
&\le&
C \left(
T + \sqrt{ \frac Tn }
\right)
+
C T^{\frac 12} + o(1)
+ C
{\bf E} \left[
\max_{0 \le t \le T}
| \delta_{nt}(\kappa + \frac cn) - \delta_{nt}(\kappa) |
\right].
\eeq
as $n \to \infty$. 
\end{lemma}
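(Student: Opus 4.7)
The starting point is the decomposition (\ref{Theta-eq}) from Lemma \ref{Theta}, namely
\begin{equation*}
\Psi_t^{(n)}(c)=2ct+Re\,\epsilon_t^{(n)}+\frac{1}{\kappa}Re\,V_t^{(n)}(c)+\frac{1}{\kappa}Re\bigl(\delta_{nt}(\kappa+c/n)-\delta_{nt}(\kappa)\bigr).
\end{equation*}
Since $\theta_t(\kappa)$ is increasing in $\kappa$ and $c>0$, one has $\Psi_t^{(n)}(c)\ge 0$, so passing to absolute values in each summand is harmless and we may estimate $\sup_{0\le t\le T}\Psi_t^{(n)}(c)$ term by term. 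The drift $2ct\le 2cT$ and the deterministic bound $|\epsilon_t^{(n)}|\le Ct+C\sqrt{t/n}$ from Lemma \ref{Theta} contribute $C(T+\sqrt{T/n})$ after taking $\sup$ and expectation, while the $\delta$-term is retained in the unevaluated form that appears on the right-hand side of the claim.

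The core estimate is ${\bf E}[\sup_{0\le t\le T}|V_t^{(n)}(c)|]\le CT^{1/2}+o(1)$. The key observation is that $t\mapsto V_t^{(n)}(c)=Y_{nt}(\kappa+c/n)-Y_{nt}(\kappa)$ is a complex $L^2$-martingale, since each $Y_{\cdot}(\kappa)$ is a stochastic integral of a bounded adapted integrand against the complex martingale $M(\kappa)$, so the difference is again a martingale under the natural time-changed filtration. Applying Doob's $L^2$ maximal inequality to the real and imaginary parts separately yields
\begin{equation*}
{\bf E}\Bigl[\sup_{0\le t\le T}|V_t^{(n)}(c)|^2\Bigr]\le 4\,{\bf E}\bigl[|V_T^{(n)}(c)|^2\bigr]=4\,{\bf E}\bigl[\langle V^{(n)}(c),\overline{V^{(n)}(c)}\rangle_T\bigr],
\end{equation*}
which by Lemma \ref{square-integrable} is bounded by $CT+o(1)$. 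A Cauchy--Schwarz step combined with the elementary inequality $\sqrt{a+b}\le\sqrt{a}+\sqrt{b}$ then supplies ${\bf E}[\sup_{0\le t\le T}|V_t^{(n)}(c)|]\le CT^{1/2}+o(1)$, which accounts for the second and third summands in the statement.

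Collecting the four bounds and absorbing constants proves the inequality. The only non-mechanical step is verifying that $V^{(n)}$ is genuinely a square-integrable martingale with quadratic variation controlled by Lemma \ref{square-integrable}; this is the main obstacle, but once the martingale structure is identified, Doob's inequality supplies the correct dependence on $T$ and $n$ in the right form, and the remainder is bookkeeping.
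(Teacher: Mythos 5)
Your argument coincides with the paper's proof. The paper's proof of Lemma \ref{a priori2} is precisely this: starting from (\ref{Theta-eq}), the drift and $\epsilon$-terms are handled deterministically via the bound in Lemma \ref{Theta}, the $\delta$-term is carried along, and the $V$-term is controlled by the $L^2$ martingale maximal inequality together with Lemma \ref{square-integrable}, giving ${\bf E}[\sup_{0\le t\le T}|V_t^{(n)}|]\le C\,{\bf E}[|V_T^{(n)}|^2]^{1/2}\le C(T+o(1))^{1/2}$. Your exposition merely spells out the Doob--Cauchy--Schwarz step and the verification that $V^{(n)}$ is a square-integrable martingale, which the paper treats as routine; there is no substantive difference in the approach.
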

\begin{proof}
We estimate the third term of 
(\ref{Theta-eq})
by the martingale inequality and 
use Lemma \ref{square-integrable} : 
%
$
{\bf E}\left[
\sup_{0 \le t \le T}
| V_t^{(n)}(\kappa) |
\right]
 \le 
C
{\bf E}\left[
| V_T^{(n)}(\kappa) |^2
\right]^{1/2}
%
\le
C(T + o(1))^{\frac 12}$.
%
\QED
\end{proof}
\begin{lemma}
\label{martingale}
For each 
$0 < t_0 < t_1 < \infty$, 
we can find 
$C = C (t_0, t_1)$
such that for large 
$n$, 
we have 
\beq
{\bf E}
\left[
\left|
V_t^{(n)}(c ) - V_s^{(n)}(c )
\right|^4
\right]
\le
C(t-s)^2
\eeq
for any 
$s,t \in [t_0, t_1]$. 
\end{lemma}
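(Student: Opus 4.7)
The plan is to apply the Burkholder--Davis--Gundy (BDG) inequality at exponent $p=4$ and then bound the resulting quadratic-variation increment pathwise by reusing the computation from the proof of Lemma \ref{square-integrable}.

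Since $V^{(n)}(c)$ is a square-integrable continuous complex martingale in $t$ by Lemma \ref{square-integrable}, BDG yields
\[
{\bf E}\!\left[|V_t^{(n)}(c) - V_s^{(n)}(c)|^4\right] \le C\,{\bf E}\!\left[\Bigl(\langle V^{(n)}(c), \overline{V^{(n)}(c)}\rangle_t - \langle V^{(n)}(c), \overline{V^{(n)}(c)}\rangle_s\Bigr)^2\right].
\]
Going back to the first identity in the proof of Lemma \ref{square-integrable}, but stopping \emph{before} the partial-integration step that averages $[g_\kappa,\overline{g_\kappa}](X_u)$ over $M$, one obtains a pathwise expression for the bracket in terms of an integral involving $|e^{i\Psi_{u/n}^{(n)}(c)} - 1|^2$ and $[g_\kappa,\overline{g_\kappa}](X_u)$, plus an error of order $1/n$ coming from $\|g_{\kappa+c/n} - g_\kappa\|_{C^1(M)} = O(1/n)$. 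Using the trivial bound $|e^{i\phi} - 1|^2 \le 4$ together with the boundedness of $[g_\kappa,\overline{g_\kappa}]$ on the compact manifold $M$, this gives the pathwise estimate
\[
\langle V^{(n)}(c), \overline{V^{(n)}(c)}\rangle_t - \langle V^{(n)}(c), \overline{V^{(n)}(c)}\rangle_s \le C \left(1 + \frac{1}{n^2}\right) \int_{ns}^{nt} a(u)^2\,du.
\]

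Under assumption (C) one has $a(u)^2 = u^{-1}(1 + o(1))$, so the elementary inequality $\log(1+x) \le x$ gives, for $s,t \in [t_0, t_1]$,
\[
\int_{ns}^{nt} a(u)^2\,du \le (1 + o(1))\log(t/s) \le \frac{(1 + o(1))(t-s)}{t_0}.
\]
Hence for $n$ large the bracket increment is pathwise at most $C(t_0, t_1)(t - s)$, and squaring and taking expectation in the BDG estimate above yields ${\bf E}[|V_t^{(n)}(c) - V_s^{(n)}(c)|^4] \le C(t_0, t_1)(t-s)^2$, as claimed.

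The argument is essentially routine given the tools already developed in Lemma \ref{square-integrable}; the one step that deserves care is to keep the \emph{pathwise} form of the quadratic variation (rather than the in-expectation form used in Lemma \ref{square-integrable}), so that the key linear-in-$(t-s)$ estimate $\log(t/s) \le (t-s)/t_0$ survives being squared and is not absorbed into the spatial-averaging $o(1)$.
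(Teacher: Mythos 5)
Your proposal is correct and follows essentially the same route as the paper: apply BDG (the paper calls it the ``martingale inequality'') to reduce to the expectation of the squared quadratic-variation increment, bound that increment pathwise by the deterministic quantity $C\int_{ns}^{nt} a(u)^2\,du$ using boundedness of the carr\'e-du-champ term on the compact manifold, and then use $a(u)^2 \sim u^{-1}$ with $\log(t/s)\le (t-s)/t_0$ for $s,t\ge t_0$. The paper simply keeps the exact bracket $\bigl[e^{2i\theta_u(\kappa+c/n)}g_{\kappa+c/n}-e^{2i\theta_u(\kappa)}g_\kappa,\ \overline{\cdots}\,\bigr]$ and bounds it by a constant rather than first peeling off the $|e^{i\Psi}-1|^2$ factor and the $O(1/n)$ correction, but this is a cosmetic difference; your observation that one must keep the pathwise deterministic bound on the bracket (not just the in-expectation estimate of Lemma \ref{square-integrable}) before squaring is exactly the point that makes the argument go through.
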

\begin{proof}
By 
martingale inequality, 
\beq
&&
{\bf E}
\left[
\left|
V_t^{(n)}(c )
-
V_s^{(n)}(c )\right|^4
\right]
\le
C
{\bf E}
\left[
\left|
V_t^{(n)}(c )
-
V_s^{(n)}(c )
\right|^2
\right]^2
\\
& \le &
C
{\bf E}
\left[
\int_{ns}^{nt} 
a(u)^2
\left[
G_{\kappa}(u), \overline{G_{\kappa}(u)}
\right]
(X_u) du
\right]^2
\\
& \le &
C
\left(
\int_{ns}^{nt} a(u)^2 du
\right)^2.
\\
&&\mbox{ where }
G_{\kappa}(u)
:=
e^{2i \theta_u(\kappa+\frac cn)}g_{\kappa+\frac cn}
-
e^{2i \theta_u(\kappa)}g_{\kappa}.
\eeq
We can find 
$N=N(t_0)$
such that for 
$n \ge N$
\beq
C
\left(
\int_{ns}^{nt} a(u)^2 du
\right)^2
& \le & C
\log
\left(
1 + \frac {t-s}{t_0}
\right)^2
\le C (t-s)^2.
\eeq
\QED
\end{proof}
\subsection{Tightness of $\Psi$}
\begin{lemma}
\label{tightness}
For any 
$c = (c_1, c_2, \cdots, c_m) \in {\bf R}^m$, 
the sequence of 
${\bf R}^m$-valued process
$\{ \Psi_t^{(n)}(c) \}_{n \ge 1}
=
\{ (
\Psi_t^{(n)}(c_1), \cdots, \Psi_t^{(n)}(c_m)
)\}_{n \ge 1}
$
is tight as a family in 
$C([0,T] \to {\bf R}^m)$.
\end{lemma}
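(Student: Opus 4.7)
The plan is to apply the Kolmogorov-Chentsov tightness criterion componentwise, exploiting the decomposition from Lemma \ref{Theta}:
\[
\Psi_t^{(n)}(c)=2ct+Re\,\epsilon_t^{(n)}+\frac{1}{\kappa}Re\,V_t^{(n)}(c)+\frac{1}{\kappa}Re\bigl(\delta_{nt}(\kappa+c/n)-\delta_{nt}(\kappa)\bigr).
\]
Tightness of $\{\Psi_t^{(n)}(c)\}_n$ in $C([0,T]\to{\bf R}^m)$ reduces to tightness of each scalar coordinate $\{\Psi_t^{(n)}(c_i)\}_n$, since a finite product of tight families is tight in the product space, and tightness in $C([0,T])$ is preserved under sums of random elements (the sum of two compact sets is compact). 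It therefore suffices to prove tightness of each of the four summands above, for each fixed $c=c_i$.

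The drift $2ct$ is trivially tight. For the error $\epsilon_t^{(n)}$, rerunning the argument at the end of the proof of Lemma \ref{Theta} on the shifted interval $[ns,nt]$ in place of $[0,nt]$ produces the pathwise bound $|\epsilon_t^{(n)}-\epsilon_s^{(n)}|\le C(t-s)+C\sqrt{(t-s)/n}$; combined with $\epsilon_0^{(n)}=0$, this gives a uniformly equicontinuous family and hence tightness via Arzel\`a-Ascoli. For the martingale component, Lemma \ref{martingale} supplies ${\bf E}[|V_t^{(n)}(c)-V_s^{(n)}(c)|^4]\le C(t-s)^2$, and the Kolmogorov-Chentsov criterion then gives tightness of $V^{(n)}(c)$ in $C([0,T])$. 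Finally, Lemma \ref{fourth}(3) yields ${\bf E}[\max_{0\le t\le T}|\delta_{nt}(\kappa+c/n)-\delta_{nt}(\kappa)|^2]\to 0$, so this summand converges uniformly on $[0,T]$ to zero in probability, hence in distribution in $C([0,T])$, and is therefore tight.

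Adding the four tight families yields tightness of $\{\Psi_t^{(n)}(c_i)\}_n$ in $C([0,T])$ for each $i$, and consequently joint tightness of $\{(\Psi_t^{(n)}(c_1),\ldots,\Psi_t^{(n)}(c_m))\}_n$ in $C([0,T]\to{\bf R}^m)$. The substantive analytic content is already contained in the preceding lemmas: Lemma \ref{martingale} produced the fourth-moment increment bound via Burkholder-Davis-Gundy and integrability of $a^2$, while Lemma \ref{fourth}(3) controlled the $\delta$-remainder through its careful decomposition into boundary, integral, and martingale parts. The present lemma is therefore an assembly step, and the only point genuinely requiring attention is verifying the Arzel\`a-Ascoli-type bound on $\epsilon_t^{(n)}$ over subintervals, which is routine from the integration-by-parts identity of Lemma \ref{partial integration}.
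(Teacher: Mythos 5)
Your decomposition via Lemma \ref{Theta} and the handling of the $2ct$, $\epsilon_t^{(n)}$, and $\delta$ summands are sound, but there is a genuine gap in the treatment of the martingale $V^{(n)}(c)$: you apply the Kolmogorov--Chentsov criterion on all of $[0,T]$ using the bound ${\bf E}[|V_t^{(n)}(c)-V_s^{(n)}(c)|^4]\le C(t-s)^2$ from Lemma \ref{martingale}, but that lemma only provides this estimate for $s,t\in[t_0,t_1]$ with $t_0>0$, and the constant $C=C(t_0,t_1)$ blows up as $t_0\downarrow 0$. Indeed, its proof controls the increment by
\[
C\left(\int_{ns}^{nt}a(u)^2\,du\right)^2\le C\log\Bigl(1+\frac{t-s}{t_0}\Bigr)^2,
\]
and since $a(u)^2\sim u^{-1}$ for large $u$, the integral $\int_{ns}^{nt}a(u)^2\,du$ grows like $\log(t/s)$ and is unbounded (uniformly in $n$) as $s\downarrow 0$. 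Thus there is no uniform fourth-moment increment estimate on $[0,T]$, and Kolmogorov--Chentsov cannot be invoked there directly.

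The paper gets around exactly this obstruction by splitting $[0,T]$ into $[0,M]$ and $[M,T]$: on $[M,T]$ it applies Kolmogorov's theorem with the Lemma \ref{martingale} bound (where the constant is controlled since $t_0=M>0$), while on $[0,M]$ it does not attempt a modulus-of-continuity estimate at all but instead uses the a priori bound of Lemma \ref{a priori2} on ${\bf E}[\sup_{0\le t\le M}\Psi_t^{(n)}(c)]$, which is $O(M^{1/2})$ up to vanishing errors; since $\Psi_0^{(n)}(c)=0$, this controls the oscillation on $[0,M]$ by a quantity that can be made small by shrinking $M$, independently of $\delta$ and $n$. Letting $M\downarrow 0$ at the end completes the argument. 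To repair your proof you would need to adopt the same near-origin treatment (or equivalently prove a Kolmogorov bound of the form $C|t-s|^{1+b}$ with a constant uniform down to $t=0$, which would require exploiting the smallness of $|e^{i\Psi_s^{(n)}(c)}-1|$ near $s=0$ in the quadratic variation rather than the crude pathwise bound used in Lemma \ref{martingale}).
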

\begin{proof}
It is sufficient to show 
\beq
&(1)&
\lim_{A \to \infty} \sup_n
{\bf P}( | \Psi_t^{(n)} ( c) | \ge A ) = 0
\\
&(2)&\;
\lim_{\delta \downarrow 0}
\limsup_{n \to \infty}
{\bf P}\left(
\sup_{0 \le s, t \le T, \; |t-s| < \delta }
| \Psi_t^{(n)}(c ) - \Psi_s^{(n)}( c) | > \rho 
\right) = 0, 
\quad
T, \rho > 0.
\eeq
(1) follows from 
Lemma \ref{a priori}.
To prove
(2), 
we fix
$M > 0$
arbitrary and decompose
\beq
&&
{\bf P}\left(
\sup_{0 \le s, t \le T, \; |t-s| < \delta }
| \Psi_t^{(n)}(c ) - \Psi_s^{(n)}( c) | > \rho 
\right)
\\
& \le &
{\bf P}\left(
\sup_{0 \le s, t \le M, \; |t-s| < \delta }
| \Psi_t^{(n)}(c ) - \Psi_s^{(n)}( c) | > \rho 
\right)
\\
&& \qquad
+
{\bf P}\left(
\sup_{M \le s, t \le T, \; |t-s| < \delta }
| \Psi_t^{(n)}(c ) - \Psi_s^{(n)}( c) | > \rho 
\right)
=:I+II.
\eeq
Since 
$\Psi_0^{(n)}(c)=0$
we have
\beq
I 
& \le &
{\bf P}\left(
\sup_{t \le M} | \Psi_t^{(n)}(c )| > \frac {\rho}{2}
\right)
+
{\bf P}\left(
\sup_{s \le M} | \Psi_s^{(n)}(c )| > \frac {\rho}{2}
\right)
\eeq
and we use 
Lemma \ref{a priori2}
\beq
&&{\bf P}\left(
\sup_{t \le M} | \Psi_t^{(n)}(c) | > \frac {\rho}{2}
\right)
\le
\frac {2}{\rho}
{\bf E}\left[
\sup_{0 \le t \le M}
| \Psi_t^{(n)}(c) | 
\right]
\\
& \le &
C \left(
M + \sqrt{ \frac Mn } 
\right)
+
C M^{\frac 12} + o(1)
%
+
C
{\bf E} \left[
\max_{0 \le t \le M}
\Bigl| 
\delta_{nt}(\kappa + \frac cn) - \delta_{nt}(\kappa) 
\Bigr|
\right]
\eeq
as
$n \to \infty$. 
By 
Lemma \ref{fourth}(3)
the third term vanishes as 
$n \to \infty$
and it holds that 
$\limsup_{n \to \infty}
I
\le CM^{1/2}$.
Thus following estimate will be sufficient 
\begin{equation}
\lim_{\delta\downarrow 0} \limsup_{n \to \infty} II = 0.
\label{II}
\end{equation}
By
Lemmas \ref{fourth}, \ref{Theta}, 
eq.(\ref{II})
will follow from the following equation
\begin{eqnarray}
&&\lim_{\delta \downarrow 0}
\limsup_{n \to \infty}
\nonumber
{\bf P}\Biggl(
\sup_{M \le t, s \le T, \; |t-s| < \delta}
\Biggl|
V_t^{(n)}(c )
-
V_s^{(n)}(c )
\Biggr|
> \rho
\Biggr)
= 0
\label{Y}
\end{eqnarray}
which, in turn, follows from 
Lemma \ref{martingale}
and Kolmogorov's theorem.
\QED
\end{proof}
%
\subsection{SDE satisfied by 
$\Psi$}
In this subsection 
we show that 
$\Psi^{(n)}$
has a limit 
$\Psi$
which satisfies 
(\ref{SDEbeta}). 
\begin{lemma}
\label{martingale2}
For any
$c_1, \cdots, c_m \in {\bf R}$, 
the solution of the following martingale problem is unique: %
\[
W_t(c_j ) = \Psi_t( c_j) - 2c_j t, 
\quad
j=1, 2, \cdots, m
\]
are martingales whose variational process satisfy
\[
\langle W(c_i ), W(c_j ) \rangle_t
=
D^2
\int_0^t s^{-1}
Re
\left\{
\left(
e^{i \Psi_s( c_i )}-1
\right)
\left(
e^{-i \Psi_s( c_j )}-1
\right)
\right\}
ds.
\]
Moreover 
$\Psi_t(c_j)$
can be characterized by the unique solution to 
the following SDE.
\[
d \Psi_t (c_j)
=
2 c_j dt 
+ D t^{- 1/2}Re
[(e^{i \Psi_t(c_j)}-1) d Z_t ], 
\quad
\Psi_0(c_j) = 0.
\]
\end{lemma}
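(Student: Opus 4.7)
The plan is to read off the martingale structure by passing to the limit $n\to\infty$ in the decomposition of Lemma \ref{Theta}, to identify the quadratic covariation via the explicit formula for $V^{(n)}$, and to reduce uniqueness to pathwise uniqueness of the resulting SDE.

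Using the tightness from Lemma \ref{tightness} and Skorohod embedding, I would first assume a.s.\ convergence along a subsequence, $\Psi_t^{(n)}(c_j)\to \Psi_t(c_j)$. Lemma \ref{Theta} gives the decomposition
\[
\Psi_t^{(n)}(c_j) - 2c_j t = \mathrm{Re}\,\epsilon_t^{(n)} + \frac{1}{\kappa}\mathrm{Re}\,V_t^{(n)}(c_j) + \frac{1}{\kappa}\mathrm{Re}\bigl(\delta_{nt}(\kappa+c_j/n) - \delta_{nt}(\kappa)\bigr),
\]
in which (under (C)) the $\epsilon$-term is a deterministic error that is asymptotically negligible, and the $\delta$-difference vanishes in $L^2$ uniformly on compact $t$-sets by Lemma \ref{fourth}(3); meanwhile $V^{(n)}(c_j)$ is a genuine martingale with uniform $L^2$-bound from Lemma \ref{square-integrable}. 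Uniform integrability then promotes the limit of the martingale part to a martingale, so $W_t(c_j):=\Psi_t(c_j)-2c_j t$ is a martingale with respect to the limit filtration.

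For the variational process, a computation parallel to Lemma \ref{square-integrable} yields
\[
\langle V^{(n)}(c_i), \overline{V^{(n)}(c_j)}\rangle_t = \langle[g_\kappa,\overline{g_\kappa}]\rangle \int_0^t n\,a(ns)^2 \bigl(e^{i\Psi_s^{(n)}(c_i)}-1\bigr)\overline{\bigl(e^{i\Psi_s^{(n)}(c_j)}-1\bigr)}\,ds + o(1),
\]
while $\langle V^{(n)}(c_i), V^{(n)}(c_j)\rangle_t$ carries an extra oscillating factor $e^{4i\theta_s(\kappa)}$ and is negligible after another application of Lemma \ref{partial integration}. Under (C), $n\,a(ns)^2\to s^{-1}$, and dominated convergence (justified by the a priori bounds of Lemmas \ref{a priori}--\ref{a priori2}) produces $\langle[g_\kappa,\overline{g_\kappa}]\rangle\int_0^t s^{-1}(e^{i\Psi_s(c_i)}-1)\overline{(e^{i\Psi_s(c_j)}-1)}\,ds$ in the limit. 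Passing from $V$ to $W=\kappa^{-1}\mathrm{Re}\,V$ introduces an overall factor $1/(2\kappa^2)$ (the vanishing of $\mathrm{Re}\langle V,V\rangle$ being responsible for one factor of $1/2$); combined with $\langle[g_\kappa,\overline{g_\kappa}]\rangle=C(\kappa^2)$ and $D^2=C(\kappa^2)/(2\kappa^2)$, this recovers the stated variational formula.

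With the martingale structure and variational process identified, the martingale representation theorem furnishes a complex Brownian motion $Z_t$ (on an enlarged probability space if necessary) such that $W_t(c_j) = D\,\mathrm{Re}\int_0^t s^{-1/2}(e^{i\Psi_s(c_j)}-1)\,dZ_s$ holds for every $j$ simultaneously; the same driver may be used across $j$ because of the diagonal form of the cross-variations. This realizes the SDE in the statement, and uniqueness of the martingale problem reduces to pathwise uniqueness of that SDE. The main obstacle is the singular factor $t^{-1/2}$ at the origin. I would dispose of it by first noting that any candidate solution satisfies $\Psi_s(c_j)=O(s)$ near $s=0$ (via Fatou applied to Lemma \ref{a priori}), so that the stochastic integral is well-defined since $s^{-1}|e^{i\Psi_s(c_j)}-1|^2=O(s)$ is integrable at the origin; then a standard Gronwall argument on $[t_0,T]$ using the Lipschitz bound on $\zeta\mapsto e^{i\zeta}-1$, combined with the trivial estimate ${\bf E}|\Psi_{t_0}(c_j)|=O(t_0)$ as $t_0\downarrow 0$, delivers pathwise uniqueness and hence uniqueness of the martingale problem.
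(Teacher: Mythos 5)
This lemma is stated in the paper without proof; it is invoked as a black--box uniqueness result at the very end of the proof of Theorem~\ref{SDE}. So there is no ``paper's own proof'' to compare against, but your proposal has two substantive problems.

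The bulk of your argument---tightness, Skorohod embedding, passing to the limit in Lemma~\ref{Theta}, identifying $\langle V^{(n)}(c_i),\overline{V^{(n)}(c_j)}\rangle_t$, and showing that the limit $\Psi_t$ is a solution of the martingale problem---is not what Lemma~\ref{martingale2} asserts. That is the content of Theorem~\ref{SDE}. Lemma~\ref{martingale2} says nothing about the processes $\Psi^{(n)}$: it is a stand-alone statement that the given martingale problem, equivalently the SDE with coefficient $D\,t^{-1/2}(e^{i\psi}-1)$, has a unique solution. Only your final paragraph addresses this, and it is where the gap lies.

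Your pathwise-uniqueness sketch does not close. If $\Gamma_t$ is the difference of two solutions driven by the same noise, the variational identity gives $\mathbf{E}[\Gamma_t^2]\le D^2\int_0^t\mathbf{E}[\Gamma_s^2]\,\frac{ds}{s}$. The kernel $1/s$ is non-integrable at $0$, so a Gronwall argument on $[t_0,T]$ produces $\mathbf{E}[\Gamma_t^2]\le \mathbf{E}[\Gamma_{t_0}^2]\,(t/t_0)^{D^2}$; to send $t_0\downarrow 0$ one needs $\mathbf{E}[\Gamma_{t_0}^2]=o(t_0^{D^2})$. Your a priori estimate $\mathbf{E}|\Psi_{t_0}|=O(t_0)$ gives at best $\mathbf{E}[\Gamma_{t_0}^2]=O(t_0^2)$, which only works if $D^2<2$, i.e.\ $\beta=8E_0/C(E_0)>2$. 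This excludes precisely the regime $\beta\le 2$ (the point-spectrum side of the transition, including the critical point $\beta=2$) that the paper is interested in. To get uniqueness for all $D$ one needs a different mechanism: either the $L^p$-with-$p$-close-to-$1$ estimate of Lemma~\ref{parameter-continuity} (which avoids the $D^2$ restriction via the factor $\frac{1}{2}(p-1)D^2$), or a comparison-theorem sandwich $\Psi_t(c-\epsilon)\le\Psi^1_t(c),\Psi^2_t(c)\le\Psi_t(c+\epsilon)$ combined with the continuity-in-$c$ estimate (\ref{Kolmogorov}) to squeeze the two candidate solutions together as $\epsilon\downarrow 0$. You also deduce the $O(t_0)$ bound for an arbitrary solution of the martingale problem ``via Fatou applied to Lemma~\ref{a priori}''---but Lemma~\ref{a priori} is a statement about the specific processes $\Psi^{(n)}$, not about an arbitrary solution of the SDE, so this needs to be re-derived intrinsically from the SDE. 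Finally, the step ``martingale representation theorem furnishes a common complex Brownian driver $Z$ for all $j$'' is standard but does require checking that the cross-variation matrix factors appropriately; it deserves a line of justification.
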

The proof of 
Lemma \ref{martingale2}
is similar to that of 
Lemma \ref{parameter-continuity}
except that we consider the regularized $p$-th power and use 
Lemma \ref{a priori}. 
\begin{theorem}\mbox{}\\
\label{SDE}
For any
$c_1, \cdots, c_m \in {\bf R}$,
$(\Psi_t^{(n)}(c_1), \cdots, \Psi_t^{(n)}(c_m))
\stackrel{d}{\to}
(\Psi_t(c_1), \cdots, \Psi_t(c_m))$, 
where
$\{ \Psi_t( c_j ) \}$
satisfies (\ref{SDEbeta}). 
\end{theorem}
\begin{proof}
By Lemma \ref{tightness}, 
the sequence 
$\{ (\Psi_t^{(n)}(c_1), \cdots, \Psi_t^{(n)}(c_m)) \}_{n \ge 1}$
has a limit point 
$(\Psi_t(c_1), \cdots, \Psi_t(c_m))$. 
Since Lemmas \ref{fourth}, \ref{Theta} 
imply 
\beq
\Psi_t^{(n)}( c)
&=&
2ct + 
\frac {1}{\kappa}
Re \;
V_t^{(n)}(c )
+
o(1)
\eeq
in probability, we study 
$V_t^{(n)}(c )$.
By a computation using 
Lemma \ref{partial integration}, 
$
\langle V^{(n)}( c), V^{(n)}( c') \rangle_t
\stackrel{n \to \infty}{\to} 0
$
in mean square.
Similarly,
\beq
&&
\langle V^{(n)}( c), \overline{V^{(n)}( c')} \rangle_t
\\
&=&
\langle [g_{\kappa}, \overline{g_{\kappa}}] \rangle
\int_0^{t} n a(nu)^2
\left(
e^{i \Psi_u^{(n)}( c)}-1
\right)
\overline{
\left(
e^{i \Psi_u^{(n)}( c')}-1
\right)
}
du
+ o(1).
\eeq
By Skorohod's theorem, we can suppose 
$
\Psi_t^{(n)} ( c) \to \Psi_t( c)
$
compact uniformly with respect to $t$. 
Hence for 
$0 < s < t$, 
\beq
&&
\langle V^{(n)}( c), \overline{V^{(n)}( c')} \rangle_t
-
\langle V^{(n)}( c), \overline{V^{(n)}( c')} \rangle_s
\\
& \stackrel{n \to \infty}{\to} &
\langle [g_{\kappa}, \overline{g_{\kappa}}] \rangle
\int_{s}^{t} u^{-1}
\left(
e^{i \Psi_u( c)}-1
\right)
\overline{
\left(
e^{i \Psi_u( c')}-1
\right)
}
du.
\eeq
On the other hand by 
Lemma \ref{a priori}
we have
\[
\int_0^t 
{\bf E} \left[
\left|
e^{i \Psi_s(c )} - 1
\right|^2 
\right]
\frac {ds}{s}
\le
C
\int_0^t 
{\bf E}
\left[
\left| 
\Psi_s(c) 
\right|
\right]
\frac {ds}{s}
< \infty
\]
so that 
$V_t(c) = \lim_{n \to \infty}V_t^{(n)}(c )$
is a square integrable continuous martingale 
whose 
variational process
satisfy
\beq
\langle V(c ), V(c' ) \rangle_t
&=& 0
\\
\langle V(c ), \overline{V(c' )} \rangle_t
&=&
\langle [g_{\kappa}, \overline{g_{\kappa}}] \rangle
\int_{0}^{t}
\left(
e^{i \Psi_s( c)}-1
\right)
\overline{
\left(
e^{i \Psi_s( c')}-1
\right)
}
\frac {ds}{s}.
\eeq
Therefore
\[
W_t (c )
=
\Psi_t(c ) - 2ct
=
\frac {1}{\kappa} Re \; V_t(c )
\]
is a square integrable continuous martingale whose variational process is equal to 
\beq
\langle W(c), W(c') \rangle_t
&=&
\frac {
\langle [ g_{\kappa}, \overline{g_{\kappa}}] \rangle
}
{2 \kappa^2}
\int_0^t Re 
\left[
( e^{i \Psi_s(c )}- 1) ( e^{-i \Psi_s(c')}-1)
\frac {ds}{s}
\right].
\eeq
Lemma \ref{martingale2}
yields the conclusion.
\QED
\end{proof}
\begin{lemma}
\label{parameter-continuity}
For 
a.s., 
$\Psi_t(c)$
is continuous on 
$[0, \infty) \times {\bf R}$
and 
is increasing with respect to 
$c$. 
\end{lemma}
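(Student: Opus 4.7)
The plan is to deduce both the strict monotonicity in $c$ and the joint continuity of $(t,c) \mapsto \Psi_t(c)$ directly from the SDE characterization in Theorem \ref{SDE}, falling back on the explicit form of the pre-limit $\Psi_t^{(n)}(c)$ only for the easy monotonicity bound.

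Non-strict monotonicity of the limit is automatic: because $\kappa \mapsto \theta_t(\kappa)$ is strictly increasing (see the remark after (\ref{theta-kappa})), each realization of $\Psi_t^{(n)}(\cdot)$ is strictly increasing, and the joint convergence furnished by Theorem \ref{SDE} preserves non-strict inequalities. Fixing a countable dense set $Q \subset {\bf R}$, we get a.s.\ $\Psi_t(c_1) \le \Psi_t(c_2)$ for all $c_1 < c_2$ in $Q$ simultaneously. To upgrade to strict monotonicity, for $c_1 < c_2$ set $D_t := \Psi_t(c_2) - \Psi_t(c_1)$; subtracting the two copies of (\ref{SDE-Psi}) gives
\[
dD_t = 2(c_2-c_1)\,dt + D\; Re\left[ e^{i\Psi_t(c_1)}(e^{iD_t} - 1)\,\frac{dZ_t}{\sqrt{t}} \right], \qquad D_0 = 0.
\]
The drift is a strictly positive constant while the diffusion coefficient is $O(D_t)$ near $D_t = 0$. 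A standard one-dimensional comparison/Feller-type argument (the origin acts as an entrance boundary: the positive drift dominates the noise, which vanishes there) then yields $D_t > 0$ for all $t > 0$ a.s., hence strict monotonicity on $Q$ and, via the forthcoming continuity, on all of ${\bf R}$.

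For joint continuity, continuity in $t$ for each fixed $c$ is immediate since the SDE~(\ref{SDE-Psi}) has a continuous semimartingale solution. The substantive task is continuity in $c$, which I would establish via Kolmogorov's criterion. Applying It\^o to $D_t^{2p}$ and using $|e^{i\Psi_t(c_2)} - e^{i\Psi_t(c_1)}|^2 = 2(1-\cos D_t) \le D_t^2$, one derives an estimate of the form
\[
{\bf E}[D_t^{2p}] \le C_p (c_2-c_1)^{2p} t^{2p} + C_{p,T}\int_0^t {\bf E}[D_s^{2p}]\,\frac{ds}{s}
\]
on compact intervals $t \in [0,T]$. The point is that the non-integrable $s^{-1}$ weight coming from the noise is absorbed by the $D_s^2$ factor from the diffusion coefficient; since $D_s$ is of order $(c_2-c_1)s$ near the origin, the integrand is integrable at $0$. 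Gronwall then delivers ${\bf E}[D_t^{2p}] \le C_{p,T}(c_2-c_1)^{2p}$ uniformly on $[0,T]$, and Kolmogorov's criterion produces a version of $\Psi$ that is jointly continuous in $(t,c)$ a.s. Since the finite-dimensional distributions of this version coincide with those given by Theorem \ref{SDE}, it is indistinguishable from $\Psi$.

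The main obstacle is the careful book-keeping of the moment estimate near $t=0$: the $s^{-1}$ singularity in the quadratic variation must be controlled by the $D_s^2$ cancellation rather than estimated crudely, and this requires genuinely coupling the growth of $D$ to its own diffusion coefficient. Once that step is handled, Kolmogorov's criterion and the Feller-type comparison deliver both claims of the lemma without additional difficulty.
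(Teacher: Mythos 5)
Your overall plan (It\^o on a power of $D_t = \Psi_t(c_2)-\Psi_t(c_1)$, the bound $|e^{i\Psi_t(c_2)}-e^{i\Psi_t(c_1)}|^2\le D_t^2$, and Kolmogorov's continuity criterion, plus SDE comparison for monotonicity) is indeed the route the paper takes. But there is a genuine gap exactly where you flag "the main obstacle": the claim that "Gronwall then delivers" ${\bf E}[D_t^{2p}]\le C(c_2-c_1)^{2p}$ from
\[
{\bf E}[D_t^{2p}] \le C_p(c_2-c_1)^{2p}t^{2p} + \lambda\int_0^t {\bf E}[D_s^{2p}]\,\frac{ds}{s}
\]
is false as stated. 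The Gronwall multiplier is $\exp\!\left(\lambda\int_0^t ds/s\right)=\infty$, and merely observing that the integrand $D_s^{2p}/s$ is integrable at $0$ does not repair this --- the argument needs a quantitative a priori rate of decay for $\int_0^t {\bf E}[D_s^{2p}]\,ds/s$ as $t\downarrow 0$. Concretely, writing $h(t)=\int_0^t {\bf E}[\Gamma_s^{p}]\,ds/s$ one wants $(t^{-\lambda}h(t))'\le C t^{p-\lambda-1}$ and hence needs $t^{-\lambda}h(t)\to 0$ as $t\downarrow 0$, which requires knowing in advance that $h(t)=O(t^{\lambda+\delta})$ for some $\delta>0$.

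The paper supplies precisely this a priori input by a first, cruder pass: it uses the interpolation $|\Xi_s|^2\le C\Gamma_s^{\gamma}$ for a suboptimal exponent $0<\gamma<2$, combines it with the concavity bound ${\bf E}[\Gamma_s^{r}]\le{\bf E}[\Gamma_s]^{r}$ ($r\le1$) and ${\bf E}[\Psi_t(c)]=2ct$ to get ${\bf E}[\Gamma_t^{p}]\le C_T t^{p-2+\gamma}$, and then chooses $\gamma$ in the window $1+(p-1)(\tfrac p2 D^2-1)<\gamma\le 3-p$ so that $h(t)\lesssim t^{\frac p2(p-1)D^2+\delta}$. Only after that does it redo the estimate with $|\Xi_s|^2\le\Gamma_s^2$ and run the $s^{-1}$-weighted Gronwall, which also forces $p>1$ to be chosen small enough that $\tfrac12(p-1)D^2<1$. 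This two-stage bootstrap is the substantive content of the lemma's proof, and your write-up omits it. Two further points worth noting: (i) applying It\^o to $\Gamma_t^{p}$ (or $D_t^{2p}$) with a negative power of $\Gamma$ appearing in the second-order term presupposes $\Gamma_t>0$ for $t>0$, so the strict ordering should be settled before the moment estimate, not after; and (ii) the paper obtains that strict ordering directly from the comparison theorem for one-dimensional SDEs with ordered drifts --- your "entrance boundary/Feller" heuristic points in the same direction but is not a proof as stated.
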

\begin{proof}
We shall show the following inequality :  
for
$p > 1$
sufficiently close to $1$, 
\begin{equation}
{\bf E}[ |\Psi_t(c_1) - \Psi_t(c_2)|^p ] 
\le
\frac {2^p(c_1 - c_2)^p}
{1 - \frac 12(p-1) D^2} 
t^p.
\label{Kolmogorov}
\end{equation}
Hence by Kolmogorov's theorem, 
for any fixed 
$t > 0$, 
$\Psi_t(c)$
has a continuous version with respect to 
$c\in {\bf R}$ a.s..
We first note that 
$\Psi_t(c)$
satisfies 
\beq
d \Psi_t(c)
&=&
2c dt 
+
\frac {D}{2\sqrt{t}}
\left\{
(e^{i\Psi_t} + e^{-i \Psi_t} - 2) d B^1_t
+
i (e^{i\Psi_t} - e^{-i \Psi_t}) d B^2_t
\right\}.
\eeq
Here we note that if 
$c_1 > c_2$
then 
$\Psi_t(c_1) > \Psi_t(c_2)$
by the comparison theorem of SDE which proves the desired monotonicity of 
$\Psi_t(c)$. 
We set 
\beq
\Gamma_t 
&:=&
\Psi_t(c_1) - \Psi_t(c_2), 
\quad
%
\Xi_t
:=
e^{i\Psi_t(c_1)}  - e^{i\Psi_t(c_2)}.
\eeq
For 
$c_1 > c_2$,
we see 
\beq
d \Gamma_t
&=&
2(c_1 - c_2) dt
+
\frac {D}{2 \sqrt{t}}
\left\{
(\Xi_t + \overline{\Xi}_t) d B^1_t
+
i( \Xi_t - \overline{\Xi}_t ) d B^2_t
\right\}.
\eeq
Hence
\beq
(d\Gamma_t)^2
&=&
\frac {D^2}{4t}
\left\{
(\Xi_t + \overline{\Xi}_t)^2 
- 
(\Xi_t - \overline{\Xi}_t)^2
\right\}dt
=
\frac {D^2}{t} | \Xi_t |^2 dt.
\eeq
Then for 
$p>1$
\beq
d \Gamma_t^p
&=&
p \Gamma_t^{p-1} d\Gamma_t
+
\frac {p(p-1)}{2}
\Gamma_t^{p-2}
(d \Gamma_t)^2
\\
&=&
2(c_1 - c_2) p \Gamma_t^{p-1} dt
+
\frac {p(p-1)}{2} 
\Gamma_t^{p-2} \frac {D^2}{t} | \Xi_t |^2 dt
\\
&&+
p\Gamma_t^{p-1}
\frac {D}{2 \sqrt{t}}
\left\{
(\Xi_t + \overline{\Xi_t}) dB^1_t
+
i(\Xi_t - \overline{\Xi_t}) dB^2_t
\right\}.
\eeq
Taking expectation yields
\begin{equation}
{\bf E}[ \Gamma_t^p ]
=
2(c_1 - c_2)p 
\int_0^t 
{\bf E}[ \Gamma_s^{p-1} ] ds
+
\frac {p(p-1)}{2}
D^2 
\int_0^t 
{\bf E}[ \Gamma_s^{p-2} | \Xi_s |^2 ]
\frac {ds}{s}.
\label{p-th power}
\end{equation}
We have
%
$| \Xi_t |^2 \le
C\Gamma_t^{\gamma}$, 
$0 < \gamma < 2$
%
for some positive constant
$C$ 
and some 
$0 < \gamma < 2$.
Hence
\beq
\int_0^t 
{\bf E}[ \Gamma_s^{p-2} | \Xi_s |^2 ]
\frac {ds}{s}
& \le &
C
\int_0^t 
{\bf E}[ \Gamma_s^{p-2 + \gamma}]
\frac {ds}{s}.
\eeq
We use  
${\bf E}[|X|^r] \le {\bf E}[|X|]^r$
for 
$r \le 1$
and the fact that
${\bf E}[ \Psi_t(c) ] = 2c t$.  
Assuming 
$p-1 \le 1$
and
$0 < p-2 + \gamma \le 1$
yields
\beq
{\bf E}[\Gamma_t^p]
& \le &
2(c_1 - c_2) p 
\int_0^t {\bf E}[\Gamma_s]^{p-1} ds
+
C \int_0^t 
{\bf E}[ \Gamma_s ]^{p-2+\gamma} 
\frac {ds}{s}
\\
&=&
2^p (c_1 - c_2)^p t^p
+
C(c_1 - c_2)^{p-2+\gamma} 
t^{p-2 + \gamma}
\eeq
so that for 
$0 \le t \le T$
we have
\beq
f(t) := {\bf E}[ \Gamma_t^p ]
\le C_T 
t^{p-2 + \gamma}
\eeq
and hence 
\beq
h(t) 
:=
\int_0^t 
\frac {f(s)}{s} ds
\le C_T 
t^{p-2 + \gamma}.
\eeq
Thus for any 
$p>1$
sufficiently close to 
$1$, 
we take 
$\gamma$
satisfying 
$1+(p-1)(\frac p2 D^2-1) < \gamma \le 3-p$
so that 
\begin{equation}
h(t) \le C t^{\frac p2(p-1) D^2 + \delta}
\label{a priori-eq}
\end{equation}
for some 
$\delta > 0$.
On the other hand 
by using 
$
| \Xi_s |^2 \le  \Gamma_s^2
$
in 
(\ref{p-th power}) 
we have
\beq
{\bf E}[ \Gamma_t^p ]
& \le &
2(c_1 - c_2) p 
\int_0^t 
2^{p-1}(c_1 - c_2)^{p-1} s^{p-1} ds
 + 
\frac p2 (p-1) D^2 
\int_0^t 
{\bf E}[ \Gamma_s^p ] \frac {ds}{s}
\\
&=&
2^p (c_1 - c_2)^p t^p 
+
\frac p2 (p-1) D^2 
\int_0^t {\bf E}[ \Gamma_s^p ] \frac {ds}{s}.
\eeq
Hence if 
$\frac 12 (p-1) D^2 < 1$, 
(\ref{a priori-eq}) and a 
Grownwall type argument
give the desired inequality
(\ref{Kolmogorov}). 

Having established 
the continuity of 
$\Psi_{t_0}(c)$
with respect to 
$c$, 
the joint continuity of 
$\Psi_t(c)$
on
$[t_0, \infty) \times {\bf R}$
is valid due to the absence of singularity in this time domain. 
The continuity of 
$\Psi_t(c)$
at 
$t=0$
follows from the monotonicity of 
$\Psi_t(c)$
with respect to 
$c$. 
\QED
\end{proof}
%
%
\begin{remark}
$\{ \Psi_t(c ) \}_{t \ge 0, c \in {\bf R}}$
satisfies the following properties: 
\\
(1)
The process has invariance
\[
\{ \Psi_t( c) \}_{t \ge 0, c \in {\bf R} }
\stackrel{law}{=}
\{ \Psi_t( c+ c_0) - \Psi_t( c_0) \}_{t \ge 0, c \in {\bf R} }
\]
for any 
$c_0 \in {\bf R}$.
\\
(2)
For each fixed 
$c$
there exists a 1-D Brownian motion 
$\{ B_t(c ) \}$ 
such that 
\beq
\frac {\partial\Psi_t}{\partial c}
=
2 \int_{0}^t 
\exp \left(
\int_s^t \frac {D}{\sqrt{u}} d B_u
-
\int_s^t \frac {D^2}{2u} du
\right)
ds
\eeq
where 
$\{ B_t(c )\}$
is a family of martingales satisfying
\[
\langle B_{\cdot}(c ), B_{\cdot}(c') \rangle_t
=
\int_0^t 
\cos \left(
\Psi_s(c ) - \Psi_s(c') 
\right) ds.
\]
\end{remark}
\section{Convergence of 
$\theta_t(\kappa)$ mod $\pi$}
\begin{proposition}
\label{uniform}
As 
$t \to \infty$
$(2 \theta_t(\kappa))_{2 \pi {\bf Z}}$
converges to the uniform distribution on 
$[0, 2 \pi)$. 
\end{proposition}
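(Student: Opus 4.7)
The plan is to apply Weyl's criterion: to show that $(2\theta_t(\kappa))_{2\pi{\bf Z}}$ converges weakly to the uniform law on $[0,2\pi)$, it is enough to prove that for every nonzero integer $k$,
\[
{\bf E}[e^{2ik\theta_t(\kappa)}]\to 0\quad(t\to\infty).
\]
Writing $2\theta_t(\kappa)=2\kappa t+2\tilde\theta_t(\kappa)$ and noting $|e^{2ik\kappa t}|=1$, this reduces to ${\bf E}[e^{2ik\tilde\theta_t(\kappa)}]\to 0$.

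I would next produce a martingale decomposition of $2\tilde\theta_t(\kappa)$ in the spirit of Lemma \ref{fourth}. Splitting $Re(e^{2i\theta_s}-1)=Re(e^{2i\theta_s})-1$, the non-oscillatory integral $\int_0^t a(s)F(X_s)\,ds$ is handled by Lemma \ref{partial integration}(2) with $\gamma=0$ (its leading ergodic term vanishes since $\langle F\rangle=0$), the oscillatory integral by Lemma \ref{fourth}(1), and one further integration by parts on the residual $\int_0^t a(s)^2(Fg_\kappa)(X_s)\,ds$ extracts its $\log t$-order ergodic drift. The result is a decomposition
\[
2\tilde\theta_t(\kappa)=\mu(t)+{\cal M}_t+\rho_t,
\]
where $\mu(t)$ is deterministic of order $\log t$, $\rho_t$ converges a.s.\ to a finite limit $\rho_\infty$, and ${\cal M}_t$ is a real continuous martingale with $\langle{\cal M}\rangle_t=C(\kappa)\log t+o(\log t)$ a.s.\ for some $C(\kappa)>0$ (coming from the ergodic averages of $[RF,\overline{RF}]$ and $[g_\kappa,\overline{g_\kappa}]$ together with $\int_0^t a(s)^2\,ds\sim\log t$ in the critical regime).

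Since $|e^{ik\mu(t)}|=1$ and $e^{ik\rho_t}\to e^{ik\rho_\infty}$ boundedly a.s., it remains to prove ${\bf E}[e^{ik{\cal M}_t}\cdot G]\to 0$ for every bounded $G$ built from the limit $\rho_\infty$. I would approximate $\rho_\infty$ by its ${\cal F}_T$-measurable version $\rho_T$ for large fixed $T$, reducing the task to analyzing the conditional characteristic function $u_t^{(T)}:={\bf E}[e^{ik({\cal M}_t-{\cal M}_T)}\mid{\cal F}_T]$. By It\^o's formula, $u_t^{(T)}$ satisfies the asymptotic ODE
\[
\partial_t u_t^{(T)}=-\frac{k^2 C(\kappa)}{2t}u_t^{(T)}+o(1/t),
\]
where the $o(1/t)$ error comes from $\dot{\langle{\cal M}\rangle}_t=(C(\kappa)/t)(1+\eta_t)$ with $\eta_t\to 0$ a.s.\ by the ergodic theorem for Brownian motion on the compact manifold $M$; integrating yields $u_t^{(T)}=O(t^{-k^2 C(\kappa)/2})\to 0$ a.s.

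The main technical obstacle is the rigorous passage from a.s.\ convergence of $\eta_t$ to an effective control of the ODE remainder: one needs a Bernstein-type concentration estimate showing that the integrated deviation $\int_T^t \eta_s/s\,ds$ is much smaller than $\log(t/T)$ uniformly in large $t$, so that the perturbation term in the ODE truly does not spoil the power-law decay. This essentially parallels the strategy used by Killip--Stoiciu \cite{KS} in the CMV setting, with Brownian motion on $M$ playing the role of their coin flips.
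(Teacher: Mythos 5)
Your starting point (Weyl's criterion: show ${\bf E}[e^{2im\tilde\theta_t(\kappa)}]\to 0$ for $m\neq 0$) coincides with the paper's, but after that you diverge, and your route has a structural gap.

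The paper works directly with $\xi_t:=e^{2mi\tilde\theta_t(\kappa)}$ rather than decomposing the phase $\tilde\theta_t$ itself. Since $\tilde\theta_t$ is absolutely continuous, $d\xi_t=2mi\,\xi_t\,d\tilde\theta_t$; expanding $Re(e^{2i\theta_s}-1)$ into its three Fourier pieces and integrating each by parts via Lemma \ref{partial integration} produces an integral equation of the form
\[
\xi_t=1-\tfrac{mi}{\kappa}\langle F\rangle\!\int_0^t a\,e^{2mi\tilde\theta_s}ds
+\langle G_m\rangle\!\int_0^t a^2 e^{2mi\tilde\theta_s}ds+N_t+\delta(t),
\]
where $N_t$ is a martingale and $\delta(t)\to\delta(\infty)$ with ${\bf E}[\delta(t)-\delta(\infty)]=O(a(t))$. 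The crucial feature is that, after applying the ergodic averaging in Lemma \ref{partial integration}(2), the drift coefficient $\langle G_m\rangle a^2$ is deterministic and \emph{linear in $\xi_s$}, so taking plain expectation kills $N_t$ and yields a closed scalar Volterra equation for $\rho(t)={\bf E}[\xi_t]$. Because $Re\langle G_m\rangle=-\gamma<0$ and $\int_0^t a^2\sim\log t$, the homogeneous solution $\exp(\int_0^t b)\sim t^{-\gamma}\to 0$, and a variation-of-constants estimate (plus $\langle F\rangle=0$ to tame the $O(a)$ term) finishes the proof.

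Your route, by contrast, decomposes $2\tilde\theta_t=\mu(t)+{\cal M}_t+\rho_t$ and proposes to study $u_t^{(T)}:={\bf E}[e^{ik({\cal M}_t-{\cal M}_T)}\mid{\cal F}_T]$ via an ODE. This does not close: It\^o's formula gives $d\,{\bf E}[e^{ik({\cal M}_t-{\cal M}_T)}\mid{\cal F}_T]=-\tfrac{k^2}{2}\,{\bf E}[e^{ik({\cal M}_t-{\cal M}_T)}\,d\langle{\cal M}\rangle_t\mid{\cal F}_T]$, and $d\langle{\cal M}\rangle_t$ is a genuinely random, ${\cal F}_t$-adapted (not ${\cal F}_T$-measurable) process — it involves $e^{2i\theta_s}$ and $[R_{\beta}F,\overline{R_{\beta}F}](X_s)$ — so the right-hand side is not $-\tfrac{k^2}{2}u_t^{(T)}d\langle{\cal M}\rangle_t$. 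You cannot pull out $u_t^{(T)}$ and arrive at the linear ODE you write down; the "$o(1/t)$ error" hides a term that is itself a conditional expectation involving the unknown. This is more fundamental than the concentration estimate you flag at the end. The paper's choice to exponentiate \emph{before} decomposing is precisely what makes the post-expectation equation linear in the unknown (the ergodic average appears multiplying $\xi_s$, not inside an exponential), and hence tractable. Your skeleton can probably be repaired — e.g.\ by deterministicizing $d\langle{\cal M}\rangle_t$ at the price of an additive error and then running a Gronwall argument, which amounts to re-deriving the paper's integral equation in disguise — but as written the central ODE step is not justified.
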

\begin{proof}
Letting 
$
\xi_t(\kappa) := e^{2 m i \tilde{\theta}_t(\kappa)}, 
m \in {\bf Z}, 
$
it suffices to show 
$
{\bf E}[ \xi_t (\kappa) ] \stackrel{t \to \infty}{\to} 0, 
m \ne 0.
$
We omit the 
$\kappa$-dependence of 
$\theta_t$.
By
(\ref{diff-eq})
we decompose
\beq
\xi_t
&=&
1 + 
\frac {mi}{2 \kappa}
\int_0^t
e^{2i \kappa s + 2(m+1) i\tilde{\theta}_s}
a(s) F(X_s)ds
\\
&& + 
\frac {mi}{2 \kappa}
\int_0^t
e^{-2i \kappa s + 2(m-1) i\tilde{\theta}_s}
a(s) F(X_s)ds
%
- \frac {mi}{\kappa}
\int_0^t
e^{2m i\tilde{\theta}_s}
a(s) F(X_s)ds
\\
&=:& 1 + I + II + III.
\eeq
We use 
Lemma \ref{partial integration}(1)
and decompose 
$I$
further : 
\begin{eqnarray}
I
&=&
\frac {mi}{2 \kappa}
\Biggl(
-\frac {2i(m+1)}{4 \kappa}
\int_0^t a(s)^2 
e^{2mi \tilde{\theta}_s}
F(X_s) g_{\kappa}(X_s) ds
\nonumber
\\
&& + 
\int_0^t a(s) e^{2i \kappa s} 
e^{2i(m+1) \tilde{\theta}_s} 
dM_s(\kappa)
+ \delta_{1,1}(t)
\Biggr).
\label{1}
\end{eqnarray}
where
\beq
&&
\delta_{1,1}(t)
\\
&:=&
\left[
a(s) e^{2i (m+1) \tilde{\theta}_s}
e^{2i \kappa s}
g_{\kappa}(X_s) 
\right]_0^t
%
- 
\int_0^t 
a'(s) e^{2i (m+1) \tilde{\theta}_s}
e^{2i \kappa s}
g_{\kappa}(X_s) ds
\\
&& - 
\frac {2i (m+1)}{2 \kappa}
\int_0^t a(s)^2 
\left(
\frac {
e^{2(m+2)i \tilde{\theta}_s}e^{4i \kappa s}
}
{2}
-
e^{2(m+1) i \tilde{\theta}_s} 
e^{2i \kappa s}
\right)
F(X_s) g_{\kappa}(X_s) ds.
\eeq
We further compute 
the third term of 
$\delta_{1,1}$
by 
Lemma \ref{partial integration}(1)
and see that 
$\delta_{1,1}(t)$
has a limit as 
$t \to \infty$.
Taking expectation, 
martingale term vanishes and we have 
\begin{equation}
{\bf E}[ \delta_{1,1}(t) ] - {\bf E}[ \delta_{1,1}(\infty) ] 
=
O(a(t)), 
\quad
t \to \infty.
\label{infty}
\end{equation}
By 
Lemma \ref{partial integration}(2), 
the first term of 
(\ref{1})
satisfies
\beq
&&
\int_0^t a(s)^2 
e^{2mi \tilde{\theta}_s}
F(X_s) g_{\kappa}(X_s) ds
=
\langle F g_{\kappa} \rangle
\int_0^t a(s)^2 e^{2mi \tilde{\theta}_s}ds 
+ 
\delta_{1,2}(t)
\eeq
where
$\delta_{1,2}(t)$
has a limit as 
$t \to \infty$
and satisfies the same estimate as 
(\ref{infty}).
We substitute it into 
(\ref{1})
and let 
$\delta_1 = \delta_{1,1} + \delta_{1,2}$.
Then 
\begin{eqnarray}
I
&=&
\frac {mi}{2 \kappa}
\Biggl(
-\frac {2i(m+1)}{4 \kappa}
\langle F g_{\kappa} \rangle
\int_0^t a(s)^2 
e^{2mi \tilde{\theta}_s}
ds
\nonumber
\\
&& +
\int_0^t a(s) e^{2i \kappa s} 
e^{2i(m+1) \tilde{\theta}_s} 
dM_s(\kappa)
+ \delta_{1}(t)
\Biggr).
\label{1-1}
\end{eqnarray}
We compute II, III 
in a similar manner and consequently  
%
%
\begin{eqnarray}
\xi_t
&=&
1 - \frac {mi}{\kappa} \langle F \rangle
\int_0^t a(s) e^{2mi \tilde{\theta}_s}ds
+
\langle G_m \rangle
\int_0^t a(s)^2 e^{2mi\tilde{\theta}_s}
ds
+ N_t
+ \delta(t)
\qquad
\label{xi}
\end{eqnarray}
where
\beq
G_m&=&
\left(
\frac {m(m+1)}{4 \kappa^2} g_{\kappa}
+
\frac {m(m-1)}{4 \kappa^2} g_{- \kappa}
+
\frac {m^2}{\kappa^2} g
\right)F
\\
N_t &=&
 \frac {mi}{2 \kappa}
\int_0^t a(s) 
e^{2i \kappa s} e^{2i(m+1)\tilde{\theta}_s}
dM_s(\kappa)
%
+ \frac {mi}{2 \kappa}
\int_0^t a(s) 
e^{-2i \kappa s} e^{2i(m-1) \tilde{\theta}_s}
dM_s(-\kappa)
\\
&&
- \frac {mi}{\kappa}
\int_0^t a(s) e^{2mi \tilde{\theta}_s}
dM_s
\eeq
where
$\delta(\infty) = \lim_{t \to \infty} \delta(t)$
exists a.s. and 
\[
{\bf E}[ \delta(t) ]
-
 {\bf E}[ \delta(\infty) ] 
=
O(a(t)), 
\quad
t \to \infty.
\]
Let 
$\sigma_F(d \lambda)$
be the spectral measure of 
$L$
with respect to 
$F$. 
Then by noting 
\beq
Re 
\langle F g_{\kappa} \rangle
=
Re \langle F g_{- \kappa} \rangle
=
\int_{-\infty}^0
\frac {\lambda\sigma_F(d \lambda)}{\lambda^2 + 4\kappa^2}
< 0, 
\quad
Re \langle F g \rangle
=
\int_{-\infty}^0
\frac {\sigma_F(d \lambda)}{\lambda}
< 0
\eeq
we have
$
- \gamma := Re \langle G_m \rangle < 0.
$
Set 
\[
\rho(t) := {\bf E}[ \xi_t ], 
\quad
b(t) :=
- \frac {mi}{\kappa} \langle F \rangle a(t)
+
\langle G_m \rangle a(t)^2.
\]
Then 
(\ref{xi})
turns to 
\[
\rho(t) 
=
1 + \int_0^t b(s) \rho(s) ds 
+
{\bf E}[ \delta (t) ]
\]
and hence
\beq
\rho(t)
&=&
\exp \left(
\int_0^t b(u) du
\right)
+
{\bf E}[ \delta(t) ] 
+ 
\int_0^t 
{\bf E}[ \delta (s) ] 
b(s)
\exp \left(
\int_s^t b(u) du
\right)
ds
\\
&=&
\exp \left(
\int_0^t b(u) du
\right)
+
{\bf E}[ \delta(t) ] 
+ 
{\bf E}[ \delta(\infty) ]
\int_0^t 
b(s)
\exp \left(
\int_s^t b(u) du
\right)
ds
\\
&& +
\int_0^t 
\left(
{\bf E}[ \delta(s) ] - {\bf E}[ \delta(\infty) ] 
\right)
b(s)
\exp \left(
\int_s^t b(u) du
\right)
ds
\\
&=:& I + II + III + IV.
\eeq
Noting 
$
Re \; b(t) = Re \;\langle G_m \rangle a(t)^2
=
- \gamma a(t)^2
$, 
we compute 
$I, III$ 
\beq
| I | 
& \le &
\exp \left(
\int_0^t Re\, b(s) ds
\right)
\le
C 
\exp \left(
- \gamma \int_1^{t} \frac 1s ds
\right)
\stackrel{t \to \infty}{\to} 0
\\
III
&=&
{\bf E}[ \delta (\infty) ] 
\left(
-1 + 
\exp \left(
\int_0^t b(u) du
\right)
\right)
\stackrel{t \to \infty}{\to} 
- {\bf E}[ \delta(\infty) ].
\eeq
We further decompose 
$IV$ : 
\beq
| IV |
&=&
\left|
\int_0^t 
\left(
{\bf E}[ \delta(s) ] - {\bf E}[ \delta(\infty) ] 
\right)
b(s)
\exp \left(
\int_s^t b(u) du
\right)
ds
\right|
\\
& \le &
C
\left(
\int_0^M + \int_M^{t} 
\right)
a(s)|b(s)|
\exp \left(
Re\, \int_s^t b(u) du
\right)
ds
\\
&=:&IV_1 + IV_2.
\eeq
It is easy to see that 
$IV_1 \stackrel{t \to \infty}{\to} 0$. 
For 
$IV_2$
we use 
$\langle F \rangle = 0$
and compute, for large 
$M$, 
\beq
| IV_2 |
& \le &
C
\int_M^t 
a(s)^3 
\exp \left(
\int_s^t Re \, b(u) du
\right)ds
\\
& \le &
C \int_M^t 
s^{-3/2}
\left( \frac ts \right)^{- \gamma} ds
%
=
\left\{
\begin{array}{@{\,}ll}
C t^{-\gamma} \log \frac {t}{M}
& 
(\gamma=\frac 12)
\\
C t^{-\gamma}\,
\frac {
t^{\gamma- \frac 12} - M^{\gamma - \frac 12}
}
{
\gamma - \frac 12
}
&
( \gamma \ne \frac 12 )
\\
\end{array}
\right.
\quad
\stackrel{t \to \infty}{\to} 0.
\eeq
\QED
\end{proof}
%

\section{Limiting behavior of $\tilde{\theta}_t$}
To study the limiting behavior of 
$(2\tilde{\theta}_t)_{2 \pi {\bf Z}}$
we set 
\[
\tilde{\xi}_t := e^{2i \tilde{\theta}_t(\kappa)}.
\]
\subsection{Estimate of integral equation}
As in the proof of 
Proposition \ref{uniform}, 
we can show the following lemma. 
\begin{lemma}
\label{integral-eq}
Let 
$0 < t_0 < t$.
Then we have
\beq
\tilde{\xi}_t
&=&
\tilde{\xi}_{t_0}
+
\frac {1}{2 \kappa^2}
\langle F \cdot (g_{\kappa}+2g) \rangle
\int_{t_0}^t a(s)^2 e^{2i \tilde{\theta}_s} ds
- \frac {i}{\kappa}
\langle F \rangle
\int_{t_0}^t 
a(s) e^{2i \tilde{\theta}_s} ds
\\
&& 
\qquad + 
\frac {i}{2 \kappa}
\left( Y_t + \widetilde{Y}_t - 2\widehat{Y}_t \right)
+ O(a(t_0)), 
\quad
t_0 \to \infty.
\\
\mbox{ where }\quad
Y_t &:=&
\int_{t_0}^t a(s)
e^{2i\kappa s + 4i \tilde{\theta}_s}
dM_s(\kappa),
%
\quad
\widetilde{Y}_t 
:= 
\int_{t_0}^t a(s) e^{-2i \kappa s} 
dM_s(-\kappa)
\\
\widehat{Y}_t
&:=&
\int_{t_0}^t
a(s) e^{2i \tilde{\theta}_s} 
dM_s.
\eeq
The variational process of 
$Y, \tilde{Y}$, 
and 
$\hat{Y}$
satisfy, as 
$t_0 \to \infty$, 
\beq
\langle Y, Y \rangle_t
&=& O(a(t_0)), 
\;
\langle Y, \overline{Y} \rangle_t
=
\langle [g_{\kappa}, \overline{g}_{\kappa} ]
\rangle
\int_{t_0}^t a(s)^2 ds + O(a(t_0))
\\
\langle \widetilde{Y}, \widetilde{Y} \rangle_t
&=&
O(a(t_0)), 
\;
\langle \widetilde{Y}, \overline{\widetilde{Y}} \rangle_t
=
\langle [g_{\kappa}, \overline{g}_{\kappa}] \rangle
\int_{t_0}^t a(s)^2 ds + O(a(t_0))
\\
\langle \widehat{Y}, \widehat{Y} \rangle_t
&=&
\langle [g, g] \rangle
\int_{t_0}^t a(s)^2 e^{4i \tilde{\theta}_s} ds
+ O(a(t_0)), 
\\
\langle \widehat{Y}, \overline{\widehat{Y}} \rangle_t
&=&
\langle [g, g] \rangle
\int_{t_0}^t a(s)^2 ds + O(a(t_0)).
\eeq
\end{lemma}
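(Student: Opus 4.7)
The plan is to repeat the computation used in the proof of Proposition \ref{uniform}, specialized to $m=1$ but starting from a lower limit $t_0$ and keeping an explicit $O(a(t_0))$ remainder. Differentiating $\tilde{\xi}_t = e^{2i\tilde{\theta}_t}$ by means of (\ref{diff-eq}) and expanding $Re(e^{2i\theta_s}-1) = \frac{1}{2}e^{2i\kappa s}e^{2i\tilde{\theta}_s} + \frac{1}{2}e^{-2i\kappa s}e^{-2i\tilde{\theta}_s} - 1$, one obtains
\begin{eqnarray*}
\tilde{\xi}_t - \tilde{\xi}_{t_0}
&=& \frac{i}{2\kappa}\int_{t_0}^t a(s)F(X_s) e^{2i\kappa s}e^{4i\tilde{\theta}_s}\,ds
+ \frac{i}{2\kappa}\int_{t_0}^t a(s)F(X_s)e^{-2i\kappa s}\,ds \\
&& -\frac{i}{\kappa}\int_{t_0}^t a(s)F(X_s)e^{2i\tilde{\theta}_s}\,ds
\ =:\ \frac{i}{2\kappa}A_t + \frac{i}{2\kappa}B_t - \frac{i}{\kappa}C_t.
\end{eqnarray*}

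I apply Lemma \ref{partial integration}(1) to $A_t$ (with $\beta=2\kappa,\gamma=4,f=F$) and to $B_t$ (with $\beta=-2\kappa,\gamma=0,f=F$), and Lemma \ref{partial integration}(2) to $C_t$ (with $\gamma=2,f=F$). The martingale parts produce exactly $Y_t,\tilde{Y}_t,\hat{Y}_t$ (after identifying $R_{\pm 2\kappa}F=g_{\pm\kappa}$, $RF=g$, $M_s(F,\pm 2\kappa)=M_s(\pm\kappa)$, $M_s(F,0)=M_s$). The boundary pieces and the $a'$-integrals are $O(a(t_0))$ since $a(s)\sim s^{-1/2}$ gives $|a'(s)|=O(s^{-3/2})$ and hence $\int_{t_0}^\infty|a'(s)|\,ds=O(a(t_0))$. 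The $a(s)^2$-remainders from the $\gamma$-term in Lemma \ref{partial integration} require further expansion: after substituting $Re(e^{2i\theta_s}-1)$ once more, the pieces whose net $\kappa s$-phase is nonzero are killed by one additional integration by parts at cost $O(\int_{t_0}^\infty a(s)^3\,ds)=O(a(t_0))$, while the resonant pieces contribute $-\frac{i}{\kappa}\int a(s)^2 e^{2i\tilde{\theta}_s}F(X_s)g_\kappa(X_s)\,ds$ (from $A_t$) and $\frac{i}{\kappa}\int a(s)^2 e^{2i\tilde{\theta}_s}F(X_s)g(X_s)\,ds$ (from $C_t$); note that $B_t$ produces no such term because $\gamma=0$. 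A final application of Lemma \ref{partial integration}(2) replaces $Fg_\kappa$ and $Fg$ by $\langle Fg_\kappa\rangle$ and $\langle Fg\rangle$ modulo $O(a(t_0))$ (the additional martingale is bounded in $L^2$ by $\bigl(\int_{t_0}^\infty a(s)^4\,ds\bigr)^{1/2}=O(a(t_0))$). Combining the contributions with the prefactors yields the drift coefficient $\frac{1}{2\kappa^2}\langle Fg_\kappa\rangle+\frac{1}{\kappa^2}\langle Fg\rangle=\frac{1}{2\kappa^2}\langle F(g_\kappa+2g)\rangle$, and the $\langle F\rangle$-term from Lemma \ref{partial integration}(2) applied to $C_t$ produces the $-\frac{i}{\kappa}\langle F\rangle\int a(s)\,e^{2i\tilde{\theta}_s}\,ds$ contribution, matching the stated identity.

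The six bracket formulas follow directly from the quadratic-variation identities for $\langle M(f,\beta),M(f,\beta)\rangle$ and $\langle M(f,\beta),\overline{M(f,\beta)}\rangle$ recalled in Section 6.1. For the ``opposite-sign'' brackets $\langle Y,\overline{Y}\rangle_t$, $\langle\tilde{Y},\overline{\tilde{Y}}\rangle_t$, $\langle\hat{Y},\overline{\hat{Y}}\rangle_t$ the exponential phases inside the bracket cancel, leaving integrands $a(s)^2[g_\kappa,\overline{g_\kappa}](X_s)$ or $a(s)^2[g,g](X_s)$; one more use of Lemma \ref{partial integration}(2) peels off the spatial average $\langle[g_\kappa,\overline{g_\kappa}]\rangle$ or $\langle[g,g]\rangle$ at cost $O(a(t_0))$. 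For the ``same-sign'' brackets the surviving fast phases $e^{\pm 4i\kappa s}$ in $\langle Y,Y\rangle_t$ and $\langle\tilde{Y},\tilde{Y}\rangle_t$ are killed by integration by parts and give $O(a(t_0))$, whereas in $\langle\hat{Y},\hat{Y}\rangle_t$ the phase is the slow $e^{4i\tilde{\theta}_s}$, which survives as the stated $\int a(s)^2 e^{4i\tilde{\theta}_s}\,ds$-term. The main obstacle is not analytical but combinatorial bookkeeping: at each application of Lemma \ref{partial integration} one must carefully track which $a(s)^2$-remainder pieces are resonant (and hence feed the $\frac{1}{2\kappa^2}\langle F(g_\kappa+2g)\rangle$-drift, or the $\langle[g_\kappa,\overline{g_\kappa}]\rangle$ and $\langle[g,g]\rangle$ averages in the bracket computation) and which oscillate in $\kappa s$ (and fall into the $O(a(t_0))$ error after one further integration by parts); once this is sorted, everything else is a mechanical repetition of the arguments already carried out in the proof of Proposition \ref{uniform}.
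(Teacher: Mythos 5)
Your proposal is correct and is essentially the paper's own argument: the paper simply says the lemma follows ``As in the proof of Proposition \ref{uniform}'', i.e.\ by repeating the $m=1$ case of that computation starting at $t_0$ and tracking the $O(a(t_0))$ error, which is exactly what you carry out. The only small detail you leave implicit is the identification $[g_{-\kappa},\overline{g_{-\kappa}}]=[g_\kappa,\overline{g_\kappa}]$ in the $\langle\widetilde Y,\overline{\widetilde Y}\rangle$ bracket (and $[g,\overline g]=[g,g]$), which is immediate since $F$ is real, hence $g_{-\kappa}=\overline{g_\kappa}$ and $g$ is real.
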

%
%
\subsection{Tightness of $\eta$}
Let 
\beq
\eta_t^{(n)} &:=& \tilde{\xi}_{nt} = e^{2i \tilde{\theta}_{nt}(\kappa)},
\quad
{\bf U} := 
\{ z \in {\bf C} \, | \, |z| = 1 \}.
\eeq
\begin{lemma}
\label{tightness-eta}
$\{ \eta_t^{(n)} \}_{n \ge 1}$
is tight as a family in 
$C((0, \infty) \to {\bf U})$. 
\end{lemma}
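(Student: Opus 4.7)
The plan is to reduce to tightness on each $C([t_0,T]\to{\bf U})$ with $0<t_0<T<\infty$, since the topology on $C((0,\infty)\to{\bf U})$ is uniform convergence on compacts. Compactness of ${\bf U}$ makes marginal tightness automatic, so it suffices to verify a Kolmogorov-Chentsov style modulus-of-continuity estimate for $\{\eta^{(n)}\}$ on $[t_0,T]$.

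For $t_0\le s\le t\le T$, apply Lemma \ref{integral-eq} with starting point $ns$. Since $\langle F\rangle=0$ by assumption, the $a$-linear drift vanishes and one obtains
\[
\eta^{(n)}_t-\eta^{(n)}_s=\frac{\langle F(g_\kappa+2g)\rangle}{2\kappa^2}\int_{ns}^{nt}a(u)^2e^{2i\tilde\theta_u(\kappa)}\,du+\frac{i}{2\kappa}\bigl(Y_{nt}+\tilde Y_{nt}-2\hat Y_{nt}\bigr)+R_n(s,t),
\]
where $Y,\tilde Y,\hat Y$ are the martingales started at $ns$ and $|R_n(s,t)|=O(a(ns))$. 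Using $a(u)=u^{-1/2}(1+o(1))$, the deterministic drift is bounded by $C\int_{ns}^{nt}u^{-1}du=C\log(t/s)\le C(t-s)/t_0$. The covariation formulas stated in Lemma \ref{integral-eq} imply $\langle Y,\overline Y\rangle_{nt}\le C\log(t/s)+o(1)$, and analogously for $\tilde Y,\hat Y$, so Burkholder-Davis-Gundy yields
\[
{\bf E}\left[\bigl|Y_{nt}+\tilde Y_{nt}-2\hat Y_{nt}\bigr|^4\right]\le C\Bigl(\int_{ns}^{nt}a(u)^2\,du\Bigr)^2+o(1)\le\frac{C(t-s)^2}{t_0^2}+o(1)
\]
as $n\to\infty$.

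Assembling the three contributions gives ${\bf E}[|\eta^{(n)}_t-\eta^{(n)}_s|^4]\le C(t_0)(t-s)^2+\varepsilon_n$ with $\varepsilon_n\le Ca(nt_0)^4\to 0$, uniformly in $s,t\in[t_0,T]$. For $n$ large enough that $\varepsilon_n$ is negligible, Kolmogorov-Chentsov delivers the required equicontinuity for the tail of the sequence; the finite head consists of a.s.\ continuous trajectories and is tight by Ulam's theorem. The main obstacle is precisely that the remainder $R_n$ is a uniform (in $s,t$) error of size $a(nt_0)$ rather than proportional to some power of $|t-s|$, so it cannot be absorbed into a single clean Kolmogorov-Chentsov bound; one must either split the family into a tail (where the remainder is uniformly negligible) and a finite initial segment (tight by continuity of each individual trajectory), or equivalently verify an Aldous-type criterion where the $o(1)$ remainder is handled separately from the $(t-s)$-dependent pieces.
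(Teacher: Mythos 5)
Your proposal is correct and takes essentially the same route as the paper: reduce to intervals $[t_0,T]$ bounded away from $0$, invoke Lemma \ref{integral-eq} to decompose $\eta^{(n)}_t-\eta^{(n)}_s$ into a drift bounded by $C\log(t/s)$, the martingale $W^{(n)}_{t,s}$ for which the fourth-moment bound of Lemma \ref{martingale} holds, and an $o(1)$ remainder, then conclude via a Kolmogorov-type modulus estimate. The paper folds the $o(1)$ remainder directly into the $\lim_{\delta\downarrow 0}\limsup_{n\to\infty}$ double-limit tightness criterion (as in the proof of Lemma \ref{tightness}) rather than explicitly splitting the sequence into a tail and a finite head, but you correctly identify that these are equivalent ways of handling the same obstacle.
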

\begin{proof}
It suffices to show, for any 
$t_0 > 0, \rho > 0$, 
\beq
\lim_{\delta \downarrow 0}
\limsup_{n \to \infty}
{\bf P}
\left(
\sup_{t_0 < s < t, t-s < \delta}
| \eta_t^{(n)} - \eta_s^{(n)} | > \rho 
\right)
= 0.
\eeq
Noting 
$\langle F \rangle = 0$, 
Lemma \ref{integral-eq}
implies
\begin{eqnarray}
\tilde{\xi}_{nt} - \tilde{\xi}_{ns}
&=&
\frac {1}{2 \kappa^2}
\langle F \cdot (g_{\kappa}+2g) \rangle
\int_{ns}^{nt} a(u)^2 
e^{2i \tilde{\theta}_u} du
%
+\frac {i}{2 \kappa}
W^{(n)}_{t,s}
+ o(1), 
\qquad
\label{integral}
\\
\mbox{ where }
\quad
W^{(n)}_{t,s} &:=&
\left(
Y_{nt} + \tilde{Y}_{nt} - 2 \hat{Y}_{nt}
\right)
-
\left(
Y_{ns} + \tilde{Y}_{ns} - 2 \hat{Y}_{ns}
\right), 
\nonumber
\end{eqnarray}
as $n \to \infty$. 
We note that 
$W_{t,s}^{(n)}$
satisfies the estimate in 
Lemma \ref{martingale}
and the rest of the argument is the same as that in 
Lemma \ref{tightness}.
\QED
\end{proof}
%
\subsection{Identification of $\eta_t$}
Let 
$\eta_t$
be a limit point of 
$\eta_t^{(n)}$
which is uniformly distributed on 
${\bf U}$
for each fixed 
$t > 0$ 
by Lemma \ref{uniform}. 
In this subsection
we show that the distribution of the process 
$\eta_t$
is uniquely determined. 
\begin{lemma}
\label{independence of eta}
(1)
For any 
$0 <  t_0 < t$, 
\beq 
\lim_{n \to \infty}
{\bf E}[ 
e^{2mi (\tilde{\theta}_{nt} - \tilde{\theta}_{n t_0})}
| {\cal F}_{n t_0}]
=
\left(
\frac {t}{t_0}
\right)^{\langle G_m \rangle}
\eeq
where
${\cal F}_t$
is the 
$\sigma$-algebra
generated by 
$\{ X_s \}_{0 \le s \le t}$. \\
(2)
For any 
$0 < t_0 < t_1 < \cdots < t_k$, 
the family of random variables
$\{ \eta_{t_0}, \eta_{t_1}/\eta_{t_0}, 
\cdots, 
\eta_{t_k}/\eta_{t_{k-1}} \}$
are independent. 
\end{lemma}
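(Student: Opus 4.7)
The plan is to prove (2) by repeated use of (1) via successive conditioning, so the bulk of the work lies in (1), whose argument is a conditional version of the derivation in Proposition~\ref{uniform}.

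For (1), set $\xi_s = e^{2mi\tilde{\theta}_s}$ and run the same sequence of integrations by parts (Lemma~\ref{partial integration}) that produced equation~(\ref{xi}) in the proof of Proposition~\ref{uniform}, but started from $s = nt_0$ rather than $s = 0$. Using $\langle F\rangle = 0$ to kill the $a$-linear piece, this yields
$$\xi_{nt} = \xi_{nt_0} + \langle G_m\rangle \int_{nt_0}^{nt} a(u)^2 \xi_u\, du + N^{(n)}_{t_0,t} + \delta^{(n)}_{t_0,t},$$
where $N^{(n)}_{t_0,t}$ collects the martingale increments over $[nt_0, nt]$ and the remainder satisfies $\mathbf{E}[|\delta^{(n)}_{t_0,t}| \mid \mathcal{F}_{nt_0}] = O(a(nt_0))$. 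Dividing through by $\xi_{nt_0}$ (of modulus one) and taking $\mathbf{E}[\,\cdot \mid \mathcal{F}_{nt_0}]$ annihilates $N^{(n)}$; writing $\rho_n(t) := \mathbf{E}[\xi_{nt}/\xi_{nt_0} \mid \mathcal{F}_{nt_0}]$ and rescaling $u = nv$, the assumption $a(t) = t^{-1/2}(1+o(1))$ gives $n a(nv)^2 = v^{-1}(1+o(1))$, so that $\rho_n$ satisfies
$$\rho_n(t) = 1 + \langle G_m \rangle \int_{t_0}^{t}\rho_n(v)\,\frac{dv}{v}\,(1+o(1)) + O(a(nt_0)).$$
The limiting ODE $\rho'(v) = \langle G_m\rangle v^{-1}\rho(v)$, $\rho(t_0) = 1$, has the unique solution $(v/t_0)^{\langle G_m\rangle}$, and a Gronwall estimate in the spirit of the final step of the proof of Proposition~\ref{uniform} promotes this to $L^1$ convergence.

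For (2), I would compute, for arbitrary $m_0, \dots, m_k \in \mathbf{Z}$, the joint character
$$\lim_{n\to\infty}\mathbf{E}\left[\prod_{j=0}^k e^{2m_j i(\tilde{\theta}_{nt_j} - \tilde{\theta}_{nt_{j-1}})}\right],\qquad \tilde{\theta}_{nt_{-1}} := 0,$$
and show it factorises. Conditioning on $\mathcal{F}_{nt_{k-1}}$ pulls the first $k$ factors out and leaves the conditional expectation of the last one, which by (1) converges in $L^1$ to the deterministic constant $(t_k/t_{k-1})^{\langle G_{m_k}\rangle}$. Since the remaining product has modulus at most one, dominated convergence extracts the limit. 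Iterating down to $j=1$ and using Proposition~\ref{uniform} for the $j=0$ term (which yields $\mathbf{1}_{\{m_0 = 0\}}$) expresses the limit as a product of separate characters. Because the $\eta$-variables take values in the compact group $\mathbf{U}$, Fourier inversion on $\mathbf{U}^{k+1}$ upgrades this character factorisation to independence of $\eta_{t_0}, \eta_{t_1}/\eta_{t_0}, \ldots, \eta_{t_k}/\eta_{t_{k-1}}$.

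The main technical obstacle is obtaining (1) in a sufficiently strong mode (at least $L^1$) to justify the dominated-convergence step in the iteration of (2). Concretely, one needs conditional control of both the martingale increments and the remainder $\delta^{(n)}_{t_0,t}$ in the integral equation --- the remainders coming from the iterated partial integrations of Lemma~\ref{partial integration}, and the martingale estimates being those exploited in Lemma~\ref{fourth} (in particular $\int_0^\infty a(s)^4\, ds < \infty$). Once these bounds are in place uniformly in the conditioning $\sigma$-algebra, the Gronwall argument and the iteration of conditional expectations are standard.
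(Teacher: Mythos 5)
Your proposal is correct and follows essentially the same route as the paper: part (1) is proved by re-running the Proposition~\ref{uniform} integral-equation argument starting from time $nt_0$ and taking conditional expectation to kill the martingale, then passing to the limit via a Gronwall estimate; part (2) follows by iterated conditioning on $\mathcal{F}_{nt_{j}}$ together with Proposition~\ref{uniform} for the $j=0$ term, using boundedness to upgrade a.s.\ convergence to $L^1$ and so pull out the deterministic limits. The only cosmetic difference is that the paper sketches the $k=1$ case explicitly and says ``similar'' for $k\geq 2$, whereas you phrase the iteration as a factorization of the joint character on $\mathbf{U}^{k+1}$ closed off by Fourier inversion.
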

\begin{proof}
(1)
Let
$m, m' \in {\bf Z}$.
By a argument similar to deduce
(7.4), we have
\beq
e^{
2mi (\tilde{\theta}_{nt} - \tilde{\theta}_{n t_0})
}
&=&
1 + \langle G_m \rangle
\int_{t_0}^t
n a(nu)^2
e^{2mi (\tilde{\theta}_{nu} - \tilde{\theta}_{n t_0})}
du
\\
&& \qquad
+
N_{nt, nt_0} e^{-2mi \tilde{\theta}_{nt_0}}
+
\delta_n(t) e^{-2mi \tilde{\theta}_{nt_0}}
\eeq
where
\beq
G_m&=&
\left(
\frac {m(m+1)}{4 \kappa^2} g_{\kappa}
+
\frac {m(m-1)}{4 \kappa^2} g_{- \kappa}
+
\frac {m^2}{\kappa^2} g
\right)F
\\
N_{nt, nt_0} &=&
 \frac {mi}{2 \kappa}
\int_{nt_0}^{nt} a(s) 
e^{2i \kappa s} e^{2i(m+1)\tilde{\theta}_s}
dM_s(\kappa)
\\
&&  
+ \frac {mi}{2 \kappa}
\int_{nt_0}^{nt} a(s) 
e^{-2i \kappa s} e^{2i(m-1) \tilde{\theta}_s}
dM_s(-\kappa)
%
- \frac {mi}{\kappa}
\int_{nt_0}^{nt} a(s) e^{2mi \tilde{\theta}_s}
dM_s
\\
{\bf E}[\delta_n(t)| &{\cal F}_{n t_0}&]
\stackrel{n \to \infty}{\to}
0, 
\quad
a.s..
\eeq
Taking a conditional expectation and 
letting 
\[
\rho_n (t) := 
{\bf E}[ 
e^{2mi (\tilde{\theta}_{nt} - \tilde{\theta}_{n t_0})}
| {\cal F}_{n t_0}]
\]
we have
\beq
\rho_n(t)
&=&
1 + \langle G_m \rangle
\int_{t_0}^t
n a(nu)^2
\rho_n (u) du
+
{\bf E}[\delta_n(t)| {\cal F}_{n t_0}] e^{-2mi \tilde{\theta}_{nt_0}}.
\eeq
Therefore
\beq
\rho_n(t)
&=&
\exp \left(
\langle G_m \rangle
\int_{t_0}^t
n a(nu)^2 du
\right)
+
{\bf E}[\delta_n(t)| {\cal F}_{n t_0}] e^{-2mi \tilde{\theta}_{nt_0}}
\\
+
&\int_{t_0}^t&
{\bf E}[\delta_n(s)| {\cal F}_{n t_0}] e^{-2mi \tilde{\theta}_{nt_0}}
\langle G_m \rangle
n a(nu)^2
\exp \left(
\langle G_m \rangle
\int_s^t
n a(nu)^2 du
\right)
ds
\\
& \stackrel{n \to \infty}{\to}&
\exp \left(
\langle G_m \rangle
\int_{t_0}^t
\frac {du}{u}
\right)
=
\left(
\frac {t}{t_0}
\right)^{\langle G_m \rangle}.
\eeq
(2)
The required independence easily follows from (1) and the fact that 
$e^{2i \tilde{\theta}_{nt}}$
converges to the uniform distribution on 
${\bf U}$
as 
$n \to \infty$. 
In fact, for 
$k=1$,
\beq
&&
{\bf E}\left[
{\bf E}[
e^{
2mi (\tilde{\theta}_{nt} - \tilde{\theta}_{n t_0})
}
|
{\cal F}_{n t_0}
]
e^{2m'i \tilde{\theta}_{n t_0}}
\right]
\\
&=&
{\bf E}\left[
\left(
{\bf E}[
e^{
2mi (\tilde{\theta}_{nt} - \tilde{\theta}_{n t_0})
}
|
{\cal F}_{n t_0}
]
-
\left(
\frac {t}{t_0}
\right)^{\langle G_m \rangle}
\right)
e^{2m'i \tilde{\theta}_{n t_0}}
\right]
\\
&& \qquad
+
{\bf E} \left[
\left(
\frac {t}{t_0}
\right)^{\langle G_m \rangle}
e^{2m'i \tilde{\theta}_{n t_0}}
\right]
\\
&\to&
\left\{
\begin{array}{@{\,}ll}
0 & (m' \ne 0) \\
\left( \frac {t}{t_0} \right)^{\langle G_m \rangle}
& 
(m'=0) \\
\end{array}
\right.
\eeq
For 
$k \ge 2$, 
the proof is similar. 
\QED
\end{proof}
\begin{lemma}
\label{Lemma-SDE-eta}
For each fixed 
$t_0 > 0$, 
$\eta_t$
satisfies the following SDE on 
$t \ge t_0$: 
\begin{eqnarray}
d \eta_t
&=&
C_1 \frac {\eta_t}{t} dt
+
C_2 \frac {\eta_t}{\sqrt{t}} dB_t, 
\label{SDE-eta}
\\
\mbox{where}
\quad
C_1 &:=&
\frac {
\langle ( g_{\kappa}+2g )F  \rangle
}
{ 2 \kappa^2 }, 
\quad
C_2 :=
\frac {i}{2 \kappa}
\sqrt{
\langle 2[g_{\kappa}, \overline{g_{\kappa}}] + 4[g,g] \rangle}.
\nonumber
\end{eqnarray}
\end{lemma}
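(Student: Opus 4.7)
The plan is to pass to the scaling limit in the integral representation of Lemma~\ref{integral-eq} and identify the limiting SDE from the variational structure of the residual martingale. Replacing $(t_0,t)$ by $(nt_0,nt)$ in Lemma~\ref{integral-eq} and using the hypothesis $\langle F \rangle = 0$ to kill the $-\tfrac{i}{\kappa}\langle F\rangle\int a(s)e^{2i\tilde{\theta}_s}ds$ term, we get
\[
\eta_t^{(n)} - \eta_{t_0}^{(n)}
= \frac{\langle F(g_\kappa + 2g)\rangle}{2\kappa^2}\int_{nt_0}^{nt} a(u)^2 e^{2i\tilde{\theta}_u}\,du + M^{(n)}_{t,t_0} + o(1),
\]
where $M^{(n)}_{t,t_0} := \tfrac{i}{2\kappa}\bigl((Y_{nt}+\widetilde{Y}_{nt}-2\widehat{Y}_{nt}) - (Y_{nt_0}+\widetilde{Y}_{nt_0}-2\widehat{Y}_{nt_0})\bigr)$, the $o(1)$ absorbs the $O(a(nt_0))$ error, and $Y,\widetilde{Y},\widehat{Y}$ are as in Lemma~\ref{integral-eq}. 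After the change of variable $u=ns$, the assumption $a(t)=t^{-1/2}(1+o(1))$ gives $n\,a(ns)^2 \to 1/s$, so the drift converges to $C_1\int_{t_0}^t \eta_s/s\,ds$.

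For the martingale part, I would compute the two variational processes $\langle M^{(n)},\overline{M^{(n)}}\rangle$ and $\langle M^{(n)},M^{(n)}\rangle$ by expanding into the nine bilinear pairs from $Y,\widetilde{Y},\widehat{Y}$ (and their conjugates), applying the variational formulas of Lemma~\ref{integral-eq} together with their straightforward cross-term analogues, and invoking the ergodic averages $[g_\kappa,\overline{g_\kappa}](X_s), [g,g](X_s) \to \langle[g_\kappa,\overline{g_\kappa}]\rangle, \langle[g,g]\rangle$. The structural key is that each pair carries a definite exponential phase built from $e^{\pm 2i\kappa s}, e^{\pm 4i\kappa s}, e^{\pm 2i\tilde{\theta}_s}, e^{\pm 4i\tilde{\theta}_s}$: any pair whose phase contains a non-cancelled $e^{\pm 2i\kappa s}$ or $e^{\pm 4i\kappa s}$ is killed by integration by parts as in Lemma~\ref{partial integration}. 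The surviving contributions to $\langle M^{(n)},\overline{M^{(n)}}\rangle$ are the three diagonal pairings $\langle Y,\overline{Y}\rangle,\langle \widetilde{Y},\overline{\widetilde{Y}}\rangle,\langle \widehat{Y},\overline{\widehat{Y}}\rangle$, giving
\[
\langle M^{(n)},\overline{M^{(n)}}\rangle_t \longrightarrow \frac{2\langle[g_\kappa,\overline{g_\kappa}]\rangle + 4\langle[g,g]\rangle}{4\kappa^2}\int_{t_0}^t \frac{ds}{s} = |C_2|^2\int_{t_0}^t \frac{ds}{s},
\]
whereas for $\langle M^{(n)},M^{(n)}\rangle$ only $\langle Y,\widetilde{Y}\rangle$ (whose $e^{2i\kappa s}$ and $e^{-2i\kappa s}$ cancel) and $\langle \widehat{Y},\widehat{Y}\rangle$ survive, each leaving a factor $e^{4i\tilde{\theta}_s}=\eta_s^2$, giving
\[
\langle M^{(n)},M^{(n)}\rangle_t \longrightarrow -\frac{2\langle[g_\kappa,\overline{g_\kappa}]\rangle + 4\langle[g,g]\rangle}{4\kappa^2}\int_{t_0}^t \frac{\eta_s^2}{s}\,ds = C_2^2\int_{t_0}^t \frac{\eta_s^2}{s}\,ds.
\]

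The martingale representation is then obtained by a Dambis-Dubins-Schwarz argument: setting $B_t := \int_{t_0}^t (C_2\eta_s/\sqrt{s})^{-1}\,dM_s$ (well-defined since $|\eta_s|=1$ and $C_2\ne 0$), the limits above imply $\langle B,\overline{B}\rangle_t = \langle B,B\rangle_t = t-t_0$, which forces the imaginary part of $B$ to vanish and the real part to be a standard Brownian motion. Hence $dM_t = C_2\eta_t\,t^{-1/2}\,dB_t$, and combined with the drift limit this yields~\eqref{SDE-eta}. The main obstacle is the middle step: the careful bookkeeping of the nine cross-variations and the justification of the ergodic averages on the non-oscillating subset. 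In particular, the precise tracking of the surviving $e^{4i\tilde{\theta}_s}$ phases is what produces the $\eta_s^2$ in $\langle M,M\rangle$ and hence the multiplicative $\eta_t$ in the diffusion coefficient of the limiting SDE.
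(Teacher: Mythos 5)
Your proof is correct and follows essentially the same route as the paper: pass to the rescaled limit in Lemma~\ref{integral-eq}, use $\langle F\rangle=0$ and $na(nu)^2\to 1/u$ for the drift, compute the surviving diagonal and cross variational terms of $W^{(n)}=Y+\widetilde Y-2\widehat Y$ (the oscillating $e^{\pm 2i\kappa s}$, $e^{\pm 4i\kappa s}$ pairs being killed by integration by parts), and then recover the SDE from $\langle M,M\rangle$ and $\langle M,\overline M\rangle$. The paper simply compresses the final step into ``proceed as in the proof of Theorem~\ref{SDE}'' (martingale problem plus Lemma~\ref{martingale2}), whereas you make the Brownian motion explicit by a L\'evy/DDS argument; the two are equivalent.
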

\begin{proof}
Letting 
$s=t_0 > 0$
in 
(\ref{integral})
yields, as 
$n \to \infty$, 
\beq
\int_{nt_0}^{nt} a(u)^2 
e^{2i \tilde{\theta}_u} du
&\to& 
\int_{t_0}^{t} 
\frac {\eta_u}{u}
du
\\
\langle W^{(n)}_{\cdot, t_0}, W^{(n)}_{\cdot, t_0} \rangle_t
&\to&
\langle 
2[g_{\kappa}, \overline{g_{\kappa}}] 
+ 4[ g,g] 
\rangle
\int_{t_0}^t \frac{\eta_u^2}{u} du
\\
\langle W^{(n)}_{\cdot, t_0}, \overline{W^{(n)}_{\cdot, t_0}} \rangle_t
&\to&
\langle 2 [g_{\kappa}, \overline{g_{\kappa}}]
+ 4 [g, g] \rangle
\int_{t_0}^t \frac {du}{u}.
\eeq
We then proceed as in the proof of 
Theorem \ref{SDE}.
\QED
\end{proof}
\begin{remark}
$Z_t, B_t$
which appear in SDE's
(\ref{SDEbeta}), (\ref{SDE-eta}) of 
$\Psi$, $\eta$
are not independent.
In fact,
$W_t = \lim_{n \to \infty} W_t^{(n)}$, 
$V_t = \lim_{n \to \infty} V_t^{(n)}$
satisfy
\beq
dW_t
&=&
\sqrt{
2 \langle [g_{\kappa}, \overline{g_{\kappa}}] \rangle
+
4[g, g]
}
\frac {\eta_t}{\sqrt{t}} d B_t
\\
d V_t
&=&
\sqrt{ [g_{\kappa}, \overline{g_{\kappa}}] \rangle}
\left(
e^{i\Psi_t(c)}-1
\right)
\frac {d Z_t}{\sqrt{t}}
\\
d \langle W, V \rangle
&=&
\langle [g_{\kappa}, \overline{g_{\kappa}}] \rangle
\left(
e^{i\Psi_t(c)} - 1
\right)
\eta_t \frac {dt}{t}
\\
d \langle W, \overline{V} \rangle
&=&
\langle [g_{\kappa}, \overline{g_{\kappa}}] \rangle
\left(
e^{-i\Psi_t(c)} - 1
\right)
\eta_t \frac {dt}{t}
\eeq
which imply
\beq
dZ dB
&=&
\sqrt{
\frac{
\langle [g_{\kappa}, \overline{g_{\kappa}}] \rangle
}
{
2 \langle [g_{\kappa}, \overline{g_{\kappa}}] 
+ 4 [g,g]
\rangle
}
}
dt.
\eeq
\end{remark}
Here we note the following fact. 
By the time change
$u = \log t$, 
$\zeta_u := \log \eta_{e^u}$
satisfies the following SDE which is stationary in time. 
\begin{eqnarray}
d \zeta_u &=& i C_3 du + i C_4 d \tilde{B}_u
\label{time-change}
\\
\mbox{ where }
\quad
C_3 &:=&
-\frac {1}{\kappa}
\langle | g_{\kappa} |^2 \rangle
\in {\bf R}, 
\quad
C_4
:=
\frac{1}{2 \kappa}
\sqrt{
2 [ g_{\kappa}, \overline{g}_{\kappa} ]
+
4 [ g, g ]
}
\in {\bf R}
\nonumber
\end{eqnarray}
To summarize, 
the following facts have been proved. 

\noindent
(i)
For any 
$t > 0$, 
$\eta_t$
is uniformly distributed(Lemma \ref{uniform}).\\
(ii)
For any 
$0 < t_0 < t_1 < t_2 < \cdots < t_n$, 
random variables

$\{ 
\eta_{t_0}, \eta_{t_1}/\eta_{t_0}, 
\cdots
\eta_{t_n} / \eta_{t_{n-1}}
\}$
are independent(Lemma \ref{independence of eta}).\\
(iii)
For any 
$t_0 > 0$, 
$x_t = \eta_t / \eta_{t_0}$
satisfies an SDE on 
$t \ge t_0$
(Lemma \ref{Lemma-SDE-eta}) : 
\[
dx_t = C_1 \frac{x_t}{t} dt
+
C_2 \frac {x_t}{\sqrt{t}} d B_t, 
\quad
x_{t_0} = 1.
\]
These facts
determines (in distribution) the process
$\eta_t$
uniquely. 
In fact, for any 
$0 < t_0 < t_1 < \cdots < t_n$, 
the distribution of 
$\{ \eta_{t_0}, \eta_{t_1}, \cdots, \eta_{t_n} \}$
can be computed from that of 
$\{ 
\eta_{t_0}, \eta_{t_1}/ \eta_{t_0}, 
\cdots, 
\eta_{t_n} / \eta_{t_{n-1}} 
\}$
and the latter distribution can be determined uniquely from (ii) and (iii). 
Therefore 
the distribution of 
$\{ \eta_t \}$
is characterized by the constants 
$C_1, C_2$. 
More concretely, 
if we prepare 
1D Brownian motion 
$\{ B_t \}_{t \in {\bf R}}$
with 
$B_0 = 0$
and independent random variable
$X \in {\bf C}$
with uniform distribution on 
${\bf U}$, 
a process
$
X \exp \left[
i (C_3 u + C_4 B_u) 
\right]
$
has the same distribution as 
$\{ \eta_{e^u} \}$
by (\ref{SDE-eta}), (\ref{time-change}). 
\section{Convergence of the joint distribution}
We finish the proof of Theorem \ref{sc-limit}.
\subsection{Behavior of the joint distribution}
\begin{proposition}
\label{joint-convergence}
For any
$c_1, \cdots, c_m \in {\bf R}$, 
$t > 0$, 
\begin{equation}
(\Psi_t^{(n)}(c_1), \cdots, \Psi_t^{(n)}(c_m), 
(\theta_{nt}(\kappa))_{2 \pi {\bf Z}})
\stackrel{d}{\to}
(\Psi_t(c_1), \cdots, \Psi_t(c_m), 
\phi_t), 
\label{joint-statement}
\end{equation}
as
$n \to \infty$, 
where
$(\Psi_t(c_1), \cdots, \Psi_t(c_m))$
and 
$\phi_t$
are independent and 
$\phi_t$
is uniformly distributed on 
$[0, 2\pi)$.
\end{proposition}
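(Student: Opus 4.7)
Joint tightness of the vector $(\Psi^{(n)}_t(c_1),\ldots,\Psi^{(n)}_t(c_m),(2\theta_{nt}(\kappa))_{2\pi{\bf Z}})$ is immediate from Lemma \ref{tightness} together with compactness of $[0,2\pi)$, so every subsequential weak limit has the form $(\Psi_t(c_1),\ldots,\Psi_t(c_m),\phi_t)$. The first-block marginal is identified by Theorem \ref{SDE} and the last-coordinate marginal is the uniform law on $[0,2\pi)$ by Proposition \ref{uniform}, so the entire content of (\ref{joint-statement}) is the asymptotic independence of the two blocks.

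By a Fourier/Stone--Weierstrass argument, independence reduces to showing that for every bounded continuous $F:{\bf R}^m\to{\bf R}$ and every $k\in{\bf Z}\setminus\{0\}$,
\[
{\bf E}\Bigl[F\bigl(\Psi^{(n)}_t(c_1),\ldots,\Psi^{(n)}_t(c_m)\bigr)\,e^{2ik\tilde\theta_{nt}(\kappa)}\Bigr]\;\stackrel{n\to\infty}{\longrightarrow}\;0,
\]
since $e^{ik(2\theta_{nt}(\kappa))_{2\pi{\bf Z}}}=e^{2ik\kappa nt}\,e^{2ik\tilde\theta_{nt}(\kappa)}$ and the leading factor is a deterministic unit complex number. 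The plan is to extend the integral-equation argument that proved Proposition \ref{uniform}. Setting $\tilde\xi^{(k)}_t:=e^{2ik\tilde\theta_t(\kappa)}$ and applying Lemma \ref{partial integration} exactly as in the derivation of (\ref{xi}) gives
\[
\tilde\xi^{(k)}_{nt}\;=\;1\;+\;\int_0^{nt} b_k(s)\,\tilde\xi^{(k)}_s\,ds\;+\;(\mbox{martingale})\;+\;(\mbox{remainder}),
\]
with $Re\,b_k(s)=-\gamma_k a(s)^2$ and $\gamma_k=-Re\,\langle G_k\rangle>0$. Multiplying through by $F(\Psi^{(n)}_t)$, taking expectation and controlling the resulting cross terms will yield a Gronwall-type inequality for $\rho^{(n)}_F(t):={\bf E}[F(\Psi^{(n)}_t)\tilde\xi^{(k)}_{nt}]$ of the form $|\rho^{(n)}_F(t)|\le C(nt)^{-\gamma_k}\to 0$. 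A conceptually equivalent packaging is the rotational-symmetry check on the joint limit SDE: the $\eta$-equation $d\eta_t=C_1\eta_t\,dt/t+C_2\eta_t\,dB_t/\sqrt{t}$ of Lemma \ref{Lemma-SDE-eta} is \emph{linear} in $\eta_t$, so for any $u\in{\bf U}$ the pair $(\Psi_t(c_j),u\eta_t)$ solves the same joint system as $(\Psi_t(c_j),\eta_t)$; averaging $u$ against the uniform law on ${\bf U}$ would then force the conditional law of $\eta_t$ given the $\Psi_t(c_j)$'s to be uniform, provided the initial datum has the required product structure at some small $t_0>0$ -- and the latter is exactly the $t_0$-version of the decay claim displayed above.

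\textbf{Main obstacle.} The delicate step is controlling the cross contributions that arise when $F(\Psi^{(n)}_t)$ is inserted into the integral equation. In the original proof of Proposition \ref{uniform} the martingale terms vanish after taking expectation; here, because of the $\Psi$-dependence of $F$, they survive through quadratic covariations with $V^{(n)}$ and must be shown to be $o(1)$. The main input will be that the martingales driving $\tilde\xi^{(k)}$ (carrying oscillatory factors $e^{2ik\tilde\theta_s}$ and $e^{\pm 2i\kappa s+2i(k\pm 1)\tilde\theta_s}$) and those driving $V^{(n)}$ (carrying the factor $e^{2i\theta_s(\kappa+c/n)}-e^{2i\theta_s(\kappa)}$) involve mismatched Fourier modes in $\tilde\theta_s$, so a further integration by parts via Lemma \ref{partial integration} makes their cross-variation of order $a(s)^2=s^{-1}(1+o(1))$ and does not disrupt the $(nt)^{-\gamma_k}$ decay. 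The rate at which $F(\Psi^{(n)}_t)$ varies in $t$ on the macroscopic scale is controlled by the a priori estimates of Lemmas \ref{a priori}, \ref{a priori2} and \ref{martingale}, which make the error analysis quantitative.
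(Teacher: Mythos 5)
Your proposal has a genuine gap in both of the routes it sketches.

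\textbf{The Fourier/Gronwall route.} You claim that because the martingales driving $V^{(n)}$ and those driving $\tilde\xi^{(k)}_{nt}=\eta^{(n)}_t$ carry ``mismatched Fourier modes in $\tilde\theta_s$,'' a further integration by parts makes the cross-variations harmless and the bound $|\rho^{(n)}_F(t)|\le C(nt)^{-\gamma_k}$ follows. That is not the case. The paper's own Remark after Lemma \ref{Lemma-SDE-eta} computes precisely these cross-variations in the scaling limit and finds
\[
d\langle W,V\rangle_t
=\langle [g_{\kappa},\overline{g_{\kappa}}]\rangle\,
\bigl(e^{i\Psi_t(c)}-1\bigr)\,\eta_t\,\frac{dt}{t},
\]
a non-oscillating contribution of order $dt/t$: the factor $e^{2i\theta_s(\kappa)}$ in $V^{(n)}$ cancels the $e^{\mp 2i\kappa s}$ in the $\eta$-martingales and leaves a term that does \emph{not} integrate away. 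Inserting such an $O(1/s)$ inhomogeneous term into your Gronwall inequality and integrating against the decay kernel $\exp(-\int_s^t\gamma_k\,du/u)=(s/t)^{\gamma_k}$ gives
$\int_\epsilon^t s^{-1}(s/t)^{\gamma_k}\,ds \sim \gamma_k^{-1}$, i.e.\ an $O(1)$ contribution, not $o(1)$. So the decay estimate your reduction rests on does not come out of this computation. The conclusion ${\bf E}[F(\Psi_t)\eta_t^{\,k}]=0$ for $k\neq0$ is of course true in the limit, but not by decorrelation of the drivers: in the limit $\Psi_t$ and $\eta_t/\eta_0$ are driven by \emph{correlated} Brownian motions and are genuinely dependent; the independence of $\Psi_t$ and $\arg\eta_t$ comes entirely from the extra factor $\eta_0=\lim_{s\downarrow0}\eta_s$, which is uniform on ${\bf U}$ and independent of all Brownian increments.

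\textbf{The rotational-symmetry route.} Your second packaging is in fact close to what the paper does, but you defer the ``required product structure at small $t_0$'' back to the very Fourier claim whose proof is the issue, which is circular. The ingredient you are missing is the approximation device that makes the product structure available at $t=0^+$ without a decay estimate: the paper invokes Lemma \ref{independence of eta}(2) (so $\eta_{1/n}$ and $\eta_t/\eta_{1/n}$ are independent) \emph{together with} \cite{KS} Proposition 4.5, which supplies a process $\tilde\Psi^{(n)}_t$ solving the limit SDE (\ref{SDE-Psi}) on $[1/n,\infty)$ from the deterministic value ${\bf c}/n$, with $\sup_{n^{-1}<t<n}|\tilde\Psi^{(n)}_t({\bf c})-\Psi_t({\bf c})|\stackrel{P}{\to}0$. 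Because $\tilde\Psi^{(n)}$ is measurable with respect to the Brownian increments after time $1/n$, the pair $(\eta_t/\eta_{1/n},\tilde\Psi^{(n)}_t)$ is independent of $\eta_{1/n}$; letting $n\to\infty$, $(\eta_t/\eta_0,\Psi_t)$ is independent of $\eta_0$, and since $\eta_0$ is uniform on ${\bf U}$ this forces $\arg\eta_t$ to be uniform and independent of $\Psi_t$. Without this approximation step (and without the tightness of $(\Psi^{(n)},\eta^{(n)})$ from Lemmas \ref{tightness}, \ref{tightness-eta} which lets one speak of the joint limit at all), the argument does not close.
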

\begin{proof}
For simplicity, 
we use the following notation. 
${\bf c}:=(c_1, \cdots, c_m)$, 
$\Psi_t^{(n)}({\bf c}):=
(\Psi_t^{(n)}(c_1), \cdots, \Psi_t^{(n)}(c_m))$, 
and
$\Psi_t({\bf c}):=
(\Psi_t(c_1), \cdots, \Psi_t(c_m))$.
It suffices to show 
(\ref{joint-statement})
with
$(\theta_{nt}(\kappa))_{2 \pi {\bf Z}}$
being replaced by 
$(\tilde{\theta}_{nt}(\kappa))_{2 \pi {\bf Z}}$, 
since 
$(\tilde{\theta}_{nt}(\kappa))_{2 \pi {\bf Z}}$
converges to the uniform distribution by 
Lemma \ref{uniform}. 
By 
Lemmas \ref{tightness}, \ref{tightness-eta}, 
for any fixed
$t_0 > 0$, 
the process 
$\{ (\Psi^{(n)}_t({\bf c}), \eta_t^{(n)}) \}_{n \ge 1}$
on 
$[t_0, \infty)$
is a 
tight family.
Hence we can assume
$(\Psi^{(n)}_t({\bf c}), \eta_t^{(n)})_{t > 0}
\stackrel{d}{\to}
(\Psi_t({\bf c}), \eta_t)_{t > 0}$.
By 
Lemma \ref{independence of eta}
$\eta_{1/n}$
and 
$\eta_t / \eta_{1/n}$
are independent.

We next consider a process
$\tilde{\Psi}_t^{(n)}({\bf c})$
which is defined on 
$[\frac 1n, \infty)$
and is the solution to (\ref{SDEbeta})
with initial value 
$\tilde{\Psi}_{\frac 1n}^{(n)}({\bf c}) = \frac {{\bf c}}{n}$.
\cite{KS} Proposition 4.5
proves the following fact
\[
\sup_{n^{-1} < t < n}|\tilde{\Psi}_t^{(n)}({\bf c}) - \Psi_t({\bf c})| \stackrel{P}{\to} 0, 
\quad
n \to \infty.
\]
Since 
$\eta_{\frac 1n}$
and 
$(\eta_t / \eta_{1/n}, \tilde{\Psi}_t^{(n)}({\bf c}))$
are independent, by letting 
$n \to \infty$, 
it follows that 
$\eta_0 := \lim_{t\downarrow 0}\eta_t$
and 
$(\eta_t / \eta_0, \Psi_t({\bf c}))$
are independent.
Since 
$\eta_0$
is uniformly distributed on 
${\bf U}$,  
$\tilde{\phi}_t := \arg\eta_t 
= \arg\left(\eta_0 \cdot 
\frac {\eta_t}{\eta_0}\right)$
and 
$\Psi_t$
are independent.
\QED
\end{proof}
\subsection{Convergence of 
$\Psi_t^{(n)}$ as increasing functions}
\begin{proposition}
\label{coupling}
Fix any 
$t > 0$. 
Then we can find a coupling such that the following statement is valid for 
a.s.
\[
\lim_{n \to \infty}
(\Psi_t^{(n)})^{-1}(x)=\Psi_t^{-1}(x),  
\quad
\lim_{n \to \infty} (2\theta_{nt}(\kappa))_{2 \pi {\bf Z}}
=
\phi_t
\]
for any 
$x \in {\bf R}$
where 
$\phi_t$
is uniformly distributed and independent of 
$\Psi_t$.
\end{proposition}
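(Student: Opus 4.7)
The plan is to pass from the convergence in distribution given by Proposition \ref{joint-convergence} to an almost-sure coupling via Skorohod's representation theorem, and then upgrade pointwise convergence on a countable dense set to convergence of the inverses by using monotonicity.

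First, I would fix a countable dense set $D \subset {\bf R}$. By Proposition \ref{joint-convergence} applied to arbitrary finite subsets of $D$, together with Kolmogorov's consistency theorem, the joint law
\[
\bigl( (\Psi_t^{(n)}(c))_{c \in D}, \, (2\theta_{nt}(\kappa))_{2\pi {\bf Z}} \bigr)
\]
converges in ${\bf R}^D \times [0, 2\pi)$ (equipped with the product topology) to $\bigl((\Psi_t(c))_{c \in D}, \phi_t\bigr)$, where $\phi_t$ is uniform on $[0, 2\pi)$ and independent of $(\Psi_t(c))_{c \in D}$. Since this product space is Polish, Skorohod's representation theorem supplies a probability space on which the convergence holds almost surely, and the independence of $\phi_t$ from the limit process is preserved because joint distributions are preserved under coupling.

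Next I would upgrade convergence at $D$ to convergence everywhere on ${\bf R}$. Each $\Psi_t^{(n)}(\cdot)$ is strictly increasing because $\theta_{nt}(\kappa)$ is strictly increasing in $\kappa$ by (\ref{theta-kappa}), and $\Psi_t(\cdot)$ is continuous and strictly increasing by Lemma \ref{parameter-continuity}. The elementary fact that a sequence of monotone functions converging on a dense set to a continuous monotone function must converge pointwise everywhere (a variant of Lemma \ref{density}) then gives $\Psi_t^{(n)}(c) \to \Psi_t(c)$ for every $c \in {\bf R}$ on the same event of full measure. Applying Lemma \ref{inverse} with $y_n = y = x$ yields $(\Psi_t^{(n)})^{-1}(x) \to \Psi_t^{-1}(x)$ for every $x \in {\bf R}$, provided that the ranges of $\Psi_t^{(n)}$ and $\Psi_t$ cover ${\bf R}$. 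The former is immediate from continuity of $\theta_{nt}$ in $\kappa$ and the fact that $\theta_{nt}(\kappa + c/n) \to \pm\infty$ as $c \to \pm\infty$.

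The main obstacle is to verify the analogous unboundedness for $\Psi_t$, i.e.\ that $\Psi_t(c) \to \pm\infty$ a.s.\ as $c \to \pm\infty$, so that $\Psi_t^{-1}(x)$ is defined for every $x \in {\bf R}$. Using $\Psi_t(c) = 2ct + W_t(c)$ from (\ref{SDE-Psi}), the quadratic variation $\langle W(c), W(c)\rangle_t = D^2 \int_0^t s^{-1} \bigl(1 - \cos \Psi_s(c)\bigr)\, ds \leq 2 D^2 \log(t/t_0) + \text{const}$ for any $t_0 > 0$ (using a short-time bound $|\Psi_s(c)|^2 \leq Cs$ near $s = 0$ from Lemma \ref{a priori}); hence $W_t(c)$ is uniformly controlled in $c$, while the drift $2ct$ dominates, giving the required unboundedness. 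Once this is in place, the three points in the statement---pointwise inverse convergence, convergence of $(2\theta_{nt}(\kappa))_{2\pi {\bf Z}}$, and independence of $\phi_t$ from $\Psi_t$---all hold simultaneously on the Skorohod coupling, completing the proof.
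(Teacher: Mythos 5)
Your argument is correct but takes a genuinely different route from the paper's. The paper establishes Proposition~\ref{coupling} by first proving tightness of $\{\Psi_t^{(n)}(\cdot)\}_n$ as a sequence of increasing--function--valued (equivalently, measure--valued) processes via Lemma~\ref{compactness}, then applies Skorohod in the path space $C([0,T]\to{\cal M})$, reads off a.s.\ convergence in the metric $\rho$, and finally invokes Lemma~\ref{parameter-continuity} to pass to pointwise convergence and Lemma~\ref{inverse} for the inverses. You instead bypass Lemma~\ref{compactness} entirely: you take the finite--dimensional convergence from Proposition~\ref{joint-convergence}, observe that for a countable set $D$ this is exactly weak convergence of the $\bbbr^D\times[0,2\pi)$--valued random vector in the product topology (which is Polish), apply Skorohod there to get a.s.\ convergence at every $c\in D$ simultaneously, and then upgrade to all $c\in\bbbr$ by monotonicity and the continuity of $\Psi_t(\cdot)$ (Lemma~\ref{parameter-continuity}), which is exactly the content of Lemma~\ref{density}. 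This is a leaner argument, and it makes Lemma~\ref{compactness} and its preparatory lemma superfluous for this particular step. Two small remarks. First, after the Skorohod coupling you should note that monotonicity of $\hat\Psi_t^{(n)}|_D$ is an a.s.\ event of the law on $\bbbr^D$ and is therefore preserved by the coupling, so Lemma~\ref{density} is applicable; this is routine but worth saying. Second, you correctly flag that Lemma~\ref{inverse} needs $x\in\mathrm{Ran}\,\Psi_t$, i.e.\ that $\Psi_t(c)\to\pm\infty$ as $c\to\pm\infty$ a.s.; the paper's proof is silent on this. Your sketch (drift $2ct$ dominates the martingale $W_t(c)$, whose quadratic variation grows only logarithmically in $|c|$ after splitting the $s$--integral at $s\sim 1/|c|$) is the right idea, but to make it a.s.\ rather than in probability you should exploit the monotonicity of $\Psi_t(\cdot)$ and run a Chebyshev/Borel--Cantelli argument along integer $c$; as written it is a little informal, but this is a gap shared with, not introduced relative to, the paper's own proof.
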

As is explained in section 5, 
Proposition \ref{coupling} 
completes the proof of Theorem \ref{sc-limit}. 
To prove
Proposition \ref{coupling}
we shall show below that the convergence 
$\Psi^{(n)}_t \to \Psi_t$
holds in the sense of increasing function-valued process.
Let 
${\cal M}$
be the set of non-negative measures on 
$[a,b]$. 
Fix 
$\{ f_j \}_{j \ge 1}$
a family of smooth functions on 
$[a,b]$
satisfying the property
\[
\mbox{ for }
\omega \in {\cal M} \mbox{ if }
\int_a^b f_j (x) d \omega (x) = 0
\mbox{ for any }
j \ge 1
\Rightarrow
\omega = 0.
\]
We define a metric 
$\rho$
on 
${\cal M}$
by
\[
\rho(\omega_1, \omega_2)
:=
\sum_{j=1}^{\infty}
\frac {1}{2^j}
\left(
\left|
\int_a^b f_j (x) d(\omega_1(x) - \omega_2(x))
\right|
\wedge 1
\right).
\]
Let
\[
\Omega := C([0,T] \to {\cal M})
\]
for 
$T < \infty$. 
We further define 
for a smooth function
$f$
on 
$[a,b]$
a map
$\Phi_f : \Omega \to C([0,T] \to {\bf R})$
by
\begin{eqnarray}
\Phi_f(\omega)(t)
&:=&
\int_a^b f(x) d \omega_t(x)
=
\left[
f(x) \omega_t(x)
\right]_a^b
-
\int_a^b f'(x) \omega_t(x) dx.
\label{Phi-f}
\end{eqnarray}
%
\begin{lemma}
\label{compactness}
Let 
$\{ \mu_n \}_{n \ge 1}$
be a family of probability measures on 
$\Omega$. 
Suppose for each smooth function
$f$
on 
$[a,b]$
a family of probability measures
$\{ \Phi_f^{-1} \mu_n \}_{n \ge 1}$
on 
$C([0,T] \to {\bf R})$
is tight.
Assume further there exists a constant 
$C$
such that
\begin{equation}
{\bf E}_{\mu_n}
\left[
\sup_{0 \le t \le T}
\int_a^b d \omega_t(x)
\right]
\le C
\label{boundedness}
\end{equation}
holds for any
$n \ge 1$. 
Then 
$\{ \mu_n \}_{n \ge 1}$
is tight.
\end{lemma}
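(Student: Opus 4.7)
The plan is to carry out a measure-valued version of the Arzel\`a--Ascoli argument: I would combine the uniform mass bound (\ref{boundedness}) to get pointwise compactness in $\mathcal{M}$, with the tightness of each $\Phi_{f_j}^{-1}\mu_n$ to get temporal equicontinuity of the projections, and then use the separation property of the $\{f_j\}$ to lift this back to convergence in $\Omega$.

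Fix $\varepsilon>0$. First I would use Markov's inequality together with (\ref{boundedness}) to choose $M_{\varepsilon}>0$ so that the ball
$B=\{\omega\in\Omega\colon \sup_{0\le t\le T}\omega_t([a,b])\le M_{\varepsilon}\}$
has $\mu_n(B)\ge 1-\varepsilon/2$ uniformly in $n$. Next, using the assumed tightness of $\Phi_{f_j}^{-1}\mu_n$, I would pick for each $j$ a compact set $K_j\subset C([0,T]\to\mathbb{R})$ with $\mu_n(\Phi_{f_j}(\omega)\in K_j)\ge 1-\varepsilon\,2^{-j-1}$, and set
\[
K\;:=\;B\,\cap\,\bigcap_{j\ge 1}\Phi_{f_j}^{-1}(K_j).
\]
Then $\mu_n(K)\ge 1-\varepsilon$ uniformly in $n$, so it only remains to show that $K$ is relatively compact in $\Omega$.

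For relative compactness of $K$, take $\omega^{(n)}\in K$. Because $\Phi_{f_j}(\omega^{(n)})\in K_j$ is relatively compact in $C([0,T])$ for each $j$, a standard diagonal extraction yields a subsequence (still indexed by $n$) and continuous functions $g_j\in C([0,T])$ such that $\Phi_{f_j}(\omega^{(n)})\to g_j$ uniformly on $[0,T]$ for every $j$. For each fixed $t$, the measures $\omega^{(n)}(t)$ have total mass $\le M_{\varepsilon}$, so by Banach--Alaoglu they are weak-$*$ relatively compact; every weak-$*$ limit point $\omega(t)$ must satisfy $\int f_j\,d\omega(t)=g_j(t)$, and by the separation assumption on $\{f_j\}$ this uniquely identifies $\omega(t)$. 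Hence $\omega^{(n)}(t)\to\omega(t)$ in $\rho$ for every $t$. To upgrade to uniformity in $t$, given $\delta>0$ pick $J$ with $\sum_{j>J}2^{-j}<\delta/2$; then
\[
\sup_{0\le t\le T}\rho(\omega^{(n)}(t),\omega(t))
\;\le\;\sum_{j=1}^{J}2^{-j}\sup_{0\le t\le T}\bigl|\Phi_{f_j}(\omega^{(n)})(t)-g_j(t)\bigr|\;+\;\tfrac{\delta}{2},
\]
and the first term tends to $0$ as $n\to\infty$ by the uniform convergence obtained above. Continuity of $t\mapsto\omega(t)$ in $\rho$ (hence $\omega\in\Omega$) follows from continuity of each $g_j$ and dominated convergence in the defining series of $\rho$.

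I expect the main obstacle to be the passage from ``coordinatewise'' convergence in the $\Phi_{f_j}$'s to convergence in the path space $\Omega$. Tightness of each $\Phi_{f_j}^{-1}\mu_n$ only controls one linear functional at a time, so without the mass bound (\ref{boundedness}) one could have mass escaping to produce spurious limit points; and without the separation property of $\{f_j\}$ the limit $\omega(t)$ would not be uniquely determined by the $g_j$'s. The bound (\ref{boundedness}) and the separation hypothesis work in tandem precisely to close this gap, and handling them simultaneously via the diagonal extraction plus the $J$-truncation of $\rho$ is the crux of the argument.
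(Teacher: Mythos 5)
Your proof is correct, and the outer framework (Markov to get the mass-bound set $\Omega_0$, tightness of $\Phi_{f_j}^{-1}\mu_n$ to get each $\mathcal{K}_j$, intersect and bound $\mu_n(\mathcal{K}^c)\le 2\varepsilon$, diagonal extraction) coincides step for step with the paper's. The divergence is in how you establish that $\mathcal{K}$ is relatively compact after the diagonal extraction has produced uniform limits $g_j=\lim_n\Phi_{f_j}(\omega^{(n)})$. The paper at that point invokes the fact that, for any $f\in C[a,b]$ and $\varepsilon'>0$, there is a finite linear combination $g$ of the $f_j$ with $\sup|f-g|<\varepsilon'$; using the uniform mass bound $M$ it then deduces uniform convergence of $\Phi_f(\omega_{m_i})$ for \emph{every} $f\in C[a,b]$, and hence (implicitly via Riesz representation) existence of the limit path $\omega$. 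That step quietly uses density of $\mathrm{span}\{f_j\}$ in $C[a,b]$, which is strictly more than the stated separation hypothesis. Your route sidesteps this: you use Banach--Alaoglu on the ball of total mass $\le M_\varepsilon$ to get a weak-$*$ limit point of $\omega^{(n)}(t)$ for each fixed $t$, use the separation hypothesis to show this limit point is unique (so in fact the whole sequence converges in $\rho$), and then upgrade to uniformity in $t$ by truncating the series defining $\rho$ at level $J$ and letting the finitely many $\sup_t|\Phi_{f_j}(\omega^{(n)})-g_j|$ go to zero. This only uses what the lemma actually assumes, and it also makes the metric structure of $\Omega$ do the work that Riesz does in the paper. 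In practice the $\{f_j\}$ one chooses (polynomials, trigonometric functions) do have dense span, so both arguments apply; but yours is marginally more economical in its hypotheses.
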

\begin{proof}
From
$(\ref{boundedness})$
we see that for any 
$\epsilon > 0$, 
there exists a 
$M > 0$
such that 
\[
\mu_n
\left(
\sup_{0 \le t \le T}
\int_a^b d \omega_t(x) \le M
\right)
\ge 1  - \epsilon.
\]
Set
\[
\Omega_0 :=
\left\{ 
\omega \in \Omega \, \Biggl| \,
\sup_{0 \le t \le T}
\int_a^b d \omega_t (x) \le M
\right\}.
\]
Since 
$\{ \Phi_{f_j}^{-1} \mu_n \}_{n \ge 1}$
is tight for each 
$j \ge 1$, 
there exists a compact set 
${\cal K}_j$ 
in 
$C([0,T] \to {\bf R})$
such that 
$
\mu_n \left(
\Phi_{f_j}^{-1} ({\cal K}_j)
\right)
>
1 - \frac {\epsilon}{2^j}.
$
Set 
\[
{\cal K}:=
\bigcap_{j=1}^{\infty}
\Phi_{f_j}^{-1} ({\cal K}_j) \cap \Omega_0
\subset \Omega.
\]
Then
\begin{equation}
\mu_n({\cal K}^c)
\le
\sum_{j=1}^{\infty}
\mu_n \left(
\Phi_{f_j}^{-1}({\cal K}_j^c
\right)
+
\mu_n (\Omega_0^c)
\le
\epsilon
+
\sum_{j=1}^{\infty}
\frac {\epsilon}{2^j}
=
2 \epsilon.
\label{epsilon}
\end{equation}
We show 
${\cal K}$
is compact in 
$\Omega$. 
Let 
$\{ \omega_n \}_{n \ge 1}$
be a sequence in 
${\cal K}$.
Since
${\cal K}_1$
is compact, there exists a subsequence 
$\{ n_i^1 \}$
along which
$\Phi_{f_1}\left( 
\omega_{n_i^1}
\right)$
is uniformly convergent in 
$C([0,T] \to {\bf R})$. 
Then, 
using the compactness of 
${\cal K}_2$
we can find a subsequence 
$\{ n_i^2 \}$
of 
$\{ n_i^1 \}$
along which 
$\Phi_{f_2} \left( \omega_{n_i^2} \right)$
is uniformly convergent in 
$C([0,T] \to {\bf R})$.
Continuing this procedure for each 
$j$
we find a subsequence 
$\{ n_i^j \}$
of 
$\{ n_i^{j-1} \}$
along which 
$\Phi_{f_j} \left( n_i^j  \right)$
is uniformly convergent in 
$C([0,T] \to {\bf R})$. 
Let 
$m_i = n_i^i$. 
Then for each 
$j\ge 1$, 
$\Phi_{f_j} (\omega_{m_i})$
converges uniformly in 
$C([0,T] \to {\bf R})$. 
Since, 
for any 
$f \in C[a,b]$ 
and 
$\epsilon' > 0$, 
there exists a finite linear combination 
$g$
of 
$\{ f_j \}$
such that 
$
\sup_{x \in [a,b]} 
| f(x) - g(x) | < \epsilon'.
$
We easily have 
\[
\sup_{t \in [0,T]}
| \Phi_f (\omega_{m_i})(t) - \Phi_g (\omega_{m_i})(t) |
\le \epsilon' M
\]
where we have used
$
\int_a^b d \omega_t (x) \le M
$
for any 
$\omega \in {\cal K}$. 
Therefore we see that the limit
$
\lim_{i \to \infty} \Phi_f (\omega_{m_i})(t)
$
exists uniformly with respect to 
$t\in [0,T]$, 
which implies that there exists a 
$\omega \in \Omega$
satisfying 
\[
\int_a^b d \omega_t (x) \le M
\mbox{ and }
\lim_{i \to \infty}
\Phi_f (\omega_{m_i}) 
=
\Phi_f (\omega) (t)
\]
for any 
$t \in [0,T]$
and 
$f \in C([a,b])$. 
Consequently 
we have the compactness of 
${\cal K}$
which together with 
(\ref{epsilon})
shows the tightness of 
$\{ \mu_n \}_{n \ge 1}$. 
\QED
\end{proof}
We shall check that the conditions for 
Lemma \ref{compactness}
are satisfied for 
$\Psi^{(n)}_t( \cdot )$. 
The inequality 
(\ref{boundedness})
follows from 
Lemma \ref{a priori2}.
In view of 
(\ref{Phi-f}), 
the required tightness is implied by the following lemma. 
\begin{lemma}
For 
$f \in C^{\infty}(a,b)$
let
\[
g_n(t) := \int_a^b f(x) \Psi_t^{(n)}(x) dx.
\]
Then, as a family of probability measures on 
$C([0,T] \to {\bf R})$, 
$\{ g_n \}_{n \ge 1}$
is tight.
\end{lemma}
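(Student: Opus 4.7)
The plan is to verify the standard Kolmogorov--Chentsov tightness criterion for $\{g_n\}$ in $C([0,T] \to \mathbf{R})$. Since $\Psi^{(n)}_0 \equiv 0$ forces $g_n(0) = 0$, only the modulus estimate is required; specifically, it suffices to exhibit constants $C = C(T,a,b,f,\kappa) > 0$ and $\alpha > 0$ such that
\[
\sup_{n \ge 1} \mathbf{E}\bigl[ |g_n(t) - g_n(s)|^4 \bigr] \le C |t - s|^{1 + \alpha}, \qquad s,t \in [M/n, T],
\]
while treating the initial interval $[0, M/n]$ separately via the a priori bound already used in the proof of Lemma \ref{tightness}.

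By Fubini, $g_n(t) - g_n(s) = \int_a^b f(x)\bigl(\Psi^{(n)}_t(x) - \Psi^{(n)}_s(x)\bigr)\,dx$, and Lemma \ref{Theta} gives the decomposition
\[
\Psi^{(n)}_t(x) - \Psi^{(n)}_s(x) = 2x(t-s) + \operatorname{Re}\bigl(\epsilon^{(n)}_t(x) - \epsilon^{(n)}_s(x)\bigr) + \frac{1}{\kappa}\operatorname{Re}\bigl(V^{(n)}_t(x) - V^{(n)}_s(x)\bigr) + \frac{1}{\kappa}\operatorname{Re}\,\Delta^{(n)}_{t,s}(x),
\]
where $\Delta^{(n)}_{t,s}(x) := \delta_{nt}(\kappa + x/n) - \delta_{nt}(\kappa) - \delta_{ns}(\kappa + x/n) + \delta_{ns}(\kappa)$. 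Integrating against $f(x)$ splits $g_n(t) - g_n(s)$ into a deterministic part, a martingale part $M_n(t,s)$, and the $\delta$-remainder $R_n(t,s)$.

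The martingale part is the main substantive term, and I would handle it by combining Lemma \ref{martingale}, which gives $\mathbf{E}[|V^{(n)}_t(x) - V^{(n)}_s(x)|^4] \le C(t-s)^2$ uniformly for $x$ in the compact set $[a,b]$ and for $s,t$ bounded away from $0$, with Jensen's inequality applied to the integral over $x$:
\[
\mathbf{E}\!\left[|M_n(t,s)|^4\right] \le \frac{\|f\|_\infty^4 (b-a)^3}{\kappa^4} \int_a^b \mathbf{E}\bigl[|V^{(n)}_t(x) - V^{(n)}_s(x)|^4\bigr]\,dx \le C(t-s)^2,
\]
which is exactly the Kolmogorov bound with $\alpha = 1$. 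The deterministic drift $2(t-s)\int_a^b x f(x)\,dx$ is clearly $O(|t-s|)$, and the $\epsilon^{(n)}$ piece reduces to an integral from $ns$ to $nt$ of $a(u)$ times bounded quantities, which under assumption (C) gives $|\epsilon^{(n)}_t(x) - \epsilon^{(n)}_s(x)| \le C(|t-s| + n^{-1/2})$ uniformly in $x \in [a,b]$; this is enough for a linear-in-$|t-s|$ bound.

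The main obstacle will be the remainder $R_n(t,s)$, where Lemma \ref{fourth}(3) only yields pointwise-in-$c$ decay of the $\delta$-difference and we need to absorb the dependence on $x$. The idea is to avoid bounding $R_n$'s increments directly and instead show that $\sup_{0 \le t \le T} |\int_a^b f(x) \operatorname{Re}(\delta_{nt}(\kappa + x/n) - \delta_{nt}(\kappa))\,dx| \to 0$ in $L^2$: by Fubini this quantity is dominated by
\[
\|f\|_\infty^2 (b-a) \int_a^b \mathbf{E}\!\left[\max_{0 \le t \le T} |\delta_{nt}(\kappa + x/n) - \delta_{nt}(\kappa)|^2\right] dx,
\]
and re-inspecting the proof of Lemma \ref{fourth}(3) shows that the integrand is both uniformly bounded in $x \in [a,b]$ and tends to $0$ for each $x$, so dominated convergence finishes the job. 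Thus $R_n \to 0$ uniformly on $[0,T]$ in probability and is trivially tight; combined with the Kolmogorov bound for $D_n + M_n$ (valid on $[M/n,T]$, together with the Lemma \ref{a priori2} estimate on $[0, M/n]$ to control the initial interval), this yields tightness of $\{g_n\}$.
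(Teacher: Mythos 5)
Your approach is a genuinely different route from the paper's. The paper reduces tightness of $\{g_n\}$ to the already-proved tightness of the \emph{finite}-dimensional family $\{\Psi^{(n)}_\cdot(x_j)\}_{j=0}^N$ (Lemma \ref{tightness}) by the discretization device of \cite{KS}, Proposition 2.5: partition $[a,b]$ into $N$ subintervals and use the monotonicity of $x \mapsto \Psi^{(n)}_t(x)$ to dominate the integrand on each subinterval by its grid-point value plus a telescoping correction of size $O((b-a)/N)$, the latter being absorbed by the a priori bound from Lemma \ref{a priori2}. This avoids any need for $x$-uniform moment estimates --- the monotonicity does the uniformization for free. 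Your argument instead applies the Kolmogorov--Chentsov criterion directly to $g_n$ via Fubini and Jensen. That is also viable, but it requires verifying (as you note) that the constant in Lemma \ref{martingale} and the convergence in Lemma \ref{fourth}(3) are uniform over $c\in[a,b]$; both hold, since the brackets $[g_{\kappa+c/n},\overline{g_{\kappa+c/n}}]$, the functions $h_{\kappa,\beta}$, and their gradients are bounded uniformly for $c$ in compacts. The paper's route is shorter because it reuses Lemma \ref{tightness} wholesale; yours is more direct but redoes several estimates with the extra $x$-uniformity.

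There is one concrete error. You state that the Kolmogorov bound ${\bf E}\bigl[|V^{(n)}_t(x)-V^{(n)}_s(x)|^4\bigr]\le C(t-s)^2$ holds for $s,t\in[M/n,T]$ uniformly in $n$. It does not: the proof of Lemma \ref{martingale} concludes with $\left(\int_{ns}^{nt}a(u)^2\,du\right)^2 \le \Bigl(\log\bigl(1+(t-s)/t_0\bigr)\Bigr)^2 \le C(t-s)^2$, and the last inequality produces a constant of order $t_0^{-2}$; taking $t_0=M/n$ makes that constant blow up like $n^2/M^2$. The cutoff must be a \emph{fixed} $M>0$, exactly as in the proof of Lemma \ref{tightness}: Kolmogorov controls the modulus of continuity on $[M,T]$ uniformly in $n$; the contribution of $[0,M]$ is handled via Lemma \ref{a priori2}, which gives $\limsup_{n}{\bf E}\bigl[\sup_{0\le t\le M}\Psi^{(n)}_t(b)\bigr]\le CM^{1/2}$; and one takes $M\downarrow 0$ after $\delta\downarrow 0$ and $n\to\infty$. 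With this correction your argument closes.
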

\begin{proof}
It is sufficient to show that following two equations.

(1)
$\lim_{A \to \infty}
\sup_n
{\bf P}\left(
| g_n(0) | \ge A 
\right) = 0$, 

(2)
For any
$\rho > 0$, 
\beq
\lim_{\delta \downarrow 0}
\limsup_{n \to \infty}
{\bf P}
\left(
\sup_{|t-s| < \delta}
| g_n(t) - g_n(s) | > \rho
\right)
= 0. 
\eeq
(1) follows 
from Lemma \ref{a priori2}.
By bounding 
$f$, 
the following equation implies 
(2).
\beq
\lim_{\delta \downarrow 0}
\limsup_{n \to \infty}
{\bf P}
\left[
\sup_{|t-s| < \delta}
\int_a^b
\left| 
\Psi_t^{(n)}(x) - \Psi_s^{(n)}(x) 
\right|dx > \rho
\right]
= 0.
\eeq
Here we borrow an argument in 
\cite{KS} Proposition 2.5 : 
We divide 
$[a,b]$
into 
$N$-intervals
$x_j = a + \frac {b-a}{N}x_j$, 
$j = 0, 1, \cdots, N-1$, 
and have
\begin{equation}
\int_a^b
| \Psi_t^{(n)}(x) - \Psi_s^{(n)}(x) |dx
=
\sum_{j=0}^{N-1}
\int_{x_j}^{x_{j+1}}
| \Psi_t^{(n)}(x) - \Psi_s^{(n)}(x) |dx.
\label{division}
\end{equation}
Since 
$\Psi^{(n)}_t(x)$
is increasing with respect to  
$x$, for 
$x \in [x_j, x_{j+1}]$
the integrand is bounded from above by 
\beq
&&
| \Psi_t^{(n)}(x)
-
\Psi_s^{(n)}(x)|
\\
& \le &
\Psi_t^{(n)}(x_{j+1})-\Psi_t^{(n)}(x_j)
+
| \Psi_t^{(n)}(x_{j})-\Psi_s^{(n)}(x_j) |
+
\Psi_s^{(n)}(x_{j+1})-\Psi_s^{(n)}(x_{j}).
\eeq
Substituting it into 
(\ref{division})
yields
\beq
J&:=&
\int_a^b
| \Psi_t^{(n)}(x) - \Psi_s^{(n)}(x) |dx
\\
& \le &
\frac 1N(b-a)
\left(
\Psi_t^{(n)}(b)-\Psi_t^{(n)}(a)
\right)
 + 
(t \leftrightarrow s)
\\
&& \qquad
+
\sum_{j=0}^{N}
\frac 1N(b-a)
| \Psi_t^{(n)}(x_{j})-\Psi_s^{(n)}(x_j) |
=: I + II.
\eeq
Thus we decompose the probability in question into two terms. 
\beq
{\bf P}\left(
\sup_{|t-s| < \delta} J > \rho
\right)
&\le&
{\bf P}\left(
\sup_{|t-s| < \delta} I > \rho/2
\right)
+
{\bf P}\left(
\sup_{|t-s| < \delta} II > \rho/2
\right)
\\
&=:& III + IV.
\eeq
The 
$III$-term
can be estimated by  
Lemma \ref{a priori2}.
\beq
III
& \le &
{\bf P}\left(
\frac {b-a}{N}
\left(
\Psi_t^{(n)}(b) - \Psi_t^{(n)}(a)
\right)
> \rho/4
\right)
+
(t \leftrightarrow s)
\\
& \le &
\frac {4}{\rho}
\cdot
\frac {b-a}{N}
{\bf E}\left[
\left(
\Psi_t^{(n)}(b) - \Psi_t^{(n)}(a)
\right)
\right]
+(t \leftrightarrow s)
\\
& \le &
2 \cdot
\frac {4}{\rho}
\cdot
\frac {b-a}{N}
{\bf E}\left[
\sup_{0 \le t \le T}
\left(
\Psi_t^{(n)}(b) 
\right)
\right]
\le 
\frac CN.
\eeq
Thus for any 
$\epsilon > 0$
we take 
$N$ 
large enough independently of   
$\delta$
to have
$
III < \frac {\epsilon}{2}.
$
For such fixed 
$N$, 
we have
\beq
IV
& \le &
\sum_{j=0}^N
{\bf P}\left(
\frac {b-a}{N}
\sup_{|t-s|<\delta}
| \Psi_t^{(n)}(x_j) - \Psi_s^{(n)}(x_j) |
> \frac {\rho}{2N}
\right)
\\
&=&
\sum_{j=0}^N
{\bf P}\left(
\sup_{|t-s|<\delta}
| \Psi_t^{(n)}(x_j) - \Psi_s^{(n)}(x_j) |
> \frac {\rho}{2(b-a)}
\right). 
\eeq
Since
$\{ \Psi^{(n)}_t(x_j) \}_{j=0}^{N}$
is tight by 
Lemma \ref{tightness}, we can let 
$IV < \epsilon/2$
by taking 
$n$ 
large and then taking 
$\delta > 0$
small. 
\QED
\end{proof}
We identify an element of 
${\cal M}$
with a non-decreasing and right continuous function 
$\omega$
on
$[a,b]$
satisfying 
$\omega(a)= 0$. 
Then 
$\omega_n$
converges to 
$\omega \in \Omega$
if and only if 
$\omega_n (x) \to \omega(x)$
at any point of continuity of 
$\omega$.\\
\begin{lemma}
Suppose 
$\left\{
\omega_{n}
\right\}_{n\geq1}\subset\mathcal{M}$ 
converges to 
$\omega$ 
of 
$\mathcal{M}$. 
Assume 
$\omega$ 
is continuous. 
Then the
convergence is uniform.
\end{lemma}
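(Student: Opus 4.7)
The plan is to mimic the classical Pólya argument: pointwise convergence of monotone functions to a \emph{continuous} monotone limit on a compact interval automatically upgrades to uniform convergence. The continuity hypothesis on $\omega$ is what makes this work, since it both upgrades convergence to every point of $[a,b]$ (not just continuity points of the limit) and allows uniform control of oscillations.

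First I would note that because $\omega$ is continuous on the compact interval $[a,b]$, it is uniformly continuous there, so given $\varepsilon>0$ I can pick $\delta>0$ such that $|\omega(x)-\omega(y)|<\varepsilon$ whenever $|x-y|<\delta$. I then choose a finite partition $a=x_{0}<x_{1}<\cdots<x_{N}=b$ with mesh smaller than $\delta$; every $x_{j}$ is a continuity point of $\omega$, so by hypothesis $\omega_{n}(x_{j})\to\omega(x_{j})$ for each $j$. Since there are only finitely many such points I can find $n_{0}$ with
\[
\max_{0\le j\le N}|\omega_{n}(x_{j})-\omega(x_{j})|<\varepsilon \qquad (n\ge n_{0}).
\]

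Next I would use monotonicity to sandwich an arbitrary $x\in[a,b]$: pick $j$ with $x\in[x_{j},x_{j+1}]$, so that for all $n\ge n_{0}$,
\[
\omega(x_{j})-\varepsilon\le\omega_{n}(x_{j})\le\omega_{n}(x)\le\omega_{n}(x_{j+1})\le\omega(x_{j+1})+\varepsilon.
\]
Combining with $\omega(x_{j})\le\omega(x)\le\omega(x_{j+1})$ and $\omega(x_{j+1})-\omega(x_{j})<\varepsilon$, I get $|\omega_{n}(x)-\omega(x)|<2\varepsilon$, uniformly in $x\in[a,b]$. Letting $\varepsilon\downarrow 0$ yields uniform convergence.

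There is no real obstacle here; the only subtle point is ensuring that the endpoints used in the finite partition are actually continuity points of $\omega$, which is immediate from the continuity hypothesis, so one may work with any partition of sufficiently fine mesh.
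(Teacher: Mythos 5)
Your proof is correct, but it takes a different route from the paper's. You give the classical direct Pólya argument: use uniform continuity of $\omega$ to choose a fine partition, get uniform closeness at the finitely many partition points, and then sandwich an arbitrary $x$ between two adjacent nodes using the monotonicity of $\omega_{n}$ and of $\omega$. The paper instead argues by contradiction: it extracts a sequence of ``bad'' points $t_{k}$ with $|\omega_{n_{k}}(t_{k})-\omega(t_{k})|\ge\epsilon_{0}$, passes to a subsequence with $t_{k}\uparrow t_{0}$, and then squeezes $\omega_{n_{k}}(t_{k})$ between $\omega_{n_{k}}(t_{l})$ and $\omega_{n_{k}}(t_{0})$ to force $\omega_{n_{k}}(t_{k})-\omega(t_{k})\to 0$. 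Both proofs lean on exactly the same mechanism (monotone sandwiching together with continuity of the limit), so neither is conceptually lighter; but your direct version is a bit more self-contained, since the paper's argument silently restricts to the case of an increasing subsequence $t_{k}\uparrow t_{0}$ and leaves the decreasing and eventually-constant cases to the reader, whereas the finite-partition argument covers all positions of $x$ at once. One small point worth being explicit about: you use that $\omega_{n}(x_{j})\to\omega(x_{j})$ at every partition point, including $b$; this is fine here because the convergence in $\mathcal{M}$ is characterized in the paper as pointwise convergence at every continuity point of $\omega$, and continuity of $\omega$ on $[a,b]$ makes every point (endpoints included) a continuity point.
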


\begin{proof}
Assume 
$\left\{
\omega_{n}
\right\}_{n\geq1}$ 
does not converge to 
$\omega$
uniformly. 
Then there exists a sequence 
$n_{1}<n_{2}<\cdots$, 
$\left\{
t_{k}
\right\}_{k\geq1}$ 
and a positive number 
$\epsilon_{0}$ 
such that
\begin{equation}
\left\vert 
\omega_{n_{k}
}\left(  t_{k}\right)  
-
\omega\left(  t_{k}\right)
\right\vert \geq\epsilon_{0} \label{6}
\end{equation}
is valid for any 
$k=1,2,\cdots.$ 
We can assume 
$t_{k}\rightarrow t_{0}
\in\left[a,b\right]$ 
keeping 
$t_{1}<t_{2}<\cdots<t_{0}.$ 
Then
\[
\omega_{n_{k}}\left(  t_{l}\right)  -\omega\left(  t_{k}\right)  \leq
\omega_{n_{k}}\left(  t_{k}\right)  -\omega\left(  t_{k}\right)  \leq
\omega_{n_{k}}\left(  t_{0}\right)  -\omega\left(  t_{k}\right)
\]
for any 
$l<k,$ 
hence letting 
$k\rightarrow\infty,$ 
we have
\beq
\omega\left(  t_{l}\right)  -\omega\left(  t_{0}\right)  
&\leq&
\liminf_{k\rightarrow\infty}\left(  \omega_{n_{k}}\left(  t_{k}\right)
-\omega\left(  t_{k}\right)  \right)  
\\
&\leq&
\limsup_{k\rightarrow\infty}
\left(  \omega_{n_{k}}\left(t_{k}\right)  
-\omega\left(  t_{k}\right)
\right)  
\leq
\omega\left(  t_{0}\right)  -\omega\left(  t_{0}\right)  =0.
\eeq
Consequently, letting 
$l\rightarrow\infty,$ 
we see
$
\lim_{k\rightarrow\infty}\left(  \omega_{n_{k}}\left(  t_{k}\right)
-\omega\left(  t_{k}\right)  \right)  =0,
$
which contradicts (\ref{6}).
\end{proof}

{\it Proof of Proposition \ref{coupling}}\\
By
Lemma \ref{compactness}, 
the sequence of increasing function-valued process
$\{ \Psi_t^{(n)}(\cdot)\}_n$
is tight. 
Hence 
$(\Psi_t^{(n_k)}, (2\theta_{n_k t})_{2 \pi {\bf Z}})
\stackrel{d}{\to}
(\Psi_t, \phi_t)$
for some subsequence 
$\{ n_k \}$. 
By Skorohod's theorem, 
we can suppose 
$(\Psi_t^{(n_k)}, (2\theta_{n_k t})_{2 \pi {\bf Z}}) \stackrel{a.s.}{\to} (\Psi_t, \phi_t)$.
Hence in particular we fix any 
$t > 0$ 
and obtain
\[
\rho(\Psi_t^{(n_k)}, \Psi_t)
=
\sum_{j \ge 1} \frac {1}{2^j}
\left(
\left|
\int_a^b f_j (x) d (\Psi_t^{(n_k)}(x) -  \Psi_t(x))
\right|
\wedge 1
\right)
\stackrel{n \to \infty}{\to} 0, 
\quad
a.s.
\]
By 
Lemma \ref{parameter-continuity}
$\Psi_t$
is continuous and increasing. 
Hence for a.s., 
$\Psi_t^{(n_k)}(x) \to \Psi_t(x)$
holds for any 
$x$. 
Moreover by 
Lemma \ref{inverse}
$(\Psi_t^{(n_k)})^{-1}(x) \stackrel{a.s.}{\to} \Psi_t^{-1}(x)$.
Therefore 
Proposition \ref{coupling}
is proved. 
\QED

\vspace*{1em}
\noindent {\bf Acknowledgement }
This work is partially supported by 
JSPS grants Kiban-C no.22540163(S.K.) and no.22540140(F.N.).

%
\small

\end{document}